\newtheorem{thm}{Theorem}[section]  
\newcommand{\cL}{\mathcal{L}}
\newcommand{\cS}{\mathcal{S}}
\newcommand{\MSE}{\mbox{MSE}}
\newcommand{\NRMSE}{\mbox{\textnormal{NRMSE}} }
\newcommand{\vol}{\mbox{\textnormal{vol}}}
\newcommand{\comment}[1]{}
\newcommand{\techreport}[2]{#2}
\newcommand{\fm}[1]{#1}
\newcommand{\thetitle}{Characterizing Directed and Undirected Networks via Multidimensional Walks with Jumps}
\newlength{\plotwidth}
\newlength{\plotheight}
\newcommand{\btheta}{{\boldsymbol \theta}}
\newcommand{\bbeta}{{\boldsymbol \beta}}
\newcommand{\bbS}{\mathbb{S}}
\begin{document}

\markboth{F. Murai et al.}{\thetitle}

\title{\thetitle}
\author{FABRICIO MURAI
\affil{Universidade Federal de Minas Gerais}
BRUNO RIBEIRO
\affil{Purdue University}
DON TOWSLEY
\affil{University of Massachusetts Amherst}
PINGHUI WANG
\affil{Xi'an Jiaotong University}}

\begin{abstract}
Estimating distributions of node characteristics (labels) such as number of connections or citizenship of users in a social network via edge and node sampling is a vital part of the study of complex networks. Due to its low cost, sampling via a random walk (RW) has been proposed as an attractive solution to this task. Most RW methods assume either that the network is undirected or that walkers can traverse edges regardless of their direction. Some RW methods have been designed for directed networks where edges coming into a node are not directly observable. In this work, we propose Directed Unbiased Frontier Sampling (DUFS), a sampling method based on a large number of coordinated walkers, each starting from a node chosen uniformly at random. It is applicable to directed networks  with invisible incoming edges because it constructs, in real-time, an undirected graph consistent with the walkers trajectories, and due to the use of random jumps which prevent walkers from being trapped.  DUFS generalizes previous RW methods and is suited for undirected networks and to directed networks regardless of in-edges visibility. We also propose an improved estimator of node label distributions that combines information from the initial walker locations with subsequent RW observations. We evaluate DUFS, compare it to other RW methods, investigate the impact of its parameters on estimation accuracy and provide practical guidelines for choosing them. In estimating out-degree distributions, DUFS yields significantly better estimates of the head of the distribution than other methods, while matching or exceeding estimation accuracy of the tail.  Last, we show that DUFS outperforms uniform node sampling when estimating distributions of node labels of the top 10\% largest degree nodes, even when sampling a node uniformly has the same cost as RW steps.

\end{abstract}

\category{G.3}{Mathematics of Computing}{Probability and Statistics}

\terms{Algorithms, Measurement}

\keywords{complex networks, directed networks, graph sampling, random walks}

\acmformat{Fabricio Murai, Bruno Ribeiro, Don Towsley and Pinghui Wang, 2017. \thetitle.}

\begin{bottomstuff}
This work is supported by the Army Research Laboratory
under Cooperative Agreement Number W911NF-09-2-0053
and the CNPq,
National Council for Scientific and Technological Development -
Brazil.

Author's addresses: F. Murai, Universidade Federal de Minas Gerais; B. Ribeiro, Purdue University; D. Towsley, College of Information and Computer Sciences,
University of Massachusetts Amherst; Pinghui Wang,
Xi'an Jiaotong University.
\end{bottomstuff}

\maketitle

 

\section{Introduction}

A number of studies
\cite{SurveyMeasuresGraphs,MobileFriends,Leskovec2006,LeskovecCommunity,Mislove,MySpace,WillingerRDS,RDSprob,Gjoka2010,KurantJSAC2011,ChierichettiWWW2016}
are dedicated to the characterization of complex networks. Examples of networks of interest include the Internet,
the Web, social, business, and biological networks. Characterizing a network
consists of computing or estimating a set of statistics that describe the
network. In this work we model a network as a directed or
undirected graph with labeled vertices. A label can be, for instance, the
degree of a node or, in a social network setting, someone's hometown. Label
statistics (e.g., average, distribution) are often used to characterize a
network.

Characterizing a network with respect to its labels requires querying vertices
and/or edges; associated with each query is a resource cost (time, bandwidth,
money).  For example, information about web pages must be obtained by querying
web servers subject to a maximum query rate.  Characterizing a large
network by querying the entire network is often too costly. Even if the network is
stored on disk it may constitute several terabytes of data.  As a result,
researchers have turned their attention to the
characterization of networks based on incomplete (sampled) data.

Simple strategies such as uniform node and uniform edge sampling possess
desirable statistical properties: the former yields unbiased samples of the
population and the bias introduced by the latter is easily removed.  However,
these strategies are often rendered unfeasible because they require either a directory
containing the list of all node (edge) ids, or an API that allows uniform
sampling from the node (edge) space.  Even when the space of possible node
(edge) ids is known, its occupancy is usually so low that querying randomly
generated ids is expensive.  An alternate, cheaper, way to sample a network is
via a random walk (RW). A RW samples a network by moving a particle (walker)
from a node to a neighboring node. It is applicable to any network where one
can query the edges connected to a given node. Furthermore, RWs share some of
the desirable properties of uniform edge sampling (i.e., easy bias removal,
accurate estimation of characteristics such as the tail of the degree
distribution).

On one hand, a great deal of research has focused on the design of sampling methods
for {\em undirected networks} using RWs \cite{Heckathorn97,WillingerRDS}.
Ribeiro and Towsley proposed Frontier Sampling (FS), a multidimensional random
walk that uses $n$ {\em coupled} random walkers. This method
yields more accurate estimates than the standard RW and also outperforms the use
of $n$ independent walkers. In the presence of disconnected or loosely
connected components, FS is even better suited than the standard RW and
independent RWs to sample the tail of the degree distribution of the graph. On
the other hand, few works have focused on the development of tools for characterizing
{\em directed networks} in the wild. A network is said to be directed when edges
are not necessarily reciprocated. Characterizing directed networks through
crawling becomes especially challenging when only outgoing edges from a node are visible
(incoming edges are hidden): unless all vertices have a directed path to all
other vertices, a walker will eventually be restricted to a (strongly
connected) component of the graph. Furthermore, a standard RW incurs a bias that
can only be removed by conditioning on the entire graph structure. 
\cite{RibeiroINFOCOM2012} addressed these issues by proposing Directed
Unbiased Random Walk (DURW), a sampling technique that builds a virtual
undirected graph on-the-fly and performs
degree-proportional jumps to obtain asymptotically unbiased estimates of the
distribution of node labels on a directed graph.

  \begin{figure}[]
  \centering
     \includegraphics[height=0.35\textwidth]{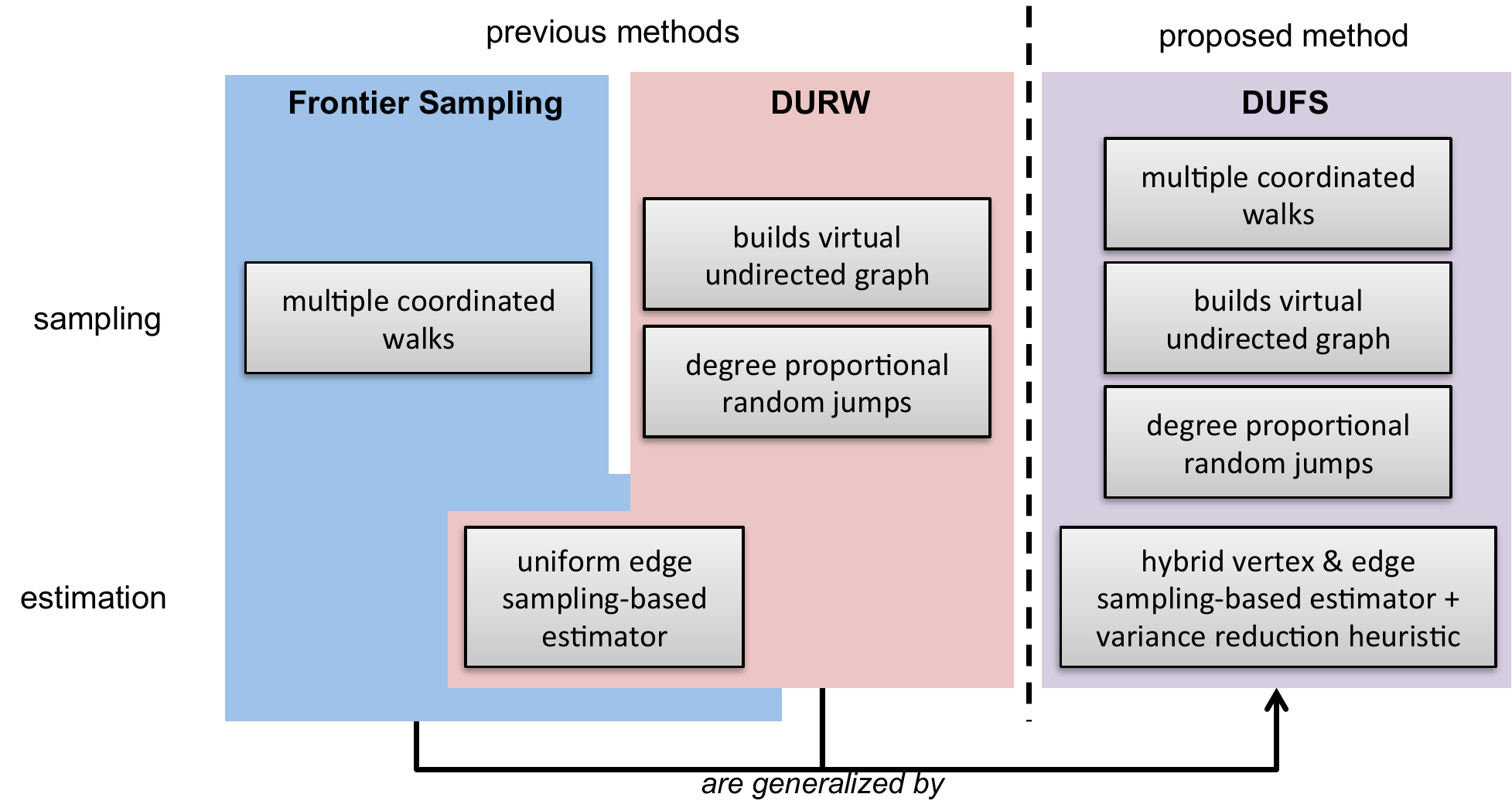}
     \caption{Proposed method (DUFS) generalizes Frontier Sampling and DURW.}
    \label{fig:sketch}
 \end{figure}

In this work\footnote{Parts of this work are based on previous papers from the authors:
\cite{TechReport} and~\cite{RibeiroINFOCOM2012}.}, we propose
Directed Unbiased Frontier Sampling (DUFS), a method that generalizes
the FS and the DURW algorithms (see Figure~\ref{fig:sketch}).
Building on ideas in \cite{RibeiroINFOCOM2012}, we extend FS
to allow the characterization of networks regardless of whether they
are undirected, directed with observable incoming edges, or directed with
unobservable incoming edges. From another perspective, we adapt DURW to use multiple coordinated walkers.
DUFS matches or exceeds the accuracy of
FS and DURW\footnote{The software and all results presented in this work are available
at \url{http://bitbucket.com/after-acceptance}. 
}, as illustrated in Figure~\ref{fig:showcase}. Method parameters ($w$ and $b$), simulation setup, datasets and the error metric -- NRMSE (normalized root mean square error) -- are described in Section~\ref{sec:parameters}.

\begin{figure}[]
\centering
    \subfloat[soc-Slashdot0902\label{fig:soc-Slashdot0902-mlefs-jnt}]{%
      \includegraphics[width=0.45\textwidth]{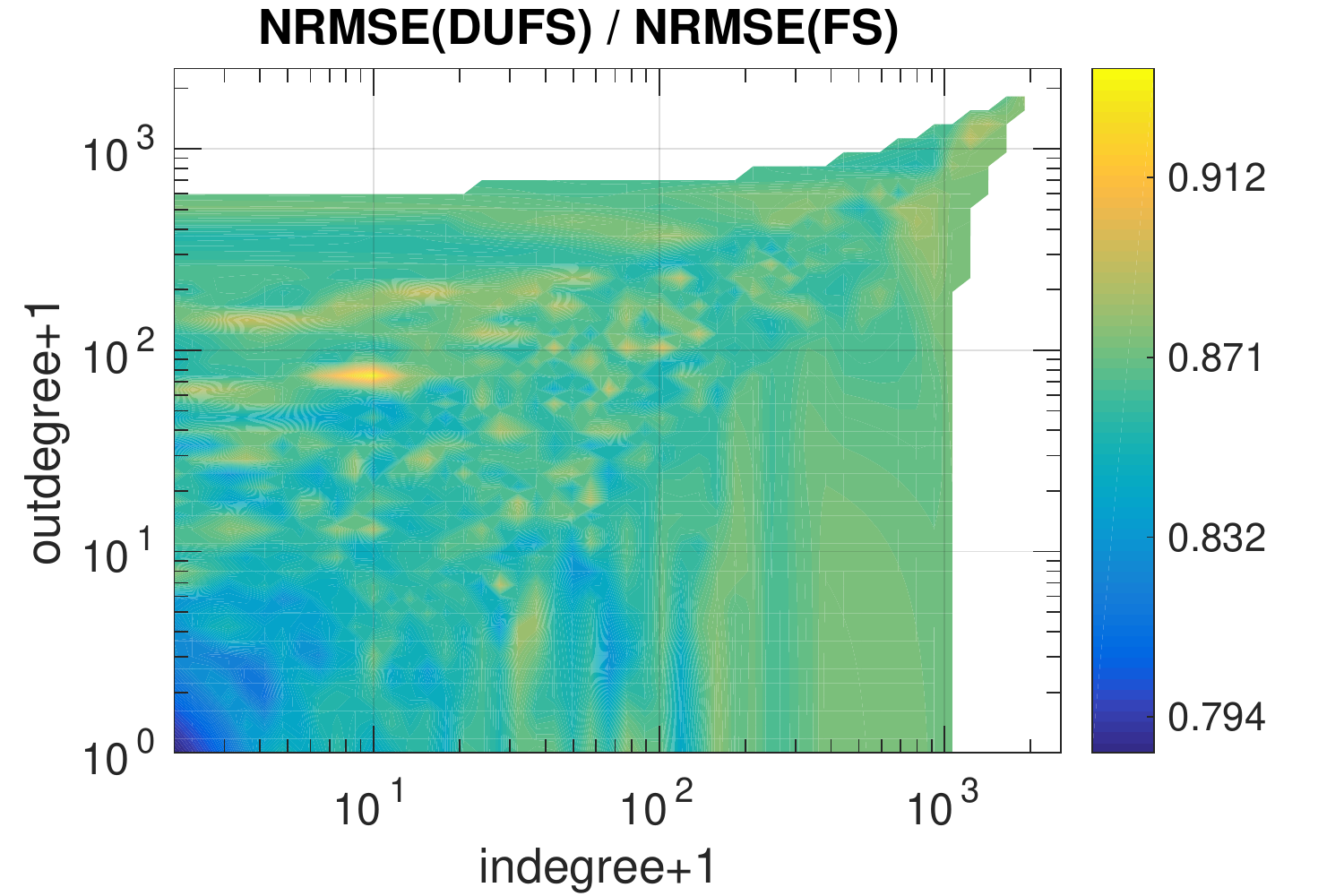}
    }
    \subfloat[livejournal-links\label{fig:livejournal-links-v0}]{%
      \includegraphics[trim={0 0.4cm 0 0},clip,width=0.42\textwidth]{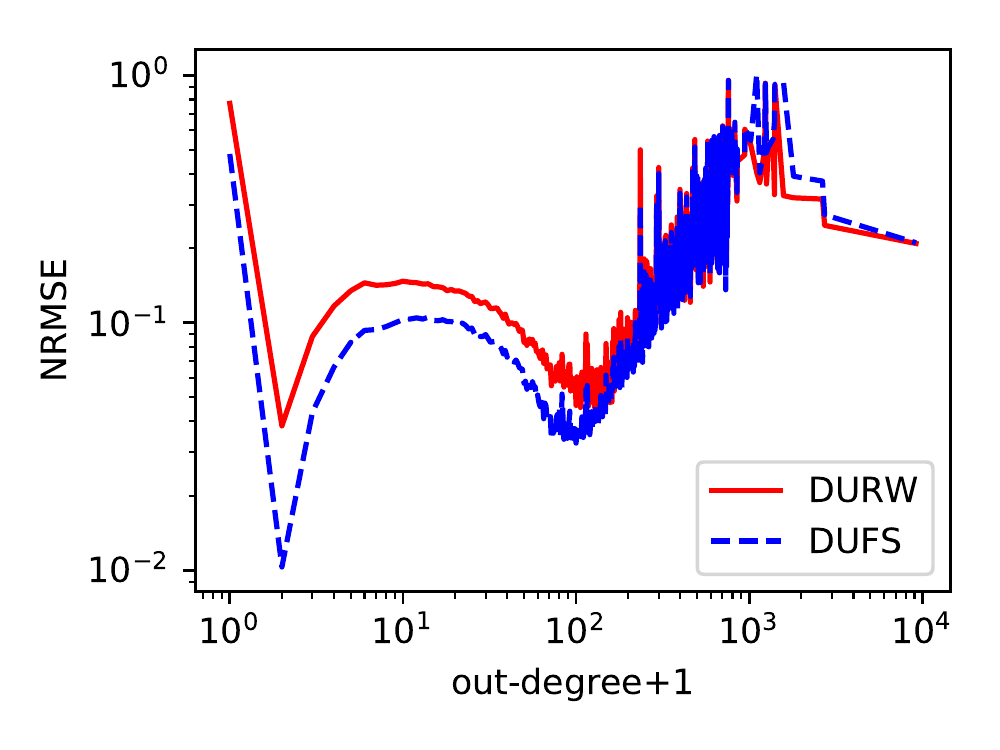}
    }
    \caption{Comparison between proposed method (DUFS) and previous state-of-the-art
    respectively for visible and for invisible incoming edges scenarios; (a) NRMSE ratios between
    DUFS $(w=0.1,b=10)$ and FS $(b=10)$ of the estimated joint in- and out-degree distribution
    on the soc-Slashdot0902 dataset; (b) NRMSEs associated with DUFS and DURW of the estimated
    out-degree distribution on the livejournal-links dataset. 
    }
    \label{fig:showcase}
 \end{figure}

\paragraph{Contributions} Our main contributions are as follows:
\begin{enumerate}

  \item {\em Directed Unbiased Frontier Sampling (DUFS)}: we propose a
  new algorithm based on multiple coordinated random walks that extends Frontier Sampling (FS) to directed networks.
  DUFS extends DURW to multiple random walks. 

    \item {\em A more accurate estimator for node label distribution}: when the number
    of walkers is a large fraction of the number of random walk steps (e.g., 10\%),
    a considerable amount of information is thrown out by
    not accounting for the walkers initial locations as observations. We introduce a new estimator
    that combines these observations with those made during the walks to produce
    better estimates.

   \item {\em Practical recommendations}: we investigate the impact of
   the number of walkers and the probability of jumping to an uniformly chosen node (controlled via a parameter called
   random jump weight) on DUFS estimation error, given a fixed budget. By increasing the number of walkers the sequence
   of sampled edges approaches the uniform distribution faster,
   but this also increases the fraction of the budget spent to place the walkers in their initial locations.
   Moreover, increasing the random jump weight favors sampling node labels with large probability masses,
   which translates into more accurate estimates for these labels, but worse estimates for those in the tail.   
   We study these trade-offs through simulation and propose guidelines for choosing DUFS parameters.

  \item {\em Comprehensive evaluation}: we compare DUFS
  to other random walk-based methods  applied to directed networks w.r.t.\ estimation
  errors, both
  when
  incoming edges are directly observable and when they are not.
In the first scenario, in addition to some graph properties evaluated in
    previous works, we evaluate DUFS performance on estimating joint
    in- and out-degree distributions, and on estimating distribution of
    group memberships among the 10\% largest degree nodes.
    
    \item {\em Theoretical analysis:} we derive expressions for the
    normalized mean squared error associated with uniform node
    and uniform edge sampling on power law networks and show
    that in both cases error behaves asymptotically as a power law function of
    the observed degree. This helps explain our evaluation results.

\end{enumerate}

\paragraph{Outline} Definitions are presented in Section~\ref{sec:definitions}.
In Section~\ref{sec:background}, we review FS and DURW methods.
 In Section~\ref{sec:dufs}, we propose Directed Unbiased Frontier Sampling (DUFS)
 (along with some estimators), which generalizes previous methods. We investigate the impact of DUFS parameters on
 estimation accuracy of degree distributions and node label distributions respectively in Sections~\ref{sec:results} and \ref{sec:attributes}, providing practical guidelines on how to set them. A comparison to other random walk techniques is also provided. Section~\ref{sec:discussion} discusses 
the performance of DUFS when the uniform node sampling mechanism is faulty. We present
 some related work and present our conclusions in Sections~\ref{sec:related}~and~\ref{sec:conclusions}, respectively.

 

\section{Terminology Setting}\label{sec:definitions}
                
In what follows we present terminology used throughout the paper. We also present two scenarios considered in our work.
 Let $G_d = (V,E_d)$ be a labeled directed graph representing the network graph, where $V$ is a set of vertices and $E_d$ is a set of ordered pairs of vertices $(u,v)$ representing a connection from $u$ to $v$ (a.k.a.\ edges). We refer to an edge $(u,v)$ as an {\em in-edge} with
 respect to $v$ and an {\em out-edge} with respect to $u$. 
 The {\em in-degree} and {\em out-degree} of a node $u$ in $G_d$ are the number of distinct edges
 respectively into and out of $u$. We assume that each node in $G_d$ has at least one edge (either an in-edge or an out-edge).
 Some networks can be modeled as undirected graphs.
 In this case, $G_d$ is a symmetric directed graph, i.e., $(u,v) \in E_d$ iff $(v,u) \in E_d$. 

 Let $\cL_v$ and $\cL_e$ be finite (possibly empty) sets of node labels and edge labels, respectively.
 Each edge $(u,v) \in E_d$ is associated with a set of labels $\cL_e(u,v) \subseteq \cL_e$.
 For instance, one label $\ell \in \cL_e(u,v)$ could be the nature of the relationship between
 two individuals (e.g., family, work, school) in a social network represented by nodes $u$ and $v$.
 Similarly, we can associate a set of labels to each node, $\cL_v(v) \subseteq \cL_v,\, \forall v \in V$.

\subsection*{Input scenarios}
When performing a random walk, we assume that a walker retrieves the out-edges of node where it resides by performing a query (e.g., followers list on Twitter)  
and that vertices are distinguishable. We define two scenarios depending on whether the walker can also retrieve in-edges.

In the {\em first scenario}, both out- and in-edges can be retrieved and it is possible to move
the walker over any edge regardless of the edge direction (if the edge is $(u,v) \in E_d$ a walker can move from $u$ to $v$ and vice versa).
In this case, the walker can be seen as
moving over $G=(V,E)$, an undirected version of  $G_d$, i.e., $E=\{(u,v) : (u,v) \in E_d \lor (v,u) \in E_d\}$.
Define $\deg(v) = | \{(u,v) : (u,v) \in E\} |$. Let $\vol(S) = \sum_{\forall v \in S} \deg(v), \, \forall S \subseteq V$, denote the volume of the
set of vertices in $S \subseteq V$.


In the {\em second scenario}, only out-edges are directly observable and we can build on-the-fly an undirected graph $G_u$ based on
the out-edges that have been sampled. Note that $G_u$ is not an undirected version of $G_d$ as some of the
in-edges of a node may not have been observed. By moving the
walker over $G_u$ -- possibly traversing edges in $G_d$ in
the opposite direction -- we can compute its stationary behavior
and thus, remove any bias by accounting for the probability that each observation appears in the sample.

While this has been mostly overlooked by other works, we emphasize that,
in either scenario, it is useful to keep track of some variant of the observed graph during
the sampling process. Storing information about visited nodes in memory
saves resources that would be consumed to query those nodes in subsequent visits -- i.e., revisiting a node has no cost. The specific
variant of the observed graph to be stored will be described in the context of two random
walk-based methods in the following section.




 

\section{Background}\label{sec:background}

The method proposed in this paper generalizes two representative random-walk based methods
designed for each of the respective scenarios described in Section~\ref{sec:definitions}.
Therefore, we dedicate this section to briefly reviewing these methods.
First,
we describe the Frontier Sampling algorithm proposed in \cite{TechReport},
an $n$-dimensional random walk that benefits from starting its walkers at uniformly sampled vertices.
This technique can be applied to undirected graphs and to directed graphs provided that edges coming into a node are observable.
Then, we describe the Directed Unbiased Random Walk algorithm proposed in \cite{RibeiroINFOCOM2012},
that adapts a single random walk to a directed graph when incoming edges are not directly observable.
The goal of these methods is to obtain samples from a graph, which are then used to infer graph characteristics via an estimator. 
An {\em estimator} is a function that takes a sequence of observations (sampled data) as input and outputs an estimate of an unknown population parameter (graph characteristic).




\subsection{Frontier Sampling: a multidimensional random walk for undirected networks}

In essence, {\em Frontier Sampling} (FS) is a random walk-based algorithm for sampling and estimating characteristics of an
undirected graph.
FS performs $n$ {\em coordinated} random walks on the graph.
One of the advantages of using multiple walkers is that they
can cover multiple connected components (when they exist), while a single walker is restricted to one component in the absence
of a random jump or restart mechanism.
By coordinating multiple random walkers, FS is able to sample edges uniformly at random in steady state
regardless of how the walkers are initially placed.

\begin{algorithm}[t]
\SetKwInOut{Input}{Input}
\Input{sampling budget $B$, budget per walker $b$, cost of uniform node sampling $c$}
$n \gets  B/(c+b) $ \;
Initialize $L=(v_1,\dots,v_N)$ with $n$ randomly chosen vertices (uniformly)\;
$i \gets N \times c$ \{$i$ is the used portion of the budget\}\;
\While{$i < B$}{
 Select $u \in L$ with probability $\deg(u)/\sum_{\forall v \in L} \deg(v)$ \; \label{FSloop}
 Select an edge $(u,v)$, uniformly at random\; \label{FSselect}
 Replace $u$ by $v$ in $L$ and add $(u,v)$ to sequence of sampled edges\;
 $i \leftarrow i + 1$ \{can be skipped if node was previously sampled\} \;
}
\caption{Frontier Sampling (FS)}\label{alg:FS}
\end{algorithm}
Algorithm~\ref{alg:FS} describes FS. There are three parameters: the sampling budget $B$,  the initial cost of placing a walker $c \geq 1$ and the average number of nodes  $b$ sampled by a walker.
The initial walker locations are chosen uniformly at random over the node set (line 2).
Note that the number of walkers is taken to be $n = B/(c+b) $,
that the cost of a random walk step is one (except for previously sampled nodes) and that the cost of initially
placing a walker, $c$, can be greater than one because uniform node sampling
is often expensive.
FS keeps a list $L$ of $n$ vertices representing the locations of the $n$ walkers.
At each step, a walker is chosen from $L$ in proportion to the degree of the node where it is currently located (line 5).
The walker then moves from $u$ to an adjacent node $v$ (lines 6 and 7).

Frontier sampling is equivalent to the sampling process of a single random walker over the $n$-th Cartesian power of $G$.
 For this reason, Frontier Sampling can be thought of as an $n$-dimensional random walk (see \cite[Lemma 5.1]{TechReport}).



Using FS samples to estimate node label distributions is simple
when the input corresponds to the first scenario described in Section~\ref{sec:definitions}.
The probability of sampling a given node is proportional to its undirected degree in $G$.
Hence, each sample must be weighted inversely proportional to the
respective node's undirected degree. Storing the undirected version of the observed graph
along with labels associated with sampled nodes allows the sampler to
avoid having to pay the cost of revisiting a node. 

Conversely, when incoming edges are not observed, Frontier Sampling
can still be adapted to remove bias. We
present this method in Section~\ref{sec:dufs}.

\subsection{Directed Unbiased Random Walk: a random walk adapted for directed networks with unobservable in-edges}\label{sec:durw}

 The presence of hidden incoming edges but observable outgoing edges makes characterizing large directed graphs through crawling challenging.
 Edge $(u,v)$ is a hidden incoming edge of node $v$ if $(u,v)$ can only be observed from node $u$.
 For instance, in Wikipedia we cannot observe the edge (``Columbia Records'', ``Thomas Edison'') from Thomas Edison's wiki entry (but this edge is observable if we access the Columbia Records's wiki entry).
 
 These hidden incoming edges make it impossible to remove any bias incurred by walking on the observed graph, unless we crawl the entire graph.
 Moreover, there may not even be a directed path from a given node to all other nodes.
 Graphs with hidden outgoing edges but observable incoming edges exhibit essentially the same problem.
 In \cite{RibeiroINFOCOM2012}, we proposed the Directed Unbiased Random Walk (DURW) algorithm,
 which obtains asymptotically unbiased estimates of node label densities on a directed graph  with
unobservable incoming edges.
 Our random walk algorithm follows two main principles to achieve unbiased samples and reduce variance:
\begin{itemize}

 \item {\em Backward edge traversals}: in real-time we construct an undirected graph $G_u$ using nodes that are sampled by the walker on the directed graph $G_d$. The role of the undirected graph is to guarantee that, at the end of the sampling process, we can approximate the probability of sampling a node, even though in-edges are not observed. The random walk proceeds in such a way that its trajectory on $G_d$ is consistent with that of a random walk on $G_u$. The walker is allowed to traverse some of the edges in $G_d$ in a reverse direction. However, we prevent some
 of the observed edges to be traversed in the reverse direction by not including them in $G_u$. More precisely, once a node $z$ is visited at
 the $i$-th step, no in-edges to $z$ observed at step $j > i$ (by visiting nodes $s$ such that $(s,z) \in E_d$) are added to $G_u$. This is an important feature
 to reduce the random walk transient and thus, reduce estimation errors.  

 \item {\em Degree-proportional jumps}: the walker makes a limited number of random jumps to guarantee that different parts of the directed graph are explored.
  In DURW, the probability of randomly jumping out of a node $v$, $\forall v \in V$, is $w/(w + \deg(v))$, $w > 0$.
The steady state probability of visiting a node $v$ on $G_u$ is $(w + \deg(v))/(\vol(V)+ w \vert V \vert)$.
Similar to the cost of placing a FS walker through uniform node sampling, we assume that each random jump incurs cost $c \geq 1$.  

 \end{itemize}


\subsubsection*{The DURW algorithm} \label{sec:DURWdescription}
 DURW is a random walk over a {\em weighted undirected connected graph} $G_u = (V,E_u)$, which is built on-the-fly.
 We build an undirected graph using the underlying directed graph $G_d$ and the ability to perform random jumps.
 Let $G^{(i)} = (V,E^{(i)})$ denote the undirected graph constructed by DURW at step $i$, where $V$ is the node set and $E^{(i)}$ is the edge set. 
 In what follows we describe the construction of $G^{(i)}$ in Algorithm~\ref{alg:DURW}, since this is one of the building blocks of the proposed algorithm, DUFS.

\begin{algorithm}[t]
\SetKwInOut{Input}{Input}
\Input{sampling budget $B$, random jump weight $w$, cost of uniform node sampling $c$}
Select $s \in V$ uniformly at random \{$s = s_1$\} \;
Initialize $\mathcal{S} = \{s\}$ and $E = \mathcal{E}(s)$ \;
$i \leftarrow c$  \{$i$ is the used portion of the budget\}\;
\While{$i < B$}{
$p \sim \textrm{Uniform}(0,1)$ \;
\eIf{$p \leq w/(w+\textrm{deg}(s))$}{
 Select $s$ uniformly at random from $V$ \{random jump\} \;
  $i \leftarrow i+c$ \; \label{lin:inc1}
}{
 Select $s$ uniformly at random from $\{v: (s,v) \in E\}$ \{random walk step\} \;
 $i \leftarrow i+1$ \; \label{lin:inc2}
}
 \If{$s \notin \mathcal{S}$}{
 $\mathcal{S} \leftarrow \mathcal{S} \cup \{s\} $ \;
 $E \leftarrow E \cup \{(s,v) \in \mathcal{E}(s): v \not \in \mathcal{S}\}$
 }
}
\caption{Construction of undirected graph (common to DURW and DUFS)}\label{alg:DURW}
\end{algorithm}

Let $\mathcal{E}(v)$ denote the set of out-edges from a node $v$ in $G_d$.
  Let $\mathcal{S}^{(i)}=\{s_1,\ldots,s_i\}$ be the set of nodes from $V$ sampled by the random walk up to step $i$,
 where $s_j$ denotes the node on which the walker resides at step $j$. Since $V$ is not known, we track $G^{(i)}$
 using variables $\mathcal{S} = \mathcal{S}^{(i)}$ and $E = E^{(i)}$.
 The walker starts at node $s_1 \in V$ (line 1).
 We initialize $G^{(1)} = (V,E^{(1)})$, where $E^{(1)} = \mathcal{E}(s_1)$ (line 2).
The next node, $s_{i+1}$, is selected uniformly at random from $V$ with probability $w/(w+\textrm{deg}(s_i))$ (lines 6 to 8), where $\textrm{deg}(s_i)$ is the degree of $s_i$ in $G^{(i)}$. With probability $1 - w/(w+\textrm{deg}(s_i))$, node $s_{i+1}$ is selected by performing a random walk step from $s_i$, i.e.\ by selecting a node adjacent to $s_i$ in $E^{(i)}$ uniformly at random (lines 9 to 12). When node $s_{i+1}$ is visited for the first time, it is necessary to set $\mathcal{S}^{(i+1)}$ to $\mathcal{S}^{(i)} \cup \{s_{i+1}\}$ and $E^{(i+1)}$ to $E^{(i)} \cup \{(s_i,v) \in \mathcal{E}(s_i): v \not \in \mathcal{S}^{(i)}\}$ (lines 13 to 16). By restricting the set of new edges to  $\{(s_i,v) \in \mathcal{E}(s_i): v \not \in \mathcal{S}^{(i)}\}$ instead of all edges visible from $s_i$ (i.e., $\mathcal{E}(s_i)$), we comply with the requirement that once a node $z$, $\forall z \in V$, is visited by the RW, no edge can be added to $G_u$ with $z$ as an endpoint.

 
In order to estimate node label distributions from DURW observations, we weight samples in proportion to the inverse
 probability that the corresponding vertices are visited by a random walk in $G_u$, in steady state.
 Storing labels and edges associated with nodes in $\mathcal{S}^{(i)}$ saves the cost of querying repeated nodes. Such savings could be reflected in  Algorithm~\ref{alg:DURW} by conditioning the increase in $i$ (lines \ref{lin:inc1} and \ref{lin:inc2}) on $s \notin \mathcal{S}$.

\section{Generalizing FS and DURW: a new method applicable regardless of in-edge visibility} \label{sec:dufs}

This section is divided into two parts.
In Section~\ref{sec:proposed} we propose Directed Unbiased Frontier Sampling (DUFS), which generalizes FS to allow estimation on directed graphs with unobservable in-edges
(second scenario described in Section~\ref{sec:definitions}). DUFS also generalizes DURW: the latter
is a special case of DUFS where the number of walkers is one.
Next, in Section~\ref{sec:estimators}, we describe two ways to estimate node label
distributions using DUFS. The first uses only on the observations collected
during the walks. The second estimator we
leverages observations obtained from the initial walker locations
in addition to observations obtained during the walks.

\subsection{Directed Unbiased Frontier Sampling}\label{sec:proposed}

Like FS, Directed Unbiased Frontier Sampling (DUFS) samples a network through $n$ coordinated walks.
At each step, it selects a walker in proportion to the degree of the node where it
currently resides. Similar to the Directed Unbiased Random Walk, it
constructs an undirected graph in real-time that allows {\em backward edge traversals}.
Denote by $G^{(i)} = (V,E^{(i)})$ the undirected graph constructed by
DUFS at step $i$. DUFS does not include edges in $G^{(i)}$ that would cause walkers to
have a view of the graph inconsistent with the view at a previous point in time. In other words,
when node $u$ is visited for the first time at step $i$, $u$ is inserted in $G^{(i)}$ along
with all edges $(u,v) \in E_d$ such that $v$ has not been sampled.
Thus, the degree of $u$ is fixed in $G^{(j)}$, for all $ j \geq i$.
\fm{Alternatively, letting the degree of $u$ change at a given point would require us
to discard the the entire sample up to that point, otherwise the resulting estimator would not be consistent.
In fact, even that approach would not yield a consistent estimator for an infinite
power law graph: node degrees would never stop changing.}


It may seem that there is no need to include {\em degree-proportional jumps} to visit different graph components
when a large number of walkers are initially spread throughout the graph (e.g., on nodes chosen uniformly).
However, including degree-proportional jumps in DUFS is extremely beneficial
because it prevents walkers from being trapped when initially located on vertices whose out-degree is zero or in components with no outgoing edges.
More generally, it allows walkers to move from small volume to large volume components and, hence,
obtain more samples among large degree nodes.

Algorithm~\ref{alg:dufs} describes DUFS. In addition to FS' three parameters, it takes a random jump weight $w$ as input. The number of
walkers and their initial locations are chosen as in FS (lines 1-3). In the extreme case where $b=0$, DUFS degenerates to uniform node sampling.
When the underlying graph is symmetric and the jump weight is $w=0$, it becomes FS. When in-edges are invisible and the number of walkers is 1, DUFS degenerates to DURW. We initialize $\mathcal{S} = L$ and $E^{(i)} = \cup_{s \in L} \mathcal{E}(s)$ (line 4).
Unlike in FS, a walker is chosen from $L$ in proportion to the sum
of the {\em random jump weight} $w$ and the degree of node where it is currently located {\em based on} $E^{(i)}$ (line 6).
Similar to DURW, the next node is selected based on either a random jump or on following an edge (lines 7-14).
Last, the undirected graph is updated (lines 15-18) and so is set $L$ (line 19).


\begin{algorithm}[t]
\SetKwInOut{Input}{Input}
\Input{sampling budget $B$, budget per walker $b$, cost of uniform sampling $c$, jump weight $w$}
$n \gets  B/(c+b) $ \{$n$ is the number of walkers\}\;
Initialize $L=\{v_1,\dots,v_N\}$ with $n$ randomly chosen vertices (uniformly)\;
$i \gets N \times c$ \{$i$ is the used portion of the budget\}\;
Initialize $\mathcal{S} = L$ and $E = \cup_{s \in L} \mathcal{E}(s)$ \;
\While{$i < B$}{
Select $v \in L$ with probability $(w+\deg(v))/(nw + \sum_{\forall v_j \in L} \deg(v_j))$ \; \label{FSloop}
Sample $p \sim \textrm{Uniform}(0,1)$\;
 \eIf{ $p < w/(w+\deg(v))$}{
     Select a node $v \in V$ uniformly at random\;
      $i \leftarrow i + c$\;
}{
Select an outgoing edge of $v$, $(v,v^\prime)$, uniformly at random\; \label{FSselect}
 $i \leftarrow i + 1$\;
}
 \If{$s \notin \mathcal{S}$}{
 $\mathcal{S} \leftarrow \mathcal{S} \cup \{s\} $ \;
 $E \leftarrow E \cup \{(s,v) \in \mathcal{E}(s): v \not \in \mathcal{S}\}$
 }
 Replace $v$ by $v^\prime$ in $L$ and add $(v,v^\prime)$ to sequence of sampled edges\;
}
\caption{Directed Unbiased Frontier Sampling (DUFS)}\label{alg:dufs}
\end{algorithm}

\subsection{Estimation}\label{sec:estimators}

In this section we describe two estimators of node label distributions from samples obtained by DUFS.
\fm{The first estimator is based on the observations
obtained from edges traversed by the random walks. The second estimator combines these observations with those obtained
from the walkers initial locations. When used with a variance reduction heuristic, the latter
produces better estimates than the former.} For a description of estimators of edge label distribution
and other graph characteristics, please refer to \cite{TechReport}.

 
 
 
 

\subsubsection{Node Label Distribution: random edge-based estimator}\label{sec:labeldistribution}

Let $s_i$ denote the $i$-th node visited by DUFS, $i=1,\dots,t$, $t \leq B - Nc$. 
 Let $\theta_\ell$ be the fraction of nodes in $V$ with label $\ell \in \cL_v$.
 Let $\pi(v)$ be the steady state probability of sampling node $v$ in $G_u$, $\forall v \in V$.
 The node label distribution is estimated at step $t$ as
\begin{equation} \label{eq:hattheta}
\hat{\theta}_\ell= \frac{1}{n} \sum_{i = 1}^t \frac{ \mathds{1}\{\ell \in \cL_v(v) \}}{\hat{\pi}(s_i)} \, , \qquad \ell \in \cL_v, \, t=1,\ldots,B-Nc,
\end{equation}
where $\mathds{1}\{P\}$ takes value one if predicate $P$ is true and zero otherwise,
and $\hat{\pi}(s_i)$ is an estimate of $\pi(s_i)$:
$\hat{\pi}(s_i) = (w + \deg(s_i)) S$.
Here $\deg(v)$ is the degree of $v$ in $G^{(\infty)}$
and 
\begin{equation}\label{eq:norm}
 S = \frac{1}{t} \sum_{i = 1}^t \frac{1}{w + \deg(s_i) } \, .
\end{equation}
The following theorem states that $\hat{\pi}(s_i)$ is asymptotically unbiased. 

\begin{theorem}
$\hat{\pi}(s_i)$ is an asymptotically unbiased estimator of $\pi(s_i)$.
\end{theorem}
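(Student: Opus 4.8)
The plan is to reduce the claim to the strong law of large numbers (ergodic theorem) for a finite, irreducible, aperiodic Markov chain, applied to the normalizing average $S$ of~\eqref{eq:norm}. The first step is to argue that, even though the undirected graph $G^{(i)}$ is built on the fly and the $n$ walkers are coupled, the limiting behavior is governed by an honest ergodic chain on a \emph{fixed} graph. Since a random jump lands on a node drawn uniformly from $V$ and occurs with probability $w/(w+\deg(s))>0$ at every step, every node of $V$ is visited infinitely often; in particular there is an almost surely finite random time $\tau$ after which all nodes have been discovered and the constructed graph has stabilized to its final form $G^{(\infty)}$. I would condition on the realization of $G^{(\infty)}$, so that the degrees $\deg(v)$ and the normalizer $\vol(V)+w|V|$ are fixed, and note that for $i\ge\tau$ the process is a time-homogeneous random walk with degree-proportional jumps on $G^{(\infty)}$.

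Second, I would identify the stationary distribution of the recorded-node sequence $\{s_i\}$. Invoking the Frontier Sampling equivalence (\cite[Lemma 5.1]{TechReport}), DUFS is a single jump-augmented walk on the $n$-th Cartesian power of $G^{(\infty)}$, and verifying the balance equations exactly as in the DURW analysis of Section~\ref{sec:durw} shows that the marginal probability that the node recorded at a step equals $v$ is $\pi(v)=(w+\deg(v))/(\vol(V)+w|V|)$. This product chain is irreducible (jumps connect every configuration) and aperiodic (a jump may return a walker to its current node, giving a positive self-transition probability), hence ergodic.

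Third, I would apply the ergodic theorem to the bounded function $g(v)=1/(w+\deg(v))\in(0,1/w]$. The transient contribution of the finitely many steps before $\tau$ is $O(\tau/t)\to 0$, so $S$ of~\eqref{eq:norm} satisfies, almost surely as $t\to\infty$,
\begin{equation*}
S\ \longrightarrow\ \sum_{v\in V}\pi(v)\,\frac{1}{w+\deg(v)}=\frac{|V|}{\vol(V)+w|V|};
\end{equation*}
denote this limit by $c$. Hence $\hat\pi(s_i)=(w+\deg(s_i))\,S$ converges to $c\,(w+\deg(s_i))$, which is a fixed multiple of $\pi(s_i)$. Because the summands defining $S$ are uniformly bounded, bounded convergence upgrades this to convergence in mean; conditioning on the observed node $s_i$ then gives $\lim_{t\to\infty}\mathbb{E}[\hat\pi(s_i)\mid s_i]\propto\pi(s_i)$, i.e.\ asymptotic unbiasedness. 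The global constant is immaterial, since it cancels in the ratio that defines $\hat\theta_\ell$ in~\eqref{eq:hattheta}.

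The main obstacle is the first step. The graph construction is time-inhomogeneous and the walkers are coupled, so the textbook ergodic theorem does not apply verbatim to the raw process; the crux is to establish that the on-the-fly construction terminates almost surely in finite time and that the finite pre-$\tau$ transient does not perturb the Cesàro limit. Once this is in place, the argument reduces to the already-established stationary distribution of the jump-augmented walk on the fixed graph together with a routine ergodic average.
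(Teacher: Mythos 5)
Your proposal is correct and follows essentially the same route as the paper: establish that the on-the-fly construction stabilizes almost surely to a fixed graph $G^{(\infty)}$, then obtain the almost-sure limit of the normalizer $S$ from the ergodic behavior of the jump-augmented frontier walk, and conclude that $\hat\pi(s_i)$ converges to (a constant multiple of) $\pi(s_i)$. The only difference is one of detail: the paper compresses your second and third steps into a citation of Theorem~4.1 of~\cite{TechReport}, whereas you spell out the irreducibility/aperiodicity and the transient argument explicitly.
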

\begin{proof}
To show that $\hat{\pi}(s_i)$ is asymptotically unbiased, we first note that the limit  $\lim_{t \to \infty} E^{(t)} = E^{(\infty)}$ exists, since after visiting all vertices we will never add any additional edges. We then invoke Theorem~4.1 of~\cite{TechReport}, yielding $\lim_{t \to \infty} S = \vert V \vert / (\vert E^{(\infty)}\vert + \vert V \vert w) $ almost surely.
Thus, $\lim_{t \to \infty} \hat{\pi}(s_i) = {\pi}(s_i)$ almost surely.
Taking the expectation of~(\ref{eq:hattheta}) in the limit as $t \to \infty$ yields
$E[\lim_{t \to \infty} \hat{\theta}_\ell] = {\theta}_\ell,$ which concludes our proof.
\end{proof}

\subsubsection{Node Label Distribution: leveraging information from walkers' initial locations}\label{sec:hybrid}

The estimator presented in~\eqref{eq:hattheta} does not make use of information associated with the initial set of nodes on which the walkers
are placed. When the number of walkers
is large this results in the loss of a considerable amount of statistical information.
However, including these observations is challenging because
subsequent observations from random walk steps are not independent of the initial observations. 
Moreover, the normalizing constant for the random walk observations is no longer given by~\eqref{eq:norm},
since degree distribution estimates also depend on the information contained in the node samples.

In this section, we
derive a new estimator that circumvents these problems by
approximating the likelihood of RW samples by
that associated with random edge sampling. 
We call it the {\em hybrid estimator} because it combines
observations from initial walker locations and random walks steps.
The hybrid estimator significantly improves
the estimation accuracy for labels associated with large probability masses.

Let us index the node labels $\cL_v$ from $1$ to $W$, where $W=|\cL_v|$. We refer to the sum $\deg(v)+w$ in DUFS as the {\em random walk bias} for node $v \in V$.
To simplify the notation, we assume that each node has exactly one label and that random walk biases take on integer values in $[1,\ldots,Z]$, for some maximum value $Z$.
Denote the node label distribution as $\btheta = (\theta_1,\ldots,\theta_W)$.
Let $n_{i}$ denote the number of walkers starting on label $i$ nodes and $m_{i,j}$ the number of subsequent observations of label $i$ and bias $j$ nodes.
The notation is summarized in Table~\ref{tab:notation}.

\begin{table}[t]
\center
\begin{tabular}{cl}
\hline
Variable & Description \\
\hline
$n_{i}$ & number of node samples with label $i$\tabularnewline
$\theta_{i,j}$ & fraction of nodes in $G^{(t)}$ with label $i$ and undirected degree
$j$\tabularnewline
$m_{i,j}$ & number of edge samples with label $i$ and bias $j$\tabularnewline
$m_{i}=\sum_{j}m_{i,j}$ & total number of edge samples with label $i$\tabularnewline
$N=\sum_{i}n_{i}$ & total number of node samples\tabularnewline
$M=\sum_{i}m_{i}$ & total number of edge samples\tabularnewline
$B=N+M$ & total budget\tabularnewline
\hline
\end{tabular}
\caption{Notation used in hybrid estimator.}
\label{tab:notation}
\end{table}

We approximate random walk samples in DUFS by uniform edge samples from $G_u$. Experience
from previous studies shows us that this approximation works very well in practice.
Hence, the likelihood function given samples $\mathbf{n}=\{n_i:i=1,\ldots,W\}$ and $\mathbf{m} = \{m_{i,j}:i=1,\ldots,W\textrm{ and } j=1,\ldots,Z\}$
is expressed as
\begin{equation}\label{eq:likelihood2}
L(\btheta|\mathbf{n},\mathbf{m})  =  \frac{\prod_{i} \theta_i^{n_{i}} \prod_k (k\theta_{i,k})^{m_{i,k}}}{\left(\sum_{s,t} t\theta_{s,t}\right)^{M}}.
\end{equation}

The maximum likelihood estimator $\theta^\star$ is the value of $\theta$ that maximizes~\eqref{eq:likelihood2} subject to $0 \leq \theta_i \leq 1$ and $\sum_i \theta_i = 1$.
This defines a constrained non-convex optimization problem. However, we can convert this optimization problem into an unconstrained problem using the reparameterization
$\theta_i = e^{\beta_i}/\sum_k e^{\beta_k}$ for $i=1,\ldots,W$. As shown in Appendix~\ref{app:properties}, the partial derivatives of the resulting objective function are
 \begin{equation}\label{eq:derivative2}
\frac{\partial \mathcal{L}(\bbeta|\mathbf{n},\mathbf{m})}{\partial\beta_{i}} =  n_i + m_i  -\frac{Ne^{\beta_i}}{\sum_j e^{\beta_j}} -\frac{M e^{\beta_i} m_{i}/ \mu_i}{\sum_{s} e^{\beta_s} m_{s}/ \mu_s },  \qquad i=1,\ldots,W,
 \end{equation}
where $m_i=\sum_k m_{i,k}$ and $\mu_i = \sum_k m_{i,k}/k$. Setting one of the variables to a constant (say, $\beta_W = 1$) for identifiability and then using the gradient descent method to change the remaining variables according to~\eqref{eq:derivative2} is guaranteed to converge provided that we make small enough steps. 
 An interesting interpretation of~\eqref{eq:derivative2} is obtained by setting the derivatives to zero and substituting back $\theta_i = e^{\beta_i}/\sum_k e^{\beta_k}$:
\begin{equation}
  \theta_i^\star  =  (n_i + m_i)\left(N + M \frac{m_{i}/ \mu_i}{\sum_{s} \theta_s^\star m_{s}/ \mu_s }\right)^{-1},\qquad i=1,\ldots,W.\label{eq:estimator_alpha}
\end{equation}
According to~\eqref{eq:estimator_alpha}, the estimated fraction of nodes with label $i$ is the total number of times label $i$ was observed  (i.e., $n_i + m_i$) normalized by sum of (i) the number of random node samples and (ii) the
number of random edge samples weighted by the probability of sampling label $i$ from one random edge sample.
In the limit as $N$ and $M$ go to infinity, we can show that $\btheta^\star = \btheta$
is a solution, but we cannot prove that it is unique or that $\btheta^\star$ converges to $\btheta$.
Hence, we cannot prove that $\btheta^\star$ is asymptotically unbiased.

The system of non-linear equations determined by~\eqref{eq:estimator_alpha} cannot be solved directly, but
can be tackled by Expectation Maximization (EM). In this case, the term $\sum_{s} \theta_s^\star m_{s}/ \mu_s$ 
in the denominator is replaced by its expected value given $\theta_i$'s from the previous iteration.
Based on the same idea, if we replace $\sum_{s} \theta_s^\star m_{s}/ \mu_s$ with an edge sampled-based estimator $\hat d$ for the average degree in $G_u$,
we obtain the following non-recursive variant of the hybrid estimator,
\begin{equation}\label{eq:nonrecursive}
\hat{\theta}_{i} = (n_{i}+m_{i})\left(N+M\frac{m_{i}}{\mu_{i} \hat{d}}\right)^{-1},\qquad i=1,\ldots,W,
\end{equation}
where $\hat d = M/(\sum_i \mu_i)$.
Theorem~\ref{thm:asymp} below states the conditions under which $\hat \theta_i$ is asymptotically unbiased (see appendix for proof). 
In practice, we find no significant difference between $\theta_i^\star$ and $\hat \theta_i$, except when the number of walkers $N$
is very large and the jump weight $w$ is very small. For those cases, $\theta_i^\star$ tends to be slightly more accurate
than $\hat \theta_i$ for small values of $i$, which in some applications may justify the additional computational cost
of executing gradient descent or EM.
\begin{thm} \label{thm:asymp}
Let $N=\alpha B$
and $M=(1-\alpha)B$, for some $0<\alpha<1$. In the limit as $B\rightarrow \infty$, the estimator $\hat{\theta}_{i}$ is an unbiased estimator of $\theta_{i}$.
\end{thm}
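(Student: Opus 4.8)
The plan is to compute the almost-sure limits of each empirical quantity entering $\hat\theta_i$ and then pass to the limit through the continuous mapping theorem, mirroring the argument used for the preceding theorem. Throughout I work under the approximation that was used to derive~\eqref{eq:likelihood2}: the $N$ initial node samples are drawn uniformly at random from $V$, and the $M$ random-walk observations are treated as independent uniform edge samples on $G_u$, so that a node $v$ with random walk bias $b(v)=w+\deg(v)$ is observed with probability proportional to $b(v)$. Since $N=\alpha B$ and $M=(1-\alpha)B$, both $N$ and $M$ diverge as $B\to\infty$; moreover the degrees in $G^{(t)}$ stop changing once every vertex has been visited, so the empirical label/bias fractions converge to the fractions $\theta_{i,j}$ computed on $G^{(\infty)}$.

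First I would establish four building-block limits via the strong law of large numbers. Writing $\bar d=\sum_{s,t} t\,\theta_{s,t}$ for the mean random walk bias, uniform node sampling gives $n_i/N\to\theta_i$ a.s., while uniform edge sampling gives $m_{i,j}/M\to j\theta_{i,j}/\bar d$ a.s. Summing over $j$ yields $m_i/M\to(\sum_j j\theta_{i,j})/\bar d$, and because $\mu_i=\sum_k m_{i,k}/k$ the sampling weight $k$ cancels against the $1/k$, giving the key identity $\mu_i/M\to\theta_i/\bar d$ a.s. Summing this over $i$ gives $\sum_i\mu_i/M\to 1/\bar d$, hence $\hat d=M/\sum_i\mu_i\to\bar d$ a.s.; that is, $\hat d$ consistently recovers the mean bias. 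This cancellation is the heart of the argument: it converts the degree-biased edge counts back into unbiased node proportions.

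Next I would substitute these limits into~\eqref{eq:nonrecursive}. The denominator ratio satisfies
\[
\frac{m_i}{\mu_i \hat d}\;\longrightarrow\;\frac{(\sum_j j\theta_{i,j})/\bar d}{(\theta_i/\bar d)\,\bar d}\;=\;\frac{\sum_j j\theta_{i,j}}{\theta_i\,\bar d}\qquad\text{a.s.}
\]
Dividing numerator and denominator of~\eqref{eq:nonrecursive} by $B$ and using $n_i/B\to\alpha\theta_i$ and $m_i/B\to(1-\alpha)(\sum_j j\theta_{i,j})/\bar d$, the numerator tends to $\alpha\theta_i+(1-\alpha)(\sum_j j\theta_{i,j})/\bar d$ and the denominator to $\alpha+(1-\alpha)(\sum_j j\theta_{i,j})/(\theta_i\bar d)$. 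Factoring $\theta_i$ out of the numerator shows the denominator equals the numerator divided by $\theta_i$, whence $\hat\theta_i\to\theta_i$ almost surely (assuming $\theta_i>0$; labels with $\theta_i=0$ are vacuous). Taking expectations of the deterministic limit gives $E[\lim_{B\to\infty}\hat\theta_i]=\theta_i$, the claimed asymptotic unbiasedness.

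The main obstacle I expect is not any individual limit but justifying the passage through the ratio: $\hat\theta_i$ is a quotient of empirical sums, so one must argue that the denominator converges to a strictly positive deterministic constant before invoking the continuous mapping theorem, and one must be careful that it is the uniform-edge-sampling approximation, rather than the true correlated random-walk law, that produces the clean cancellation $\mu_i/M\to\theta_i/\bar d$. As with the preceding theorem, the statement is really the assertion that the almost-sure limit of the estimator equals the target, so the expectation is taken only after convergence to a constant has been established; proving a genuine $\lim_B E[\hat\theta_i]=\theta_i$ by interchanging limit and expectation would additionally require a uniform integrability argument, which the cancellation structure makes plausible but which I would only sketch.
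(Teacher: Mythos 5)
Your proposal is correct and follows essentially the same route as the paper's proof: both rest on the key cancellation $\mu_i/M \to \theta_i/\bar d$ (the $1/k$ weight undoing the degree bias), the consistency of $\hat d$ for the mean bias $\bar d=\sum_{s,t}t\,\theta_{s,t}$, and substitution of these limits into the numerator and denominator of~\eqref{eq:nonrecursive} to recover $\theta_i$. If anything, your version is slightly more careful than the paper's, which plugs expectations directly into the ratios (e.g.\ writing $\lim_B E[\hat d]=M/E[\sum_i\mu_i]$) where you correctly invoke almost-sure convergence and the continuous mapping theorem and flag the limit--expectation interchange.
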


%
%
%
%
In the special case where the label is the undirected degree itself, we have $\mu_i = m_i/i$. Hence, eq.~\eqref{eq:nonrecursive} reduces to
\begin{equation}\label{eq:und_theta}
\bar \theta_i = \frac{n_i+m_i}{N+Mi/\hat d},
\end{equation}
where $\hat d$ is the estimated average degree.
When the average degree is known, we can show that $\bar \theta_i$ is unbiased and, moreover, the minimum variance unbiased estimator (MVUE) of $\theta_i$ (see appendix for proof).




When $n_i > 0$ but $m_i = 0$, the estimator in eq.~\eqref{eq:nonrecursive} reduces to $\hat \theta_i = n_i/N$,
which is essentially the MLE for uniform node sampling. It is well known that this estimator is
not nearly as accurate as a random walk based estimator for large out-degree values with small probability mass.
In some sense, the estimator $\hat \theta_i = n_i/N$ does not account for the
fact that the number of random walk samples is zero. As a result, mass estimates
for large out-degrees tend to have very large variance when no random walk samples
are observed. Fortunately, we find that the following heuristic rule can drastically reduce the estimator variance
in these cases. 

\paragraph{Variance reduction rule} If no random edge samples
are observed for out-degree $i$, we set the estimate $\hat{\theta}_i = 0$. This implies that we ignore any random
node samples seen of nodes that have out-degree $i$. While this clearly results in a biased estimate, as
the budget per walker $b$ goes to infinity, the probability of invoking this rule goes to zero.
Hence, it produces an asymptotically unbiased
estimate. This rule can be interpreted as a combination of node-based and edge-based estimates
in proportion to the reciprocals of their estimated variances. That is, when no random edge samples are observed
for a given out-degree, the corresponding estimated variance is zero and hence, random node samples
should be ignored. We note that the converse rule (i.e., set $\hat{\theta}_i = 0$ if no random node samples were observed)
would not perform well, as the probability of sampling large out-degrees with random node sampling is very small.

We simulate DUFS on several datasets and compare the results obtained
with the hybrid estimator when the rule is used and when it is not.
 Simulation details, datasets and the error metric (normalized root mean square error)
  will be described in Section~\ref{sec:parameters}.
 Figure~\ref{fig:rule} shows representative results of the impact of the rule
when estimating out-degree distributions using DUFS in conjunction with the hybrid estimator
 on two network datasets (averaged over 1000 runs).
The results show that the rule consistently reduces
estimation error in the distribution tail without affecting estimation quality for small values of $i$.

\begin{figure}[]
\centering
      \includegraphics[width=0.96\textwidth]{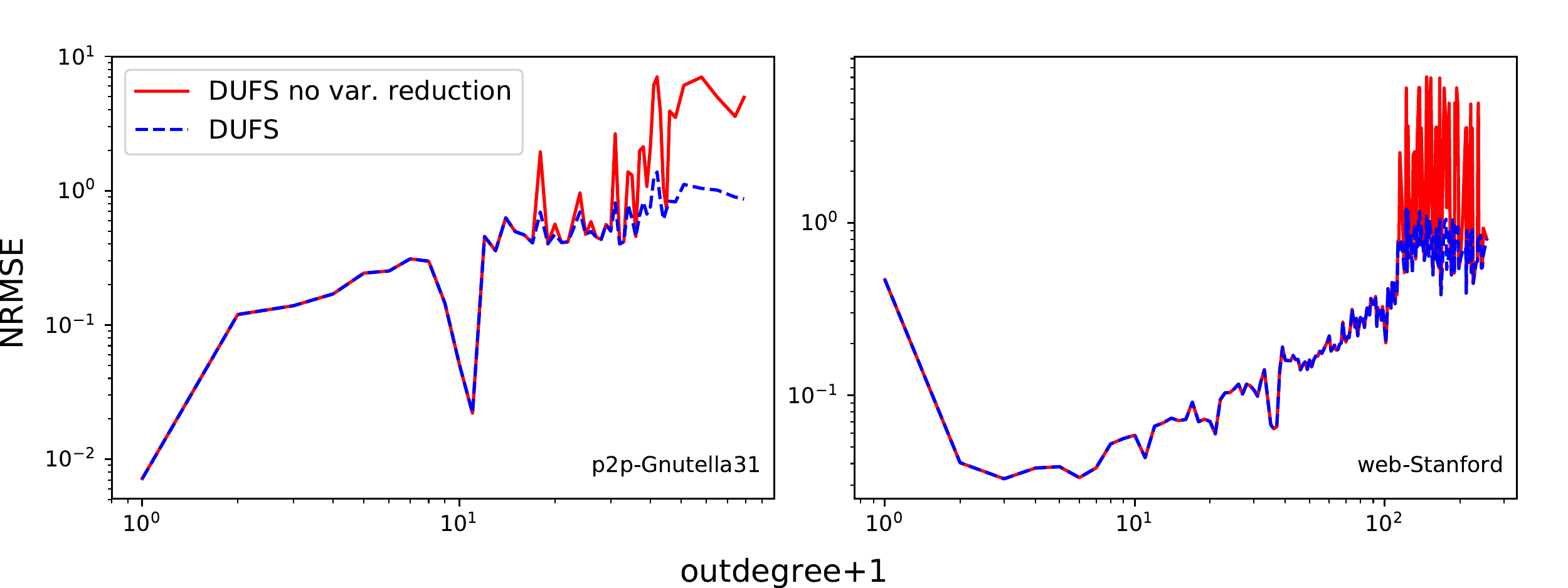}

    \caption{(visible in-edges) Effect of variance reduction rule on NRMSE, when $B=0.1|V|$ and $c=1$.
    Using information contained in random node samples can increase variance for large out-degree estimates.
    However, the proposed rule effectively controls for that effect without decreasing head estimates accuracy.}
    \label{fig:rule}
 \end{figure}

\paragraph{In-degree distribution: impossibility result}

The fact that long random walks are often approximated by random edge sampling brings up the question
of whether they can be used to estimate in-degree distributions when the in-degree is not observed directly.
Under random edge sampling, the number of observed edges pointing to a node is binomially distributed
and a maximum likelihood estimator can be derived for estimating the in-degree distribution.
This problem is related to the set size distribution estimation problem, where elements
are randomly sampled from a collection of non-overlapping sets and the goal is to recover
the original set size distribution from samples.
In addition to in-degree distribution in large graphs, this problem is related to
the uncovering of TCP/IP flow size distributions on the Internet.

In~\cite{jsac13}, we derive error bounds for the set size distribution estimation problem from an
information-theoretic perspective. The recoverability of original set size distributions presents a sharp threshold with respect
to the fraction of elements sampled from the sets. If this fraction lies below the threshold,
typically half of the elements in power-law and heavier-than-exponential-tailed distributions, then the
original set size distribution is unrecoverable (see \cite[Theorem 2]{jsac13}).

\section{Results on degree distribution estimation}\label{sec:results}

Here we focus on the estimation of degree distributions on directed networks. This section is divided into four parts. In Section~\ref{sec:parameters}, we investigate the impact of DUFS parameters on estimation accuracy. We then compare DUFS against other random walk-based methods when
both outgoing and incoming edges are visible in Section~\ref{sec:visible}. In Section~\ref{sec:dresults}, we perform a similar comparison when only out-edges are visible. Last, in Section~\ref{dufs:behavior} we provide some analysis to explain the relationship observed between the NRMSE and the out-degree (in-degree) in the results.
We will refer to the edge-based estimator defined in~\eqref{eq:hattheta} as E-DUFS.

\begin{figure}[htb]
\begin{center}
\includegraphics[width=0.99\textwidth]{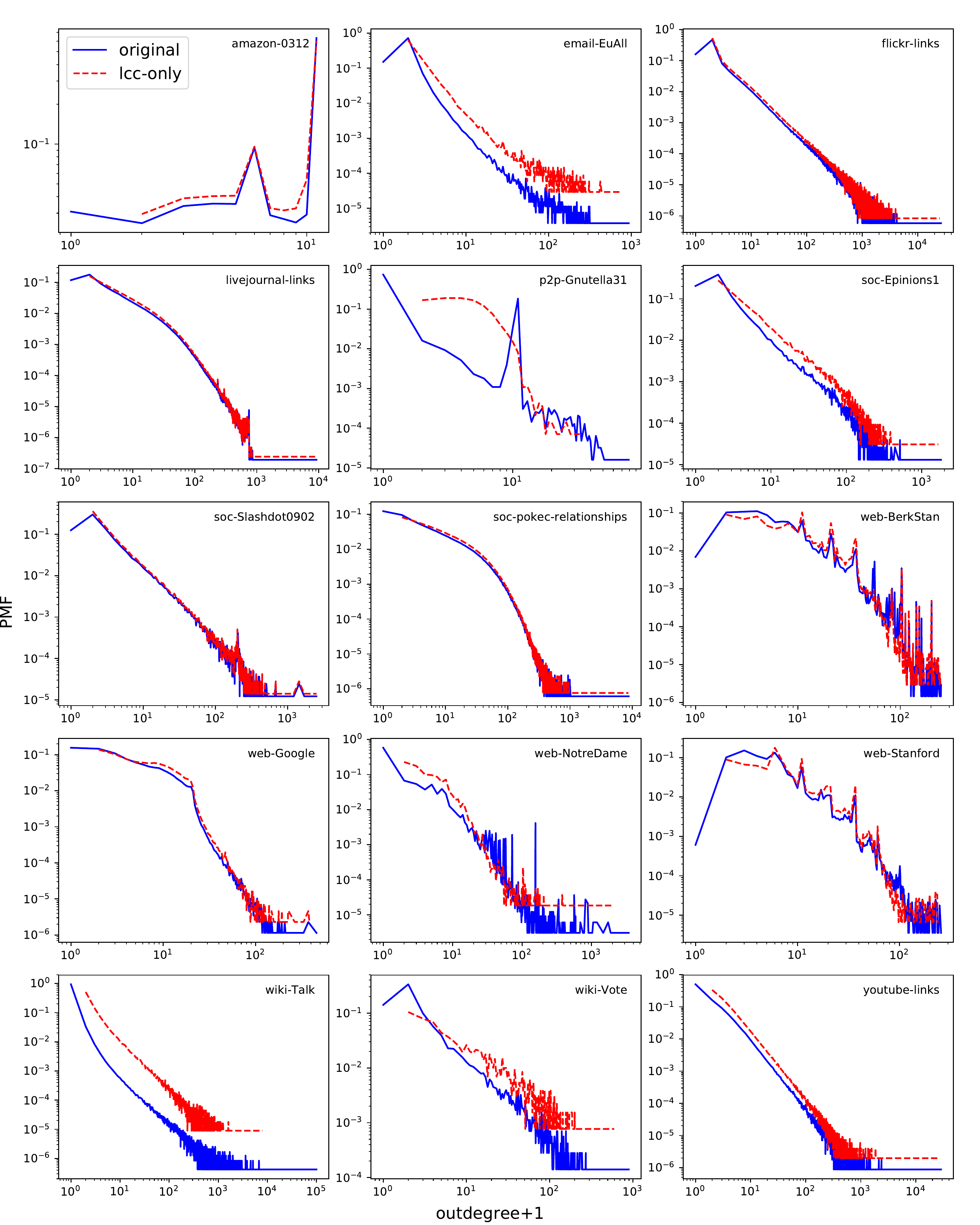}
\caption{Out-degree probability mass function (p.m.f.) for each network and its largest strongly connected component (LCC).
A large difference between these p.m.f.s suggests it is beneficial to use multiple walkers
and/or random jumps.
\label{fig:outdegree}}
\end{center}
\end{figure}

The 15 directed network datasets in our evaluation were obtained from Stanford's SNAP \cite{SNAP}.
These datasets describe the topology of a variety of social networks, communication networks, web graphs, one Internet peer-to-peer networks
and one product co-purchasing networks. We found it informative to extract the
largest strongly connected component of each directed network and
to apply our methods to the resulting datasets -- hereby referred to as LCC datasets -- as well as to the original datasets.
Figure~\ref{fig:outdegree} shows the out-degree
probability mass function (p.m.f.) for each network, along with the out-degree p.m.f.\ for the corresponding
LCC dataset. We opt to show the p.m.f.\ instead of the complementary cumulative distribution function (CCDF) because
the estimation task in this work is defined in terms of the p.m.f.'s. Defining the estimation task in terms
of the CCDF would give DUFS an unfair advantage, as we will explain in Section~\ref{sec:visible}.

Simulations consist of sampling the network until a budget $B=0.1|V|$ (i.e., 10\% of the
number of vertices) is depleted.
Note that budget is decremented when walkers are initially placed and each time one of them moves to a node
and when they perform random jumps. We construct an undirected graph in the background throughout
each simulation. As a result, we assume that the cost to revisit a node is zero, even if this visit occurs due to
a random jump\footnote{Note that the alternative, i.e.\ always taking $c$ units off the budget per random jump, is unlikely
to impact results significantly when $B=0.1|V|$, since the vast majority of random jumps will find a non-visited node.}.

When both outgoing and incoming edges are observable, random walks disregard edge direction, and move as if
the network is undirected. In this scenario, we focus either on the estimation of the marginal out- and in-degree
distributions or the joint distribution. The methods we investigate here
can be used to estimate other node label distributions. For instance, if the underlying network is undirected,
we can estimate the (undirected) degree distribution or even non-topological properties, such
as the distribution of user nationalities in a social network. In the light of the impossibility results described
in the end of Section~\ref{sec:estimators}, we focus on out-degree distribution estimation
when incoming edges are not directly observable.

Let $\btheta = \{\theta_i\}_{\forall i \in \cL}$ denote the node label distribution, where $\theta_\ell$ is the fraction of vertices with label $\ell$.
Denote by  $\hat{\theta}_\ell$ the estimate for $\theta_\ell$.
We use normalized root mean square error (\NRMSE) of $\hat{\theta}_\ell$ as the error metric, which is a normalized measure of the dispersion of the estimates, defined as
\begin{equation} \label{eq:NRMSE}
 \NRMSE(\ell) = \frac{\sqrt{E[( \hat{\theta_\ell} - \theta_\ell )^2]}}{\theta_\ell}.
\end{equation}
In the case of marginal in-degree (out-degree) distribution, we refer to in-degrees (out-degrees) smaller than the average as the {\em head} of the distribution.
We refer to the largest 1\% in- (out-degree) values as the {\em tail} of the distribution.


\subsection{Impact of DUFS parameters and practical guidelines}\label{sec:parameters}

To provide intuition on how random jump weight $w$ and budget per walker $b$ affect the accuracy of DUFS estimates,
assume for now that we replace samples collected via random walks by uniform edge samples
from the weighted undirected graph $G_u$. In this hypothetical scenario, the budget $B$ is used to collect $N \geq 1$ uniform node samples and
$B-Nc$ uniform edge samples.
Clearly, when the edge-based estimator defined in~\eqref{eq:hattheta}
is used, the most accurate node label distribution estimates are obtained by setting $N=1$,
(i.e.\ $b=B-c$). Hence, we focus on the case where the hybrid-estimator defined in~\eqref{eq:estimator_alpha} is used.
In particular, consider estimation of the out-degree distribution.

For a given value of $b$, the number of uniform node
samples will be $B/(c+b)$. For each of the remaining $B-B/(c+b)$ samples, a vertex $v$ is sampled in proportion to $\deg(v)+w$,
where $\deg(v)$ is the undirected degree of $v$ in $G_u$. The choice of $w$ and $b$ impose, individually,
a trade-off between estimation accuracy of the head and of tail of the distribution.
For a fixed value of $w$, smaller values of $b$ translate into
better estimates of the head (and worse estimates of the tail) because we collect more (less) information about that region of the distribution
from uniform node samples. For a fixed value of $b$, larger values of $w$ also translate into more (less) accurate estimates of the head (tail),
because random jumps are more likely to move a node to low in- and out-degree nodes (as they tend to occur more frequently).

In what follows, we observe through simulations that {\em despite the uniform edge sampling approximation, the previous intuition holds for DUFS
head estimates,
  but not always for tail estimates}. In many cases, as we increase the number of walkers (i.e., decrease $b$) or increase $w$,
 we still obtain good
 estimates of the tail. This occurs because varying $w$ or $b$ changes the transition probability matrix that governs
the sampling process, and thus, the sample distribution.


We simulate DUFS on each original network dataset for combinations of random jump weight $w \in \{0.1,1,10\}$ and budget per walker $b \in \{1,10,10^2,10^3\}$ (1000 runs each). \fm{For small values of $w$, DUFS behaves as FS, except for using the improved estimator. For large values of $w$, DUFS behaves as uniform node sampling. Last, for large values of $b$, DUFS behaves as DURW.}
We consider four scenarios that correspond to whether the incoming edges are directly observable or not and
to two different costs of uniform node sampling $c=1$ or $c=10$. Evaluating these parameter combinations is useful to establish
practical guidelines for choosing DUFS parameters, which we summarize in Table~\ref{tab:summary}.
We observe that estimation accuracy tends to be lower for extreme values of these parameters, suggesting that combinations other than ones
investigated here would not provide large accuracy gains (if any).

\begin{table}[]
\centering
\caption{Practical guidelines on setting DUFS parameters to obtain accurate head or tail estimates
depending on in-edge visibility and node sampling cost $c$.}
\label{tab:summary}
\begin{tabular}{l|c|c|c|c}
        & \multicolumn{4}{c}{uniform node sampling cost} \\
        \cline{2-5}
        & \multicolumn{2}{c|}{$c=1$}                 & \multicolumn{2}{c}{$c=10$}                                                                \\
\hline
in-edges            & visible   & not visible                   & visible                       & not visible                                               \\
\hline
most accurate for & $w=10$    & $w=10$   & $w=1$   & $w=10$                                               \\
 small out-degrees &  $b=1$ &  $b=1$     & $b=10^2$ & $b=1$ \\
\hline
most accurate for  & $w=1$ & $w=1$ & $w=0.1$ & $w=0.1$ \\
large out-degrees & $b=10$ & $b=10,10^2,10^3$ & $b=10^3$ & $b=10, 10^2, 10^3$
\end{tabular}
\end{table}

\subsubsection*{Visible in-edges, $c=1$}
Figure~\ref{fig:grid} (all except bottom right) show typical results when varying $w$ and $b$. To avoid clutter,
we show only estimates for powers of two (or the closest
out-degree values) and omit results for
$b = 10^3$ 
as they are similar to those for $b=10^2$.
Figure~\ref{fig:grid} (bottom right)
shows similar results for amazon-0312, the dataset with the smallest maximum out-degree (max.\ is 10).
Similar to our intuition for uniform edge sampling, the NRMSE associated with the head increases with $b$ and decreases with $w$,
on virtually all datasets\footnote{For simplicity, the observations regarding the distribution head (tail) are based on the single smallest (largest) out-degree
on each dataset. Similar conclusions are obtained when combining NRMSEs associated with several of the smallest (largest) out-degrees.}. Also as expected, for a fixed values of $w$, $b=1$ yields larger errors in the tail
than $b \in \{10,100\}$ (except for amazon-0312). However, contrary to the intuition for uniform edge sampling,
$w=1$ matches or outperforms $w=0.1$ for (except for $b=1$). This is best visualized in Figure~\ref{fig:grid} (bottom right).
This happens because setting $w=1$ allows DUFS to sample regions with large probability mass (in this case, the head) and,
at the same time, allows the sampler to move walkers from low volume to high volume components more often than $w=0.1$.
We also observe that
$b=10$ outperforms $b \in \{10^2,10^3\}$ for $w \in \{0.1,1\}$.
Dataset amazon-0312 is the only dataset where $(w=10,b=1)$ obtained the best results over the entire out-degree distribution.
As a side note, we observe that for most datasets used here, in log-log scale, the NRMSE grows approximately linearly as a function of the out-degree up to a certain
point and then starts to decrease, roughly linearly too. In Section~\ref{dufs:behavior} we explain why this is the case.

\begin{figure}[]
      \includegraphics[width=0.96\textwidth,height=0.66\textwidth]{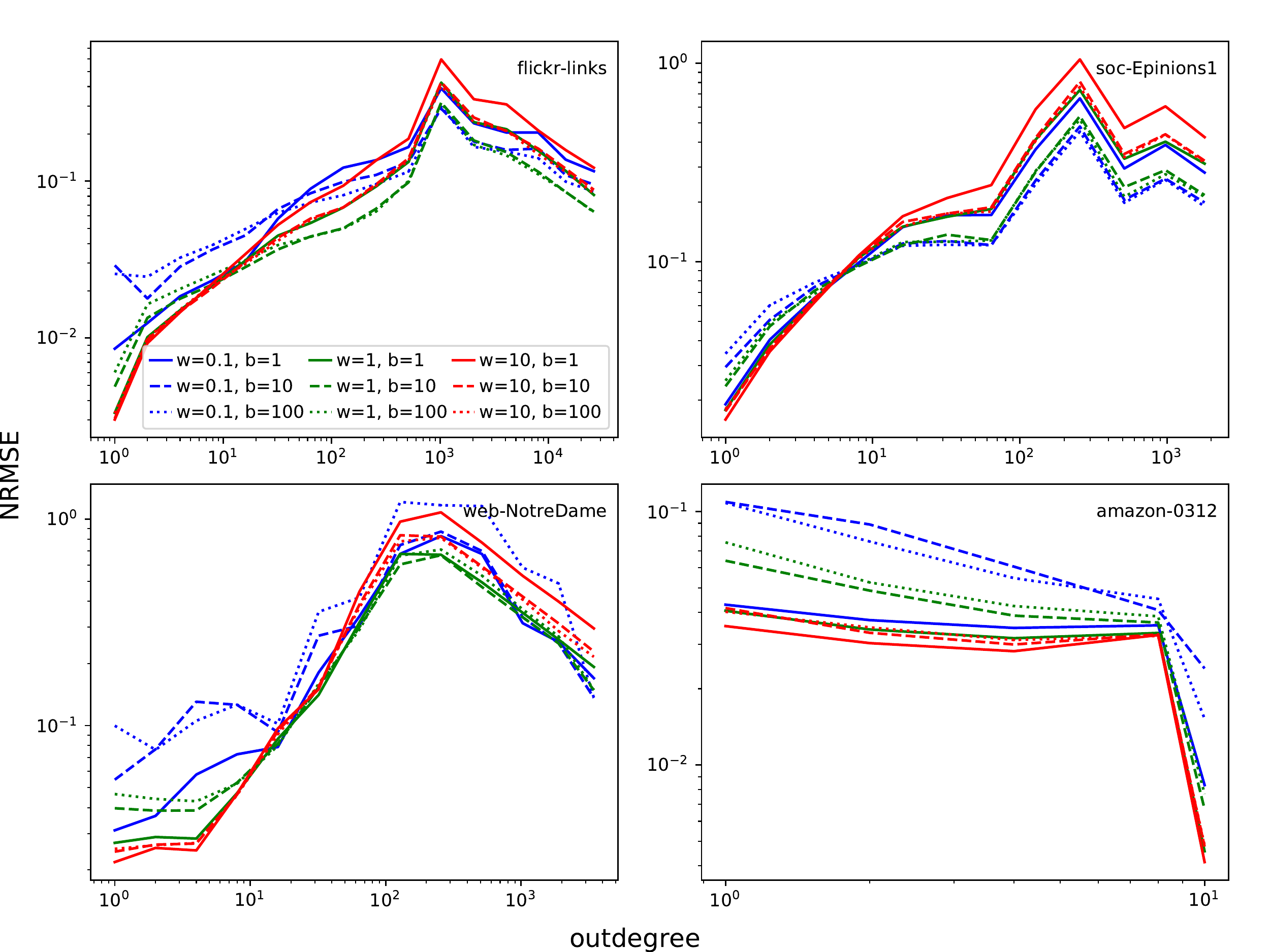}
    \caption{(visible in-edges, $c=1$) Effect of DUFS parameters on datasets with many connected components, when $B=0.1|V|$ and $c=1$. Legend shows the average budget per walker ($b$)
    and jump weight ($w$). Trade-off shows that configurations that result in many uniform node samples, such as $(w = 10, b=1)$,
    yield accurate head estimates, whereas configurations such as $(w =1, b=10)$ yield accurate tail estimates.}
    \label{fig:grid}
 \end{figure}
\begin{figure}[]
      \includegraphics[width=0.96\textwidth]{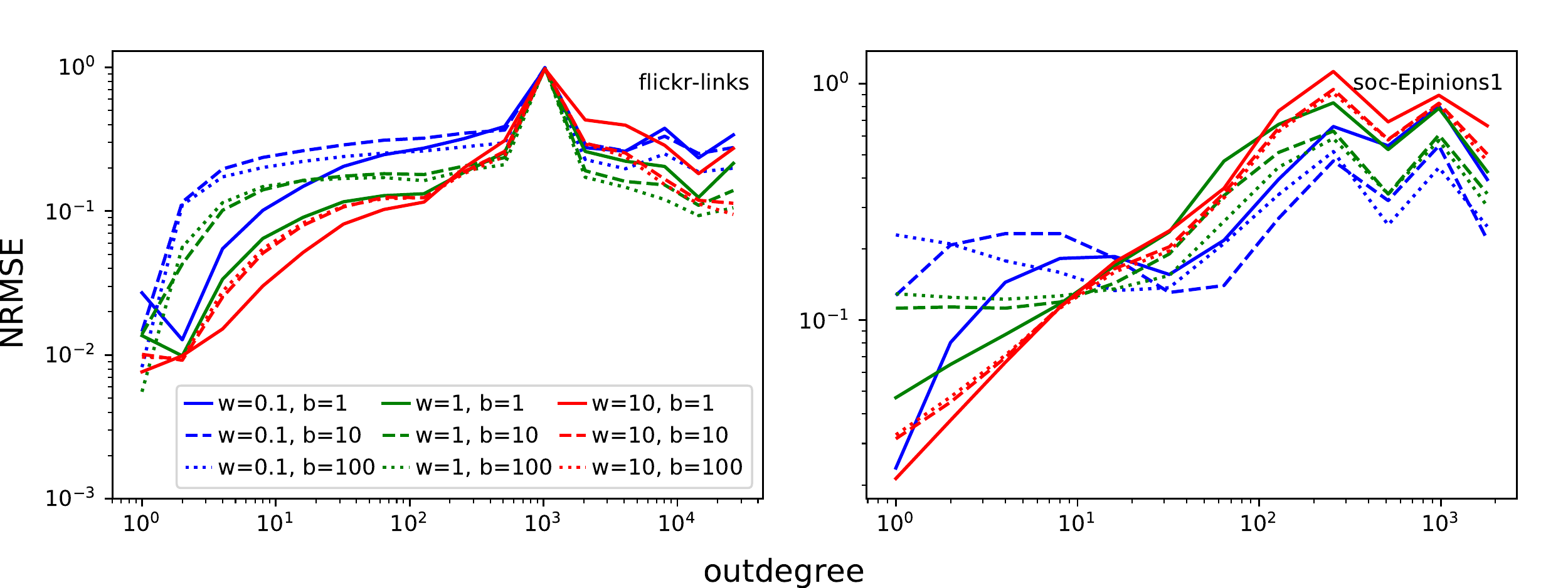}
    \caption{(invisible in-edges, $c=1$) Effect of DUFS parameters on datasets with many connected components, when $B=0.1|V|$ and $c=1$. Legend shows the average budget per walker ($b$)
    and jump weight ($w$). Configurations that result in many walkers which jump too often, such as $(w\geq10, b = 1)$
    yield accurate head estimates, whereas configurations such as $(w=1, b=10^3)$, yield accurate tail estimates.}
    \label{fig:grid-v0}
 \end{figure}

\subsubsection*{Invisible in-edges, $c=1$}

The results we obtained are similar to those obtained for the visible in-edge scenario, but NRMSEs tend to be larger.
Figure~\ref{fig:grid-v0}
shows typical results for different DUFS parameters, represented by two datasets
(also shown in the previous figure).
Once again, the intuition for uniform edge sampling holds for the distribution head: decreasing $b$ and
increasing $w$ yield more accurate estimates for the smallest out-degrees.
While $b=1$ results in poor estimates for the largest out-degrees, our intuition regarding $w$ does not hold true
for the tail. More precisely, in most cases $w=1$ outperforms $w=0.1$ (one exception being dataset soc-Epinions1).
As opposed to the visible in-edge scenario, increasing $b$ tends to provide more accurate tail estimates for $w = 1$.
We investigate this effect in Section~\ref{sec:dresults}.
We find that, for a fixed $w$, larger values of $b$ make the random walks jump more often, moving them
from small volume components to large volume components, yielding better tail estimates.

\techreport{}{
\begin{figure}[]
      \includegraphics[width=0.96\textwidth]{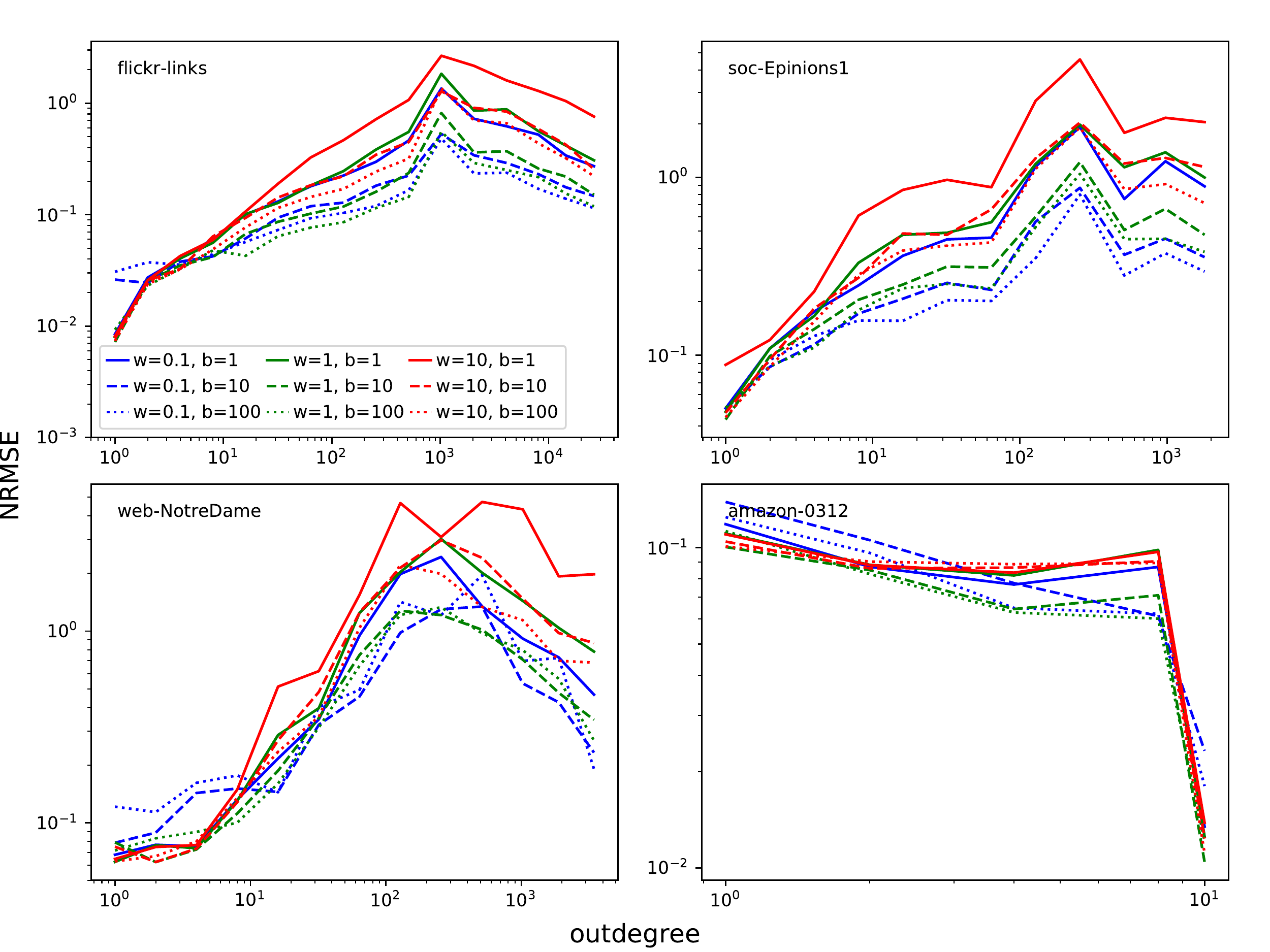}
    \caption{(visible in-edges, $c=10$) Effect of DUFS parameters on datasets with many connected components, when $B=0.1|V|$ and $c=1$. Legend shows the average budget per walker ($b$)
    and jump weight ($w$). Trade-off shows that configurations that result in many uniform node samples, such as $(w = 10, b=1)$,
    yield accurate head estimates, whereas configurations such as $(w =1, b=10)$ yield accurate tail estimates.}
    \label{fig:grid-c10-v1}
 \end{figure}
\begin{figure}[]
      \includegraphics[width=0.96\textwidth]{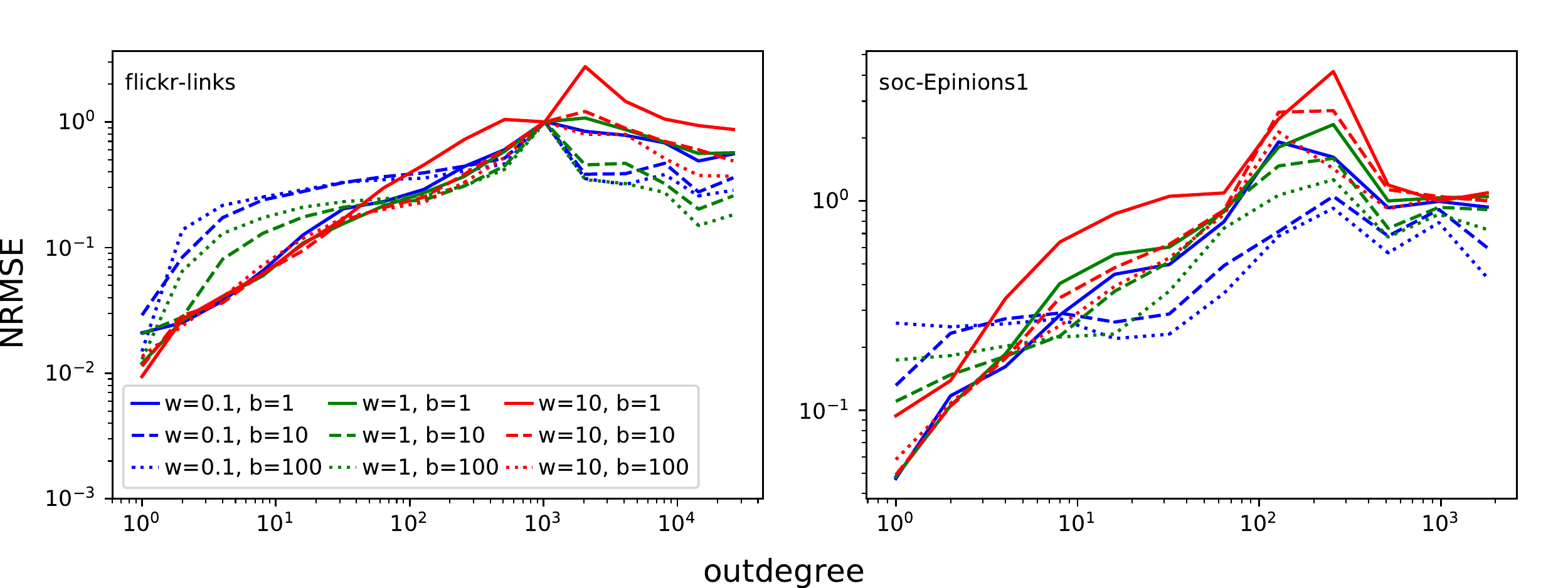}
    \caption{(invisible in-edges, $c=10$) Effect of DUFS parameters on datasets with many connected components, when $B=0.1|V|$ and $c=1$. Legend shows the average budget per walker ($b$)
    and jump weight ($w$). Configurations that result in many walkers which jump too often, such as $(w\geq10, b = 1)$
    yield accurate head estimates, whereas configurations such as $(w=1, b=10^3)$, yield accurate tail estimates.}
    \label{fig:grid-c10-v0}
 \end{figure}
 }

\subsubsection*{Visible in-edges, $c=10$}
 
 Consider the case where the cost of obtaining uniform node samples is large,
 more precisely, 10 times larger than the cost of moving a walker. \techreport{Plots for this setting can be found in our technical report \cite{Murai18}.}{Figure~\ref{fig:grid-c10-v1} shows typical results for this setting.}
It is no longer clear
 that using many walkers and frequent random jumps achieves the most accurate head
 estimates, as this could rapidly deplete the budget.
 In fact, we observe that setting $w=10$ or $b=1$ yields
 poor estimates for both the smallest and largest out-degrees.
 While increasing the jump weight $w$ or decreasing $b$  sometimes improves estimates in the head,
 it rarely does so in the tail.
 The best results  for the smallest out-degrees are often observed when setting $w=1$ and $b=10$ or $10^2$.
 On the other hand, setting $(w=0.1,b =10^3)$ or $(w=1,b =10^2)$ usually achieves relatively small NRMSEs
 for the largest out-degree estimates.

\subsubsection*{Invisible in-edges, $c=10$}

 \techreport{Plots for this setting can be found in our technical report \cite{Murai18}.}{Figure~\ref{fig:grid-c10-v0} shows typical results for this setting.}
Unlike the scenario with visible in-edges, setting $w=10$ and $b=1$ often produces the most
accurate estimates for the smallest out-degrees. This is because many of the datasets
have nodes with no out-edges; these nodes can only be reached through a
neighbor or through random node sampling.
Conversely, the general trends for tail estimates are similar to those observed for the
visible in-edges case: large values of $w$ and small values of $b$ yield less accurate estimates for
the largest out-degree values. For $w=1$, however,
$b=10^2$ often outperforms $b=10^3$.
On the other hand, for $w=0.1$ there is little difference in the estimates for different values of $b$.

\subsection{Evaluation of DUFS in the visible in-edges scenario}\label{sec:visible}

In this section we compare two variants of Directed Unbiased Frontier Sampling: E-DUFS, which uses the edge-based estimator
and DUFS, which uses the hybrid estimator, to each other and to a single random walk (SingleRW) and
multiple independent random walks (MultiRW). \fm{We do not include Frontier Sampling in the comparison as it is a special case of DUFS where $w = 0$ and we know from Section~\ref{sec:parameters} that allowing random jumps effectively reduce estimation errors.}


\subsubsection{Out-degree and in-degree distribution estimates} \label{sec:outdegdist}

Here we focus on estimating the marginal in- and out-degree distributions.
Each simulation consists of 1000 runs from which we compute the empirical NRMSE.
For MultiRW, E-DUFS and DUFS we set the average budget per walker to be $b=10$.
For conciseness, we only show a few representative results.
\begin{figure}[]
      \includegraphics[width=0.96\textwidth]{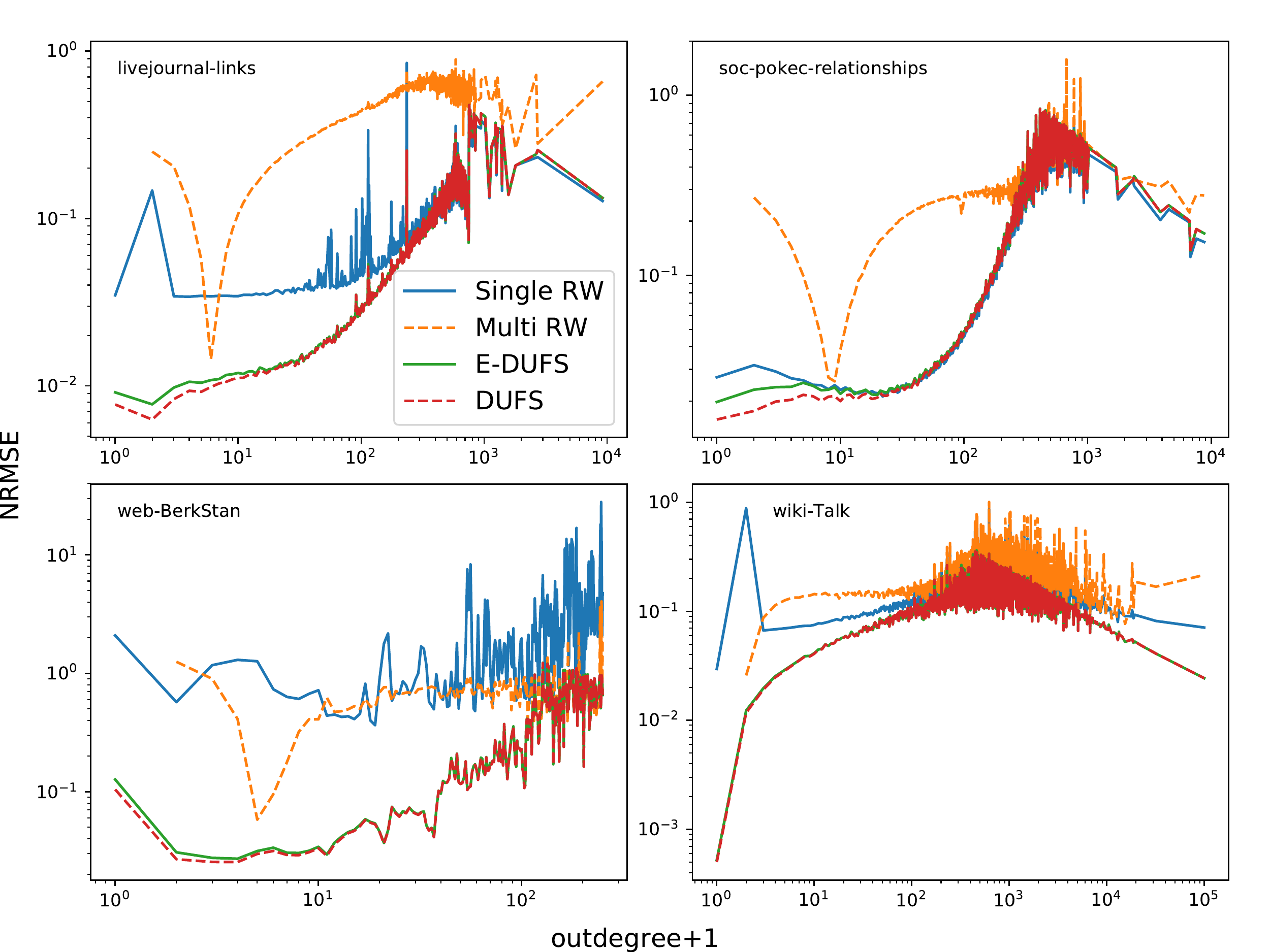}
    \caption{Comparison of single random walk (SingleRW), multiple independent random walks (MultiRW),
    DUFS with edge-based estimator (E-DUFS) and with hybrid estimator (DUFS).
    MultiRW yields the worst results, as the edge sampling probability is not the same across different connected
    components. Both DUFS variants outperform SingleRW, but DUFS is slightly more accurate
    in the head.}
    \label{fig:und-rw}
 \end{figure}

Figure~\ref{fig:und-rw} shows typical results obtained when using SingleRW, MultiRW, E-DUFS and DUFS to estimate out-degree distributions on the datasets. In 8 out of 15 datasets, MultiRW yields much larger NRMSEs than does the SingleRW. As pointed out in \cite[Section~4.5]{TechReport}, this is due to the fact that the estimator in~\eqref{eq:hattheta} assumes that all edges are sampled with the same probability. This assumption is violated by MultiRW because the
stationary sampling probability depends on the size of the connected component within which each walker is located.
E-DUFS estimates are consistently more accurate than those of MultiRW and SingleRW, except on datasets where the
original graph and its LCC have similar out-degree distributions.
In some of these cases SingleRW slightly outperforms E-DUFS in the tail (see top-right fig.).
DUFS, in turn, outperforms E-DUFS in the head of the out-degree distribution and has similar performance when
estimating other out-degree values. For this reason, defining the estimation task in terms of the CCDF would give
DUFS an unfair advantage.

When restricted to the largest connected component, the performance differences between SingleRW and E-DUFS
and those between SingleRW and DUFS become smaller, for $B=0.1|V|$.
 %
Results for in-degree distribution estimation are qualitatively similar and are omitted.

 \subsubsection{Joint in- and out-degree distributions}
 
 
%
   \begin{figure}[]
   \centering
    \subfloat[flickr-links\label{fig:flickr-links-joint}]{%
      \includegraphics[width=\textwidth]{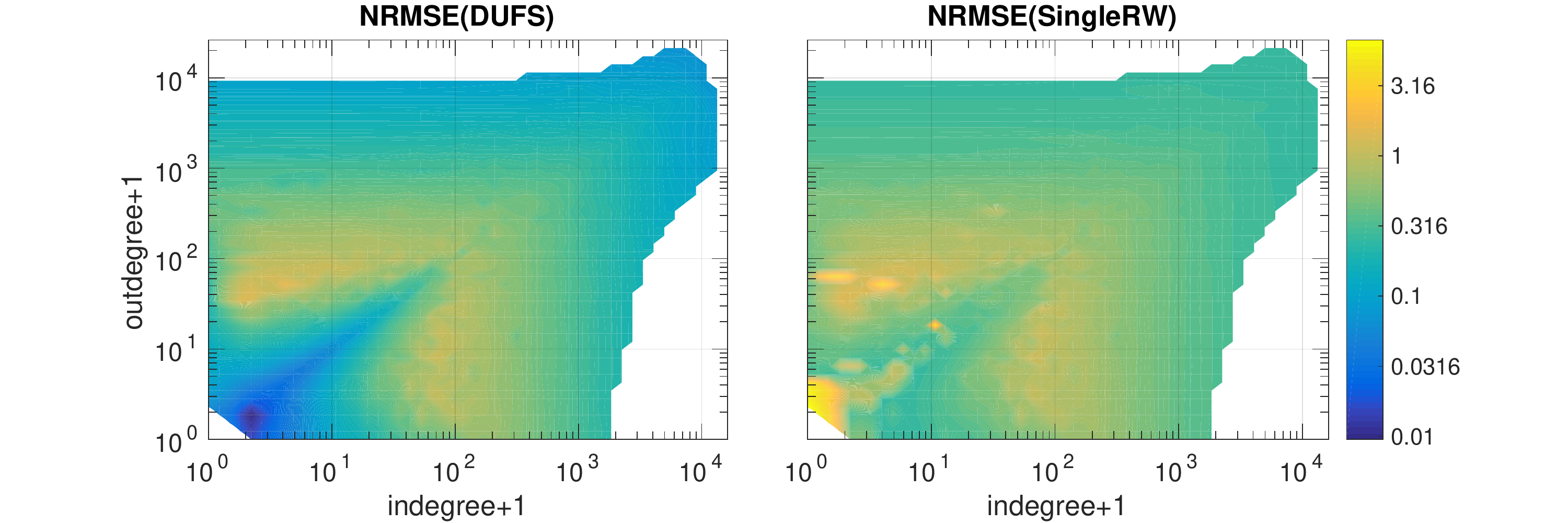}
    }\\
    \subfloat[youtube-links\label{fig:youtube-links-joint}]{%
      \includegraphics[width=\textwidth]{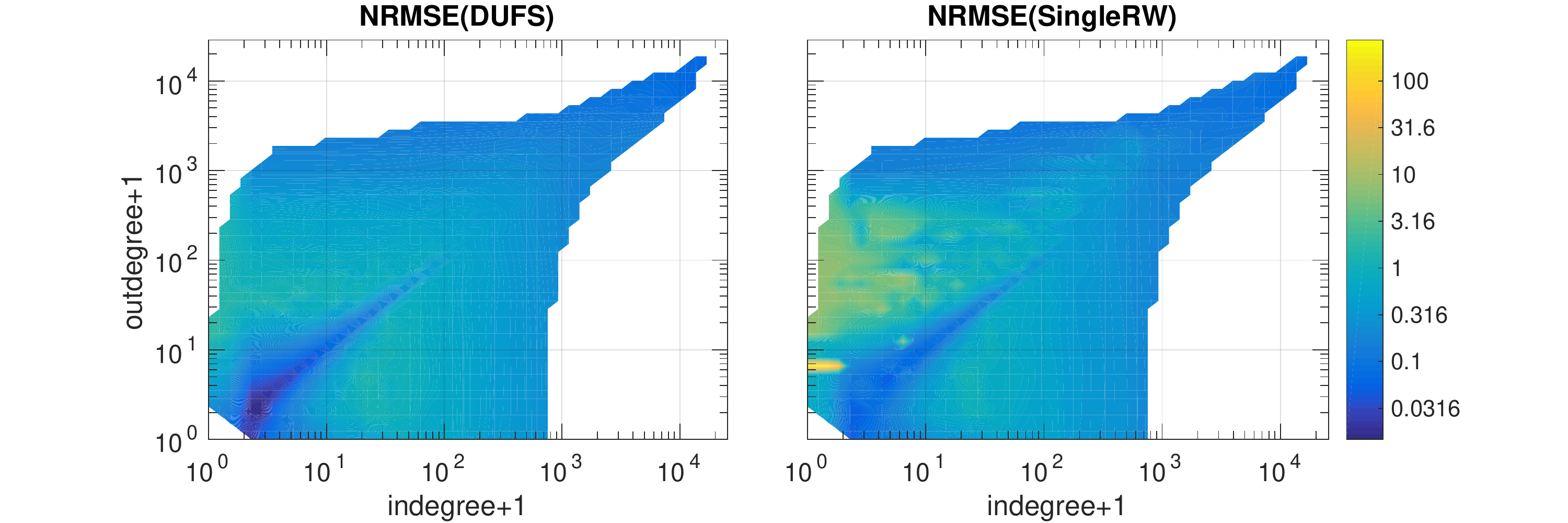}
    }\\
    \subfloat[web-Google\label{fig:web-Google-joint}]{%
      \includegraphics[width=\textwidth]{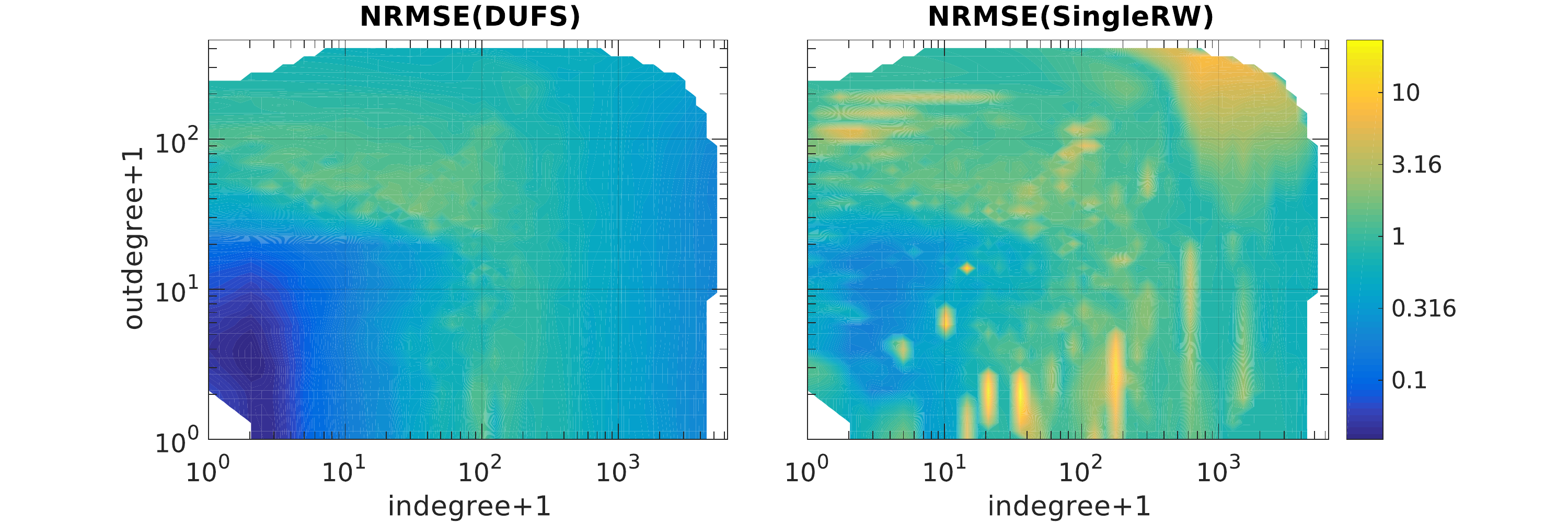}
    }
    \caption{Comparison between DUFS and SingleRW w.r.t. NRMSE when estimating the joint in- and out-degree distribution.
    In most cases SingleRW will exhibit ``hot spots'' (regions with large NRMSE), which are mitigated by DUFS.}
    \label{fig:jnt-mle-vs-rwsw}
 \end{figure}
 
We compare the NRMSEs associated with DUFS and SingleRW
for the estimates of the joint in- and out-degree distribution. We observe that DUFS consistently outperforms SingleRW on all datasets.
On 10 out of 15 datasets, the estimates corresponding to low in-degree and low out-degree exhibit much smaller errors when using DUFS than when using
SingleRW. Furthermore, DUFS also achieves smaller estimation errors for most of the remaining points of the joint distribution in 11 out of 15 datasets.
Figures~\ref{fig:jnt-mle-vs-rwsw}(a-b) show heatmaps corresponding to typical NRMSE results for DUFS and SingleRW.
Interestingly, we note that on the web graph datasets and on the email-EuAll dataset, DUFS outperforms SingleRW by one or two orders of magnitude,
as illustrated by Figure~\ref{fig:jnt-mle-vs-rwsw}(c), which shows the heatmap comparison for dataset web-Google. Although the NRMSE exhibited by SingleRW applied to the LCC datasets is much smaller, the comparison between DUFS and SingleRW is qualitatively similar
and is, therefore, omitted.

We then investigated the performance gains obtained by using the hybrid estimator instead of the original estimator.
Figures~\ref{fig:jnt-mle-vs-fs}(a-b) show the ratios between the NRMSEs obtained with DUFS (hybrid) to those obtained with
the E-DUFS (original) for two networks. We chose to use the NRMSE ratio (or equivalently, the root MSE ratio)
to make it easier to visualize the differences.
We observe that DUFS consistently outperforms
E-DUFS on all datasets. More precisely, the error ratio is rarely above one and, for points corresponding to small in- and out-degrees, it often lies below 0.9.
Results on most datasets are similar to that depicted in Figure~\ref{fig:jnt-mle-vs-fs}(a), but results on social networks datasets are closer to that shown in 
Figure~\ref{fig:jnt-mle-vs-fs}(b), where large in- and out-degrees also seem to benefit from the information contained in the walkers' initial locations. Results
for the LCC datasets are qualitatively similar, with accuracy gains from the hybrid estimator slightly larger on these datasets than on the original datasets.
  \begin{figure}[]
  \centering
    \subfloat[p2p-Gnutella31\label{fig:p2p-Gnutella31-joint}]{%
      \includegraphics[width=0.45\textwidth]{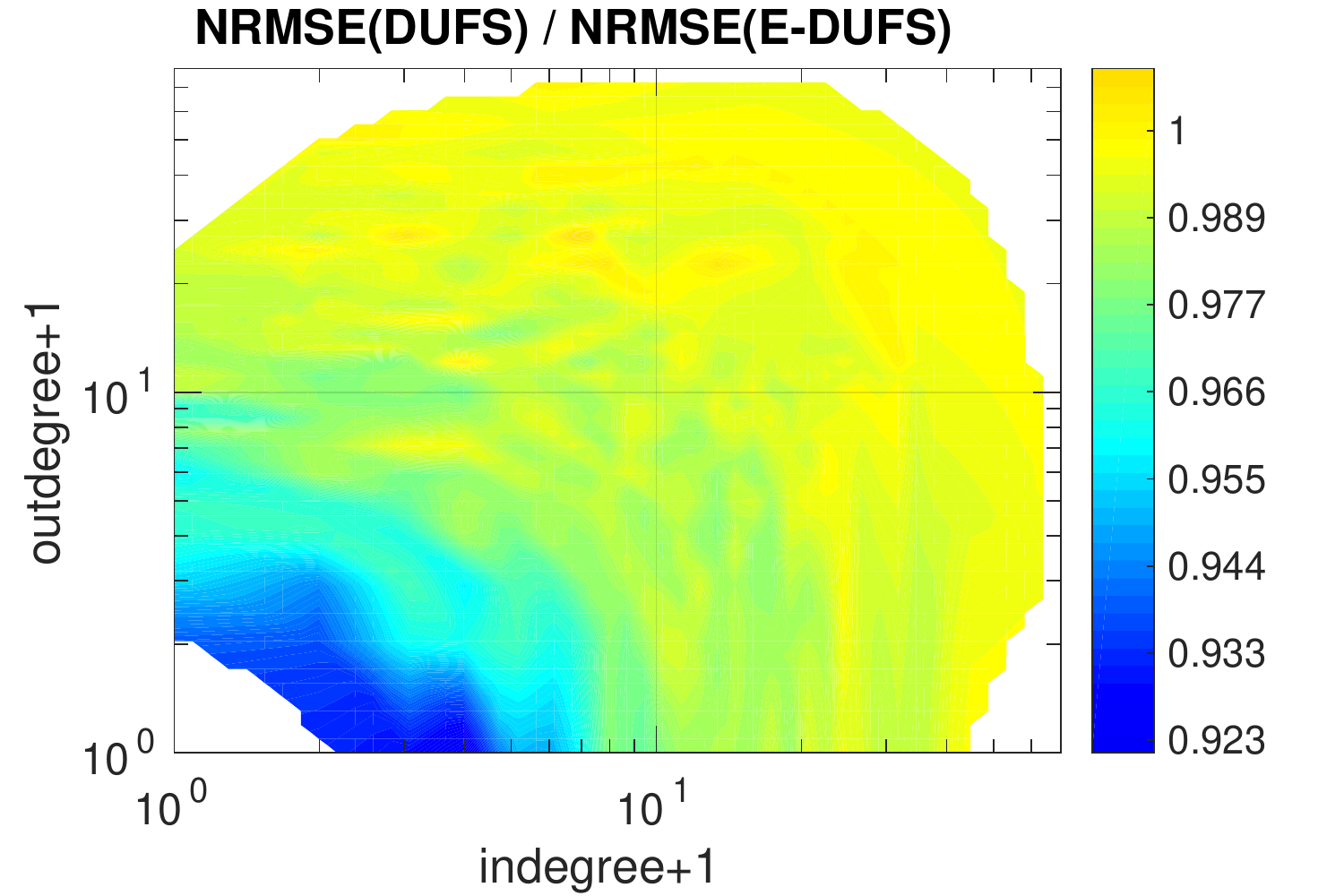}
      } \subfloat[soc-Slashdot0902\label{fig:soc-Slashdot0902-joint}]{%
      \includegraphics[width=0.45\textwidth]{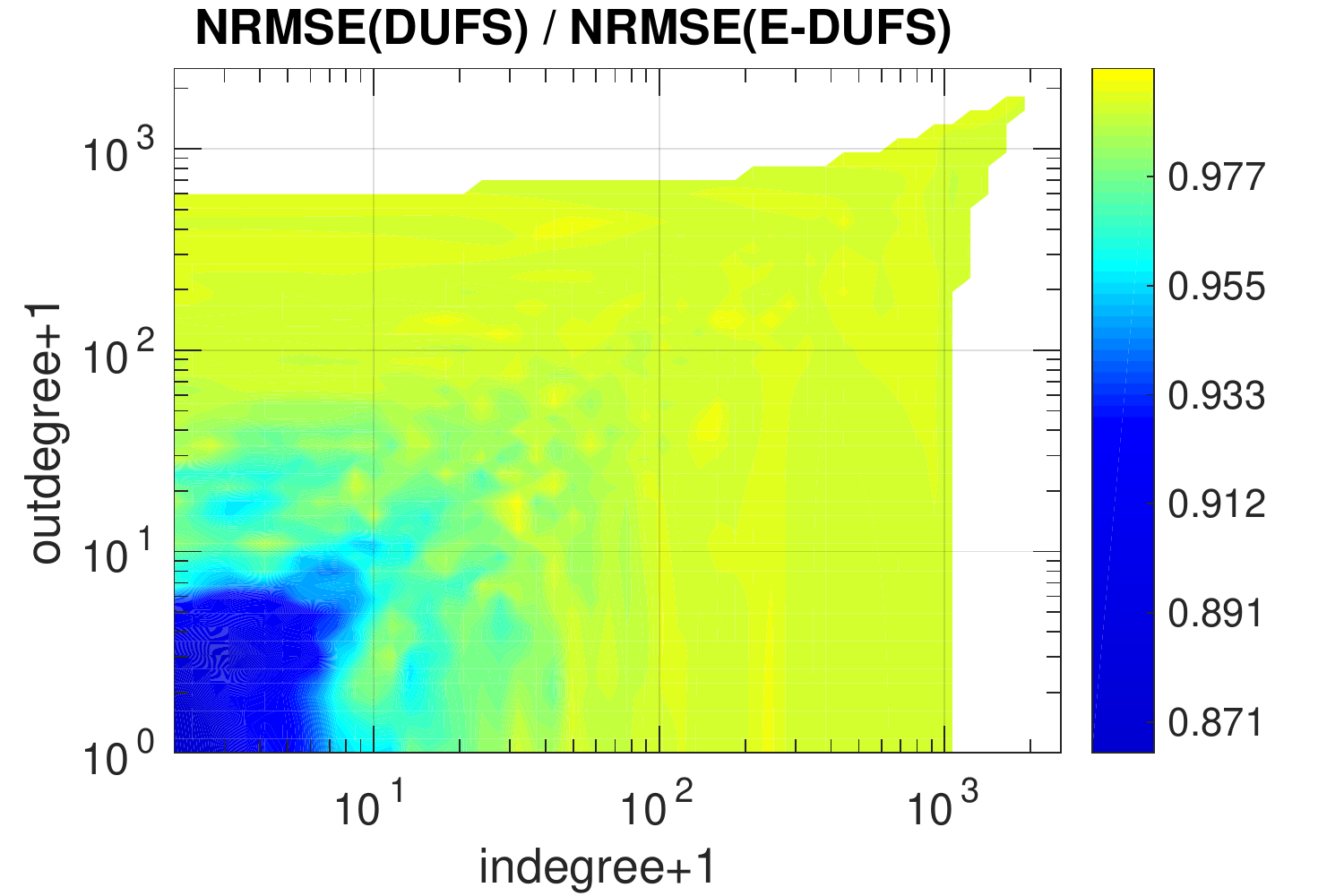}
    }
     \caption{NRMSE ratios between DUFS and E-DUFS of the estimated joint in- and out-degree distribution for two datasets.
     DUFS is typically better than E-DUFS at low in and out-degree regions ({\bf left}), but in social network graphs presented
     improvements over most of the joint distribution ({\bf right}).}
    \label{fig:jnt-mle-vs-fs}
 \end{figure}
 

\subsection{Evaluation of DUFS in the invisible in-edges scenario}\label{sec:dresults}

In this section, we compare the NRMSEs associated with DUFS and Directed Unbiased Random Walk (DURW) method when
estimating out-degree distributions in the case where in-edges are not directly observable.
We note that DURW is known to outperform a reference method for this scenario proposed in~\cite{Yossef2008}.
For a comparison between DURW and this reference method, please refer to~\cite{RibeiroINFOCOM2012}.

As we mentioned in Section~\ref{sec:parameters}, DURW results are similar to those obtained with DUFS when
the budget per walker $b$ is large, since DURW is a special case of DUFS where $b=B-c$. Therefore, we focus
on comparing DUFS for small values of $b$ and DURW, when the total number of uniform
node samples collected by each method is roughly the same. More precisely, we simulate DUFS for $b=10$
and $w=1$ and set the DURW parameter $w$ so that the number of node samples
differs by at most 1\% (averaged over 1000 runs). This aims to provide a fair comparison between these methods.


We find that neither of the two methods consistently outperforms the other over all datasets.
The extra random jumps performed
by DURW will prevent the walker from spending much of the budget in small
volume components. As a result, DURW tends to exhibit larger errors in the
head but smaller errors in the tail of the out-degree distribution than DUFS.
Figure~\ref{fig:dufs-vs-durw} show typical results for $w=1$ and $b=10$.
DUFS exhibited lower estimation errors in the head of the distribution on 11 datasets,
being outperformed by DURW on one dataset and displaying comparable performance on the others. 
In 6 out of 15 datasets, DURW had better performance in the tail,
while DUFS yielded better results on other five datasets. 
Results for $w=1$ and $b \in \{10^2, 10^3\}$ are similar and are,
therefore, omitted. As $b$ increases, differences between DUFS and
DURW start to vanish.
  \begin{figure}[]
  \centering
     \includegraphics[width=0.98\textwidth]{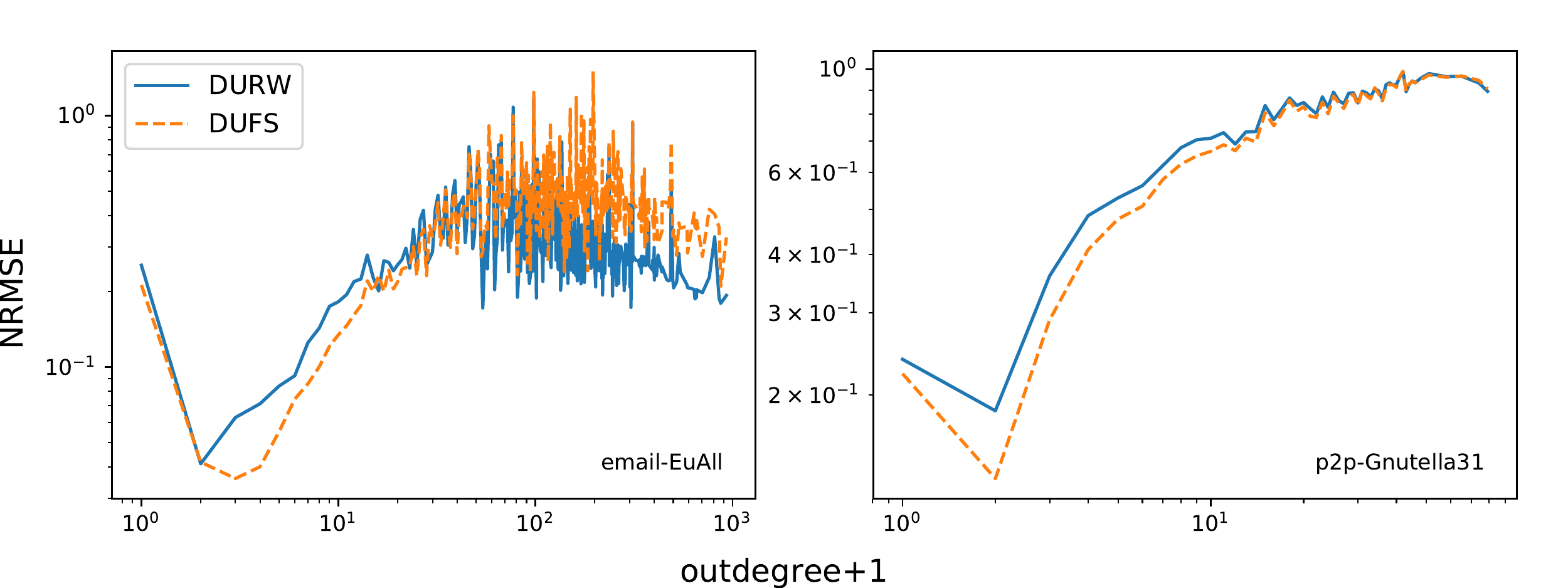}
     \caption{NRMSEs associated with DUFS ($w=1$, $b=10$) and DURW ($w^\prime$ chosen to match average number of node samples)
     when estimating out-degree distribution. DURW performs more random jumps, thus better avoiding
     small volume components. This improves DURW results in the tail, but often results in lower accuracy in the head ({\bf left}).
     In one third of the datasets, DUFS yielded similar or better results than DURW
     over most out-degree points ({\bf right}).}
    \label{fig:dufs-vs-durw}
 \end{figure}


  \begin{figure}[ht!]
  \centering
        \includegraphics[width=0.98\textwidth]{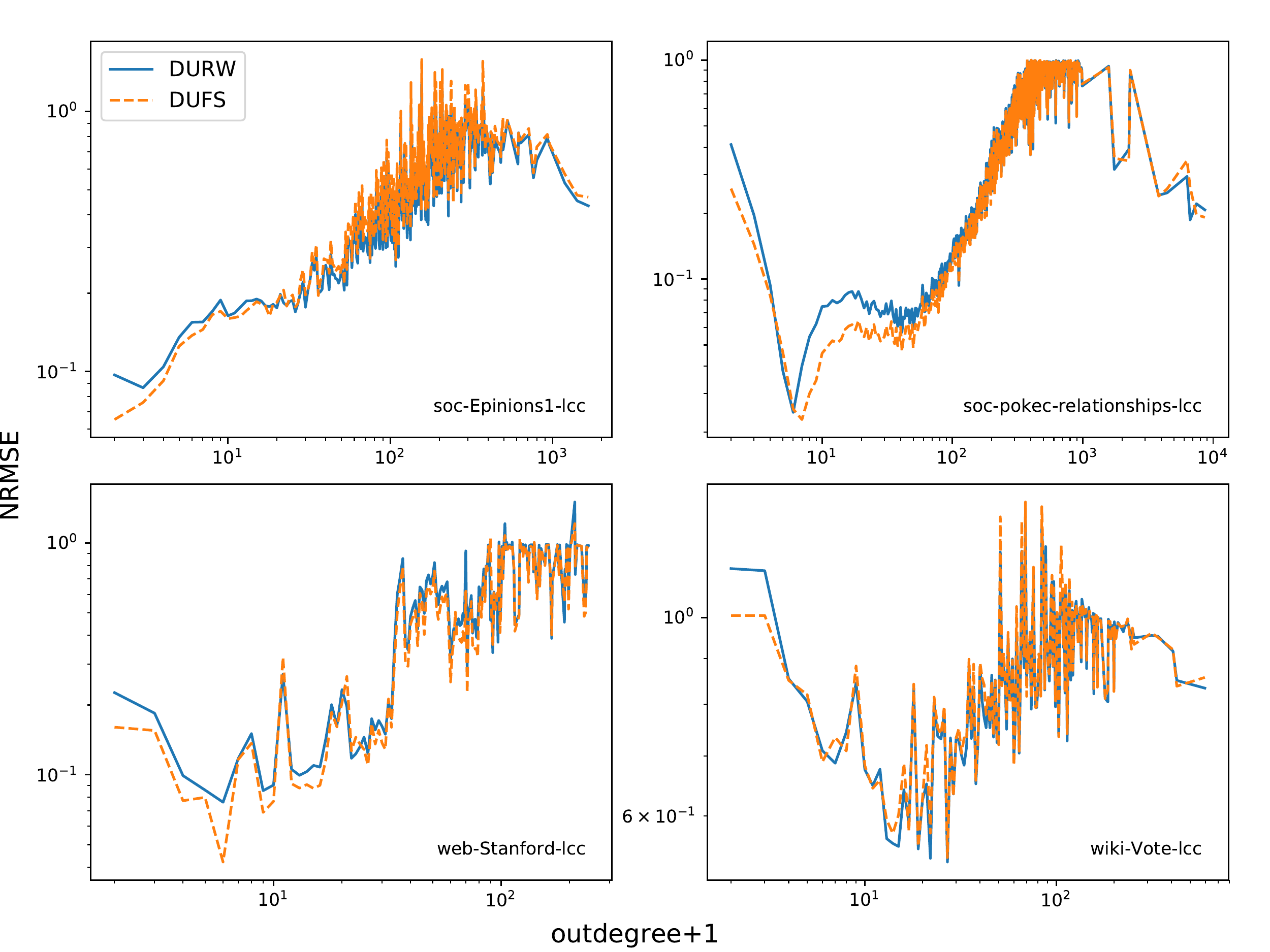}
     \caption{NRMSEs associated with DUFS ($w=1$, $b=10$) and DURW ($w^\prime$ chosen to match average number of node samples)
     when estimating out-degree distribution.}
    \label{fig:dufs-vs-durw-lcc}
 \end{figure}

To better understand the impact of multiple connected components in DUFS and DURW performances, we simulate each method
 on the largest strongly connected component of each dataset
(i.e., on the LCC datasets). Figure~\ref{fig:dufs-vs-durw-lcc} shows typical results among the LCC datasets.
In most networks, DUFS yields smaller NRMSE than DURW in the head and yield similar results in the tail.
Once again, for larger $b$ the performances of DUFS and DURW become equivalent.

 \subsection{Relationship between NRMSE and out-degree distribution}\label{dufs:behavior}

Throughout Section~\ref{sec:results} we observed that the NRMSE associated with
RW-based methods tends to increase with out-degree up to a certain out-degree and to
decrease after that. Moreover, for some out-degree ranges the $\log $ NRMSE
seems to vary linearly with the $\log$ out-degree. 
Figure~\ref{fig:grid}).  For simplicity, we discuss the undirected graph case,
but the extension to directed graphs is straightforward.  The RW methods
discussed here combine uniform node sampling with RW sampling approximated as
uniform edge sampling. For simplicity, we analyze below the accuracy of uniform
node and uniform edge sampling. We assume that each sampled edge produces one observation, obtained by retrieving the set of labels associated
with one of the adjacent vertices chosen equiprobably. Therefore both node
sampling and edge sampling collect node labels. 

Let
$\bbS = \{s_1,\ldots,s_B\}$ be the sequence of sampled vertices.
For uniform node sampling, the probability of observing a given label $\ell$ in $\cL(s_i)$ is $\theta_\ell$, for any $i=1,\ldots,B$.
The minimum variance unbiased estimator of $\theta_\ell$ is
\begin{equation}\label{eq:uvs_estimator}
 T_\textrm{vs}^\ell(\bbS) = \frac{1}{B} \sum_{i=1}^B {\mathds{1}}\{\ell \in \cL(s_i)\}.
\end{equation}
Note that the summation in~\eqref{eq:uvs_estimator} is binomially distributed with parameters $B$ and $\theta_\ell$.
It follows that the mean squared error (MSE) 
of $T^\ell_\textrm{vs}(\bbS)$ is
\begin{align}
  \MSE(T^\ell_\textrm{vs}(\bbS)) &= E[(T^\ell_\textrm{vs}(\bbS) - \theta_\ell)^2], \nonumber \\ 
   & = \frac{1}{ B } \theta_\ell(1-\theta_\ell) \label{eq:mse_vs}.
 \end{align}
 
 For uniform edge sampling, the probability of observing a given label $\ell \in \cL$ in the sample $\cL(s_i)$ for $i=1,\ldots,B$, equals
 $$\pi_\ell = \frac{\sum_{v \in V} {\mathds{1}}\{\ell \in \cL(v)\} \deg(v)  }{ \sum_{u \in V} \deg(u) }.$$
 In that case, the following
 estimator can be shown to be asymptotically unbiased
 \begin{equation}\label{eq:ues}
 T^\ell_\textrm{es}(\bbS) = \frac{1}{B} \frac{\sum_{k=1}^B {\mathds{1}}\{\ell \in \cL(s_k)\} \deg^{-1}(s_k)}{\sum_{j=1}^B  \deg^{-1}(s_j)}.
\end{equation}

In particular, when node labels are the undirected degrees of each node, the probability of observing a given degree $d$ becomes $\pi_d = d \theta_d / \bar{d}$, where $\bar{d}$ is the average undirected degree. The estimator for $B=1$ reduces to $T^d_\textrm{es}(\cS_1) = {\mathds{1}}\{s_1 = d\}$, which is a random variable distributed according to a Bernoulli with parameter $\pi_d$. As a result, the MSE for $B>1$ independent samples is
\begin{eqnarray} 
  \MSE(T^d_\textrm{es}(\bbS)) & = & \frac{1}{ B } \pi_d(1-\pi_d) = \frac{1}{ B } \frac{d\theta_d}{\bar d}\left(1-\frac{d\theta_d}{\bar d} \right)  \label{eq:mse_es} 
  \end{eqnarray}

  \begin{figure}
\centerline{\includegraphics[width=0.52\textwidth]{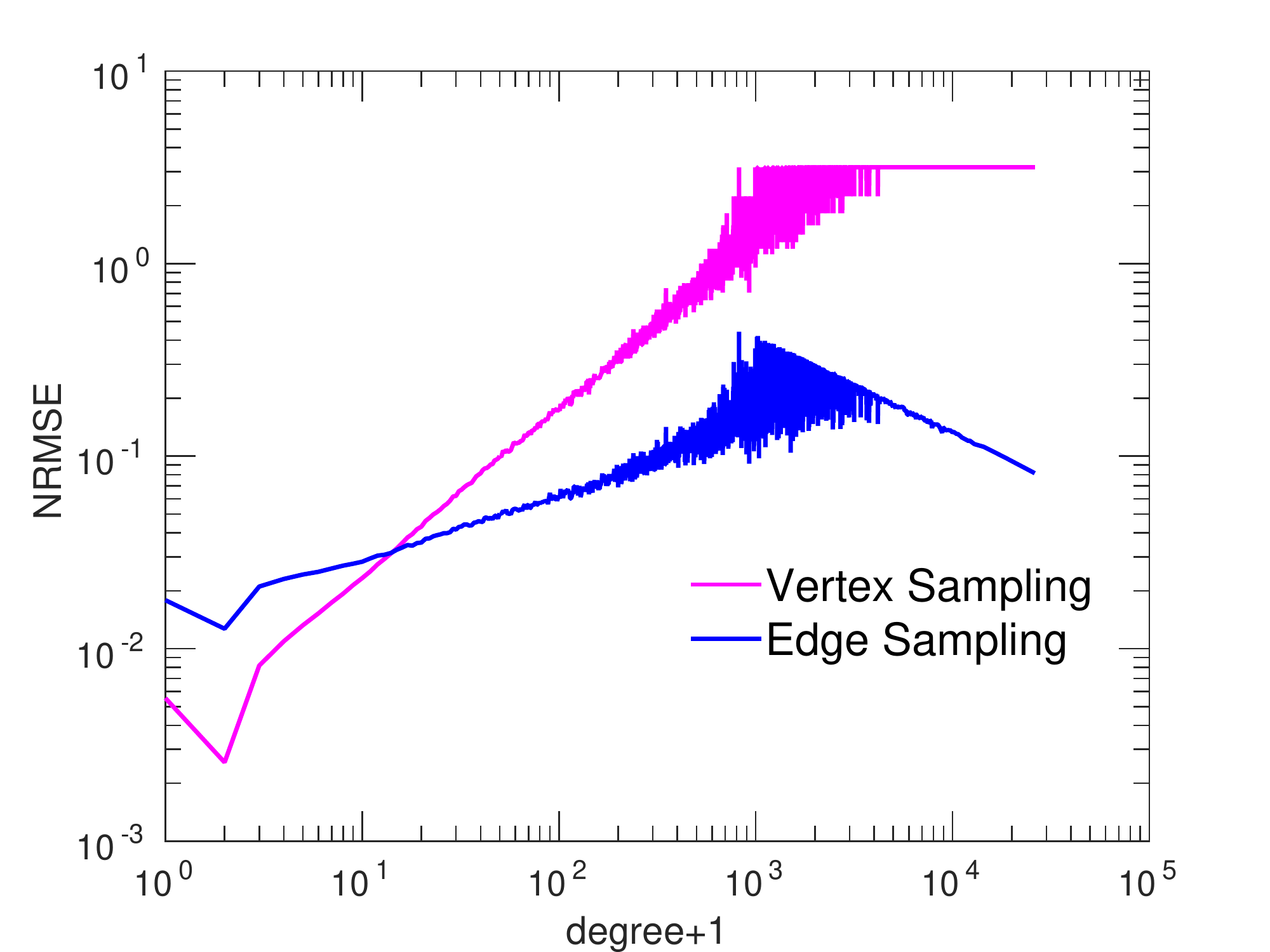}}
\caption{NRMSE associated with uniform node sampling and uniform edge sampling when
estimating degree distribution of the Flickr dataset (for $B=0.1|V|$).}
\label{fig:uniform}
\end{figure}

  Equations~\eqref{eq:mse_vs}~and~\eqref{eq:mse_es} characterize the conditions under which each sampling model is more accurate. More precisely, for all $i$ such that $\theta_d > \pi_d$ (or equivalently, $d < \bar{d}$), uniform node sampling yields better estimates than uniform edge sampling. This dichotomy is illustrated in Figure~\ref{fig:uniform}, which
  shows the NRMSE associated with degree distribution estimates resulting from each sampling model on the flickr-links dataset, for $B = 0.1|V|$.

Note that in log-log scale, both curves resemble a straight line for $d=2,\ldots,10^3$, which indicates a power law. For degrees larger than $5\times 10^3$,
the NRMSE associated with node sampling is constant, while the NRMSE associated with edge sampling decreases linearly with the degree.
We show that these observations are direct consequences of the fact that the degree distribution in this network (as well as many other real networks) approximately follows a power law distribution. However, the degree distribution of a finite network cannot be an exact power law distribution because the tail is
truncated.
As a result, most of the largest degree values are observed exactly once. This can be seen in Figure~\ref{fig:outdegree} by noticing that on the flickr-links (and many other datasets) the p.m.f.\ is constant for the largest out-degrees. Assume, for instance, that the degree distribution can be modeled as
 $$
 \theta_d = \begin{cases}
 d^{-\beta} / Z\,, & 1 \leq d \leq \tau \\
 1/|V|\,, & d > \tau,
 \end{cases}
 $$ for some $\beta \geq 1$ and some normalizing constant $Z$.

 From~\eqref{eq:mse_vs}, we have for uniform node sampling,
 \begin{equation}\label{eq:nrmse_vs}
 \mbox{NRMSE}(T^d_\textrm{vs}(\bbS)) = \sqrt{(1/\theta_d - 1)/B}.
 \end{equation}
 For $\theta_d \ll 1$, this implies
  \begin{equation*}
  \mbox{NRMSE}(T^d_\textrm{vs}(\bbS)) \approx \begin{cases}
 \sqrt{Zd^\beta/B}\,, & 1 \leq d \leq \tau \\
 \sqrt{|V|/B}\,, & d > \tau.
 \end{cases}
  \end{equation*}
  For $d > \tau$, the NRMSE is constant. Otherwise, taking the log on both sides yields
   \begin{equation}\label{eq:uvs}
   \log (\mbox{NRMSE}(T^d_\textrm{vs}(\bbS)))  \approx \frac{\beta}{2}\log d + \frac{1}{2}(\log Z - \log B),\; 1 \leq d \leq \tau,
 \end{equation}
 which explains the relationship observed for uniform node sampling in Fig.~\ref{fig:uniform}.

From~\eqref{eq:mse_es}, we have for uniform edge sampling,
  \begin{equation}\label{eq:nrmse_es}
 \mbox{NRMSE}(T^d_\textrm{es}(\bbS)) = \sqrt{(1/\pi_d - 1)/B}.
 \end{equation}
  For $\theta_d \ll 1$, this implies
  \begin{equation*}
  \mbox{NRMSE}(T^d_\textrm{es}(\bbS)) \approx \begin{cases}
 \sqrt{Z\bar{d}d^{\beta-1}/B}\,, & 1 \leq d \leq \tau \\
 \sqrt{\frac{|E|}{d}/B}\,, & d > \tau.
 \end{cases}
  \end{equation*}
Taking the log on both sides, it follows that
 \begin{equation}\label{eq:ues}
   \log (\mbox{NRMSE}(T^d_\textrm{es}(\bbS)))  \approx \begin{cases}
 \frac{\beta-1}{2}\log d + \frac{1}{2}(\log Z + \log \bar{d} - \log B) \,, & 1 \leq d \leq \tau \\
 -\frac{1}{2} \left( \log d - \log |E| - \log B  \right)\,, & d > \tau,
 \end{cases}
 \end{equation}
 which explains the linear increase followed by the linear decrease observed in Fig.~\ref{fig:uniform}.
 Although some RW-based methods \fm{can collect uniform node samples (e.g., via random jumps)},
 NRMSE trends for large degrees are better described by~\eqref{eq:ues} than by~\eqref{eq:uvs}, since
 most of the information about these degrees comes from RW samples.

 \section{Results on node label distributions estimation}\label{sec:attributes}

This section focuses on network datasets which possess (non-topological) node labels. Using these datasets, all of which represent undirected
networks, we investigate which combinations of DUFS parameters outperform uniform node sampling when estimating node label distributions of the top 10\% largest degree nodes. These nodes often represent the most important objects in a network.

Two of the four undirected attribute-rich datasets we use are social networks (DBLP and LiveJournal) obtained from Stanford SNAP, while two are information networks (DBpedia and Wikipedia) obtained from CMU's Auton Lab GitHub repository \verb|active-search-gp-sopt| ~\cite{Ma:2015ut}. In these datasets, node labels correspond to some type of group membership and a node is allowed to be part of multiple groups
simultaneously. Figure~\ref{fig:att} shows, on the left, the degree distribution for each network. On the right, it displays the relative frequency of each attribute in decreasing order (blue bars/dots) along with attribute frequency among the top 10\% largest degree nodes (red bars/dots).  

  \begin{figure}[ht!]
\centering
      \includegraphics[height=0.65\textheight]{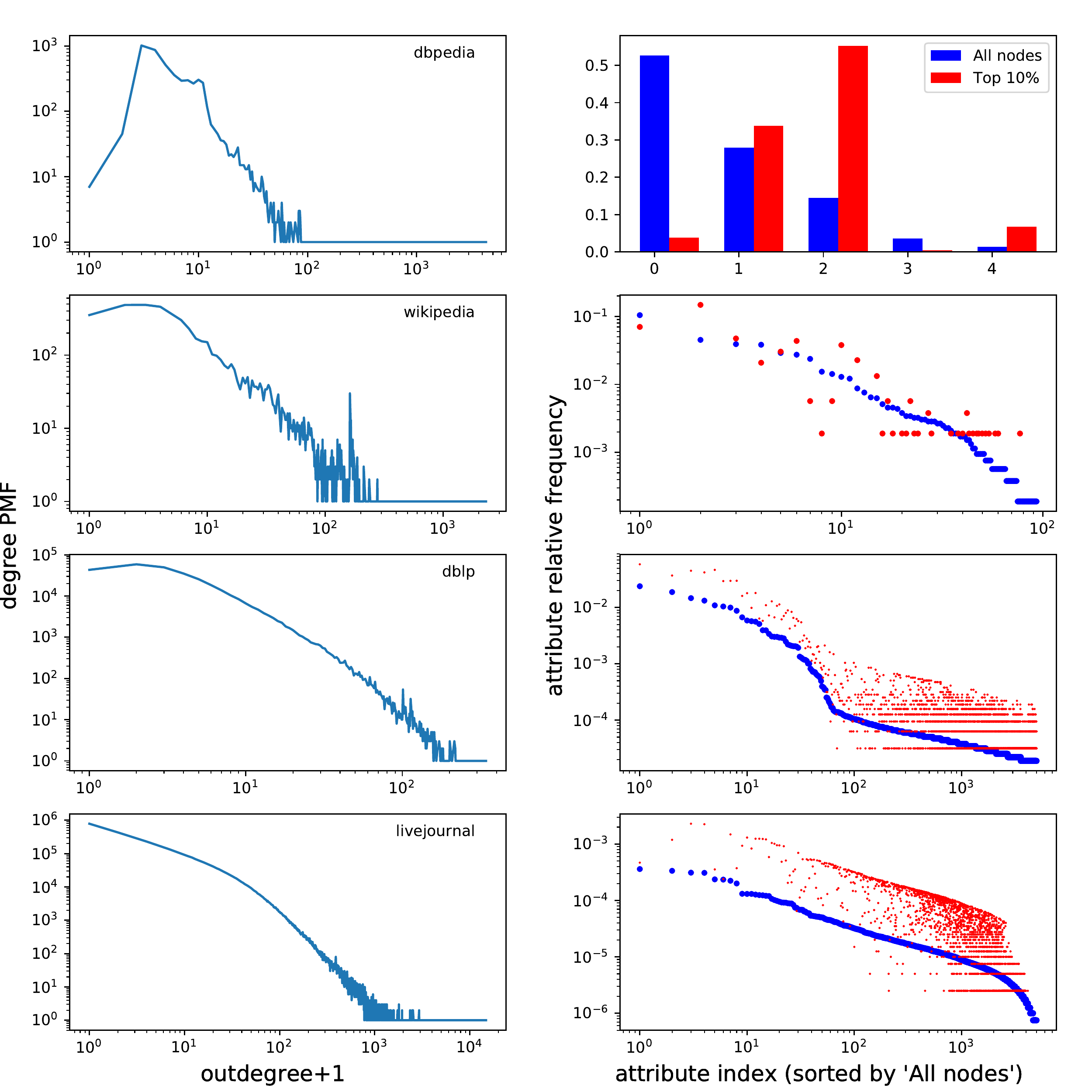}
%
%
    \caption{Degree and node attribute distribution for undirected attribute-rich networks.}
    \label{fig:att}
 \end{figure}

We simulate 1000 DUFS runs on each undirected network for all combinations of random jump weight $w \in \{0.1, 1, 10\}$ and budget per walker $b \in \{1, 10, 10^2\}$. Figure~\ref{fig:att-results} compares the NRMSE associated with DUFS for different parameter combinations against uniform node sampling.
Uniform node sampling results are obtained analytically using eq.~\eqref{eq:nrmse_vs}. On DBpedia, Wikipedia and DBLP, almost all DUFS configurations outperform uniform node sampling. On LiveJournal, node sampling outperforms DUFS for attributes associated with large probability masses, but underperforms DUFS for attributes with smaller masses. In summary, we observe that DUFS with $w \in \{0.1,1.0\}$ and $b \in \{10,10^2\}$ yields superior accuracy than uniform node sampling when estimating node label distributions among the top 10\% largest degree nodes.
 
 \begin{figure}[]
     \subfloat[DBpedia\label{fig:dbpedia_att_results}]{%
      \includegraphics[width=0.46\textwidth]{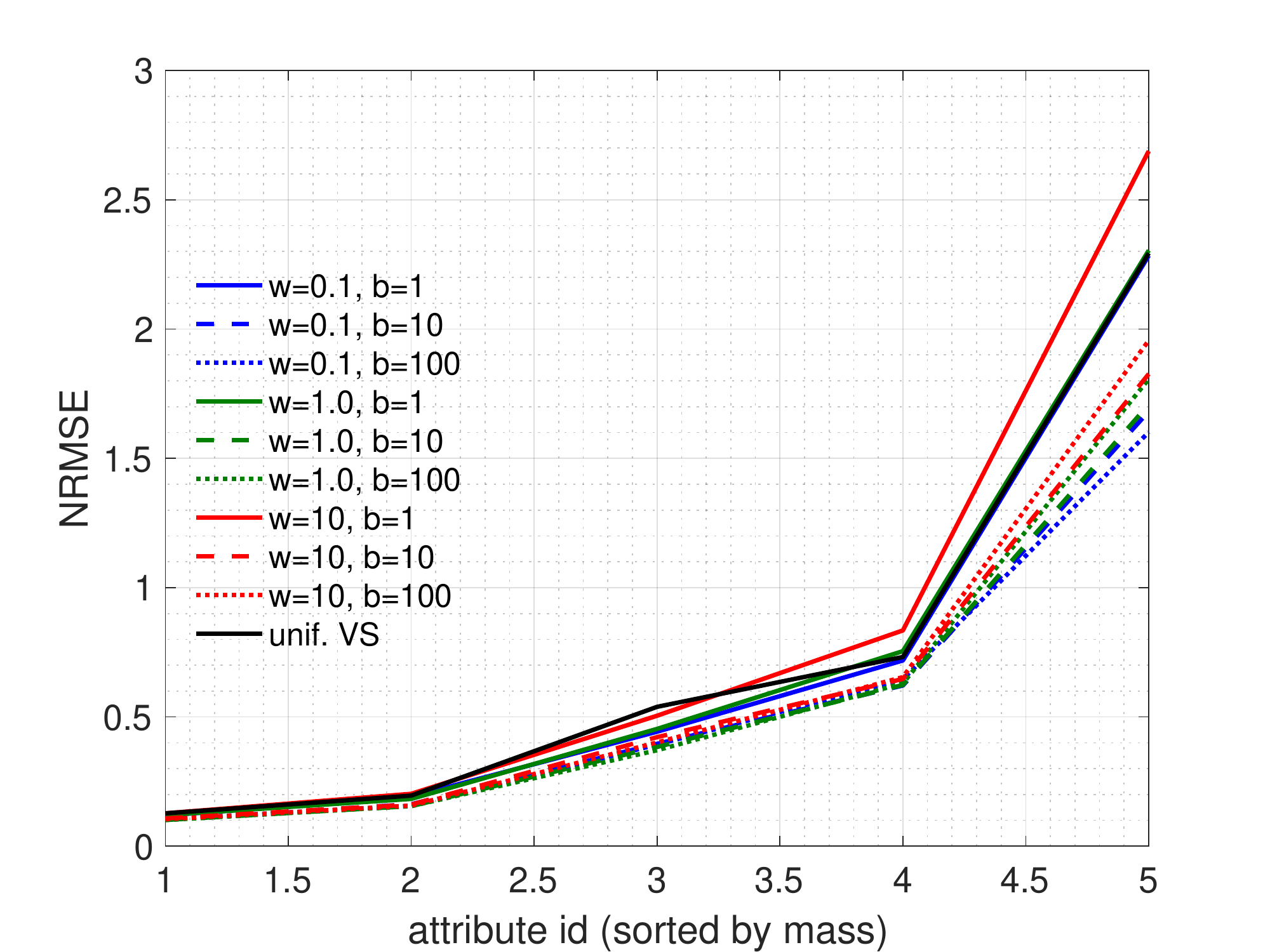}
    }
    \subfloat[Wikipedia\label{fig:wikipedia_att_results}]{%
      \includegraphics[width=0.46\textwidth]{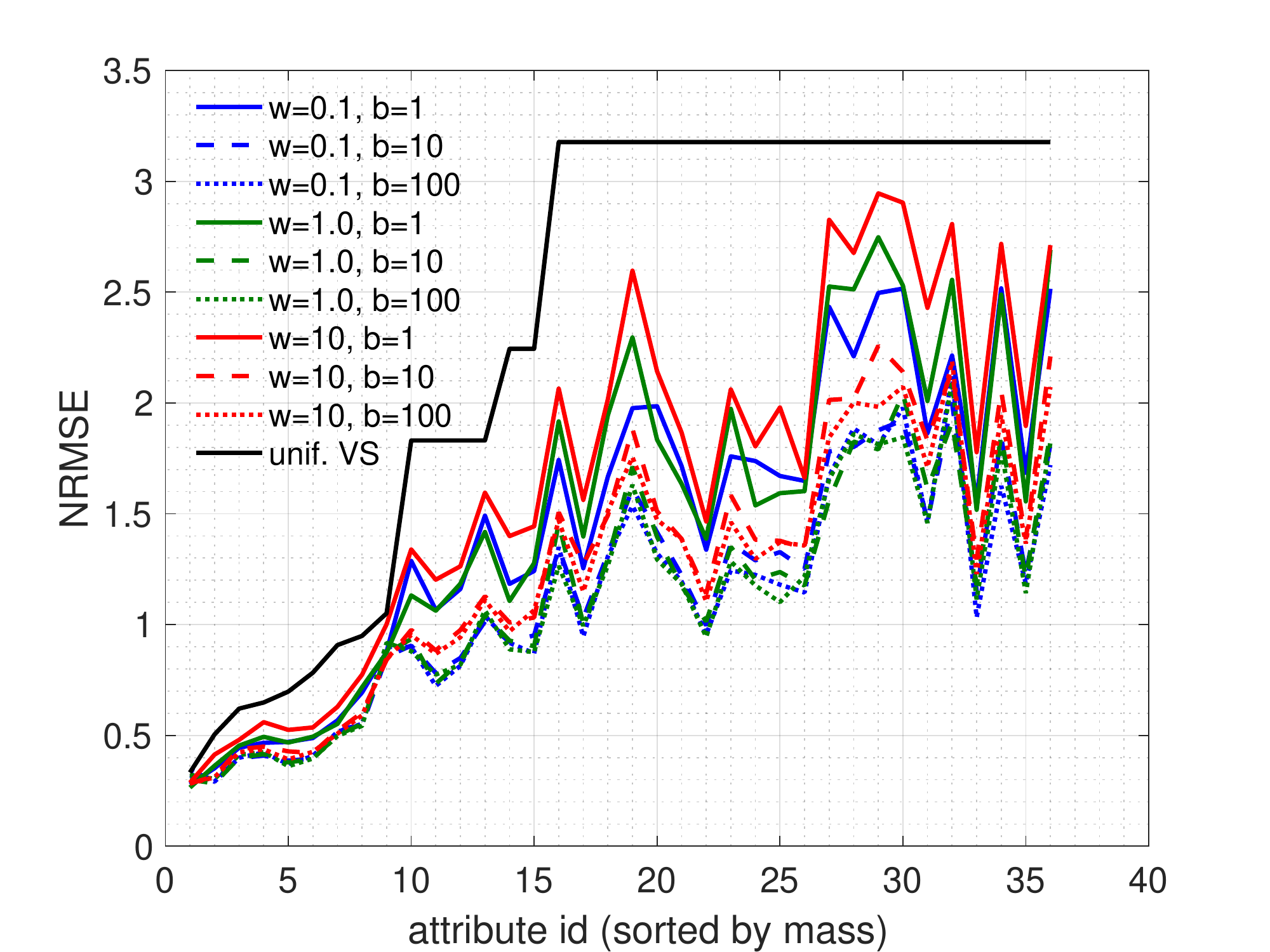}
    }\\
    \subfloat[DBLP\label{fig:dblp_att_results}]{%
      \includegraphics[width=0.46\textwidth]{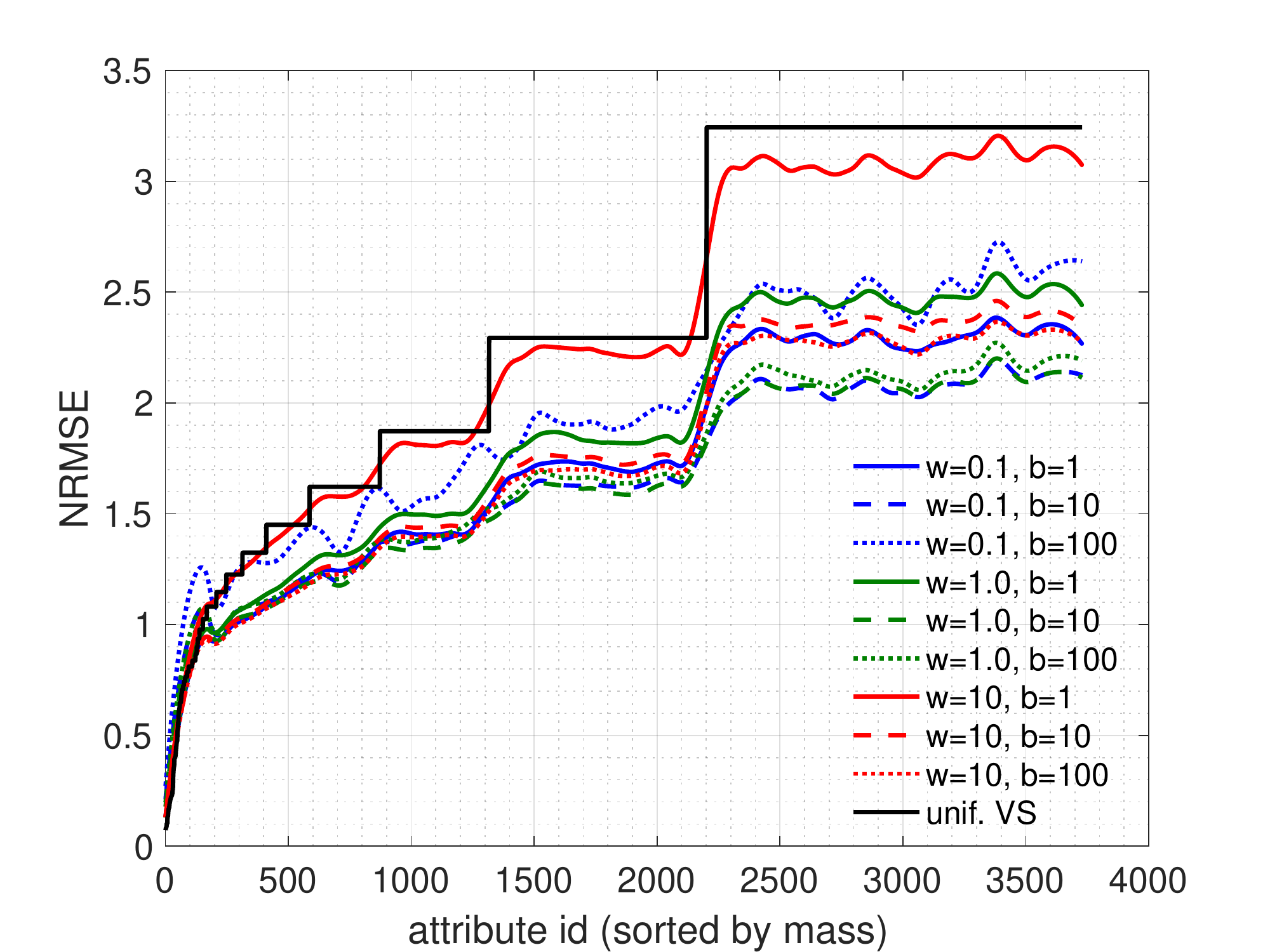}
    }
        \subfloat[LiveJournal\label{fig:lj_att_results}]{%
      \includegraphics[width=0.46\textwidth]{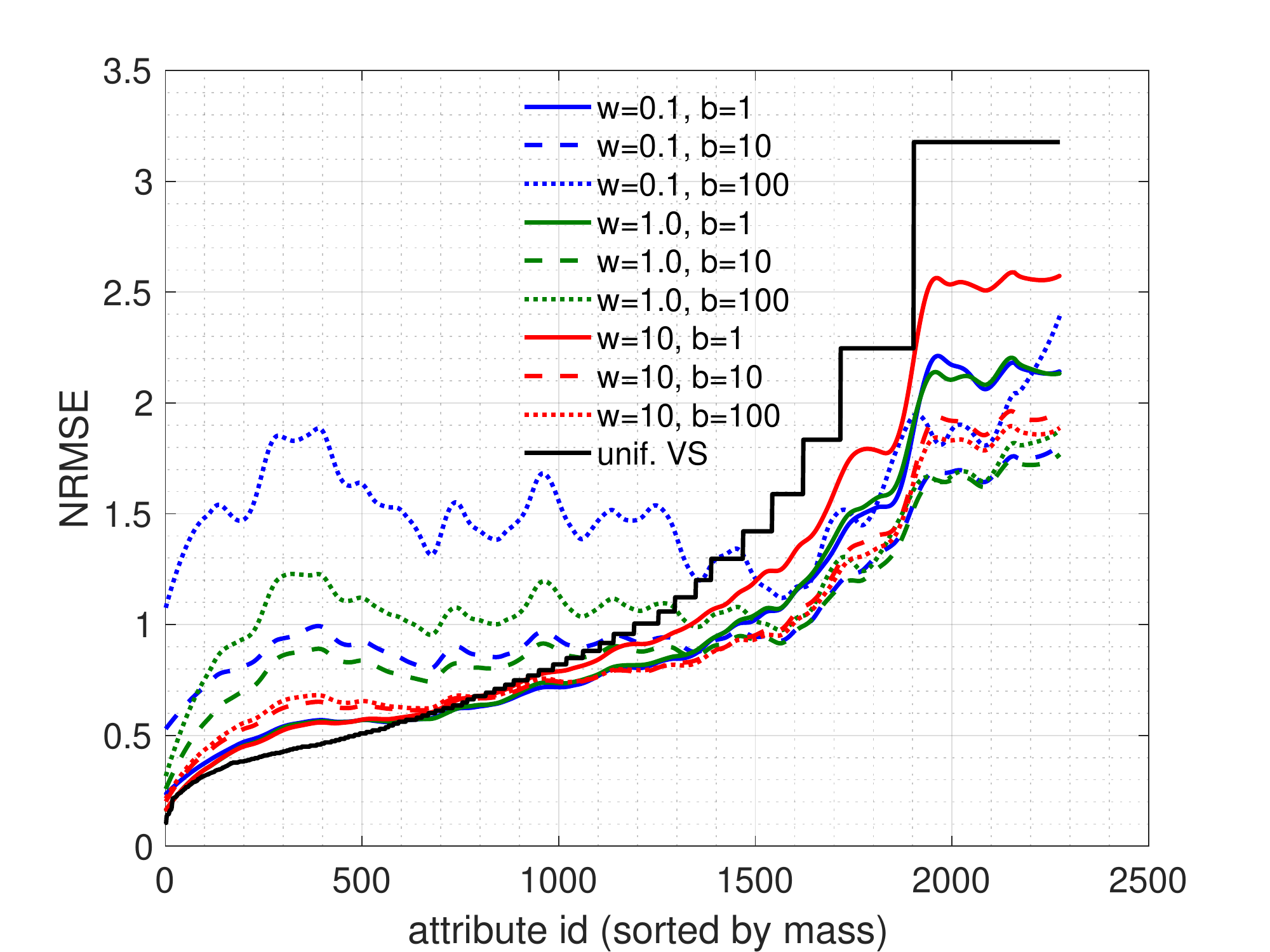}
    }
    \caption{Comparison of hybrid estimator (DUFS) with uniform node sampling. DUFS curves on DBLP plot are smoothed by a local regression using weighted linear least squares and a second degree polynomial model to avoid clutter. DUFS with $w \in \{0.1,1.0\}$ and $b \in \{10,10^2\}$ yields comparable or superior accuracy
    than uniform node sampling.}
    \label{fig:att-results}
 \end{figure}
 
 
  \section{Discussion: DUFS performance in the absence of uniform node sampling}\label{sec:discussion}

In this section, we investigate the estimation accuracy of \{E,H\}-DUFS
when random walkers are {\em not} initialized uniformly over $V$.
We consider two simple non-uniform distributions over $V$ to determine the initial walker locations
walker positions:
\begin{itemize}
  \item Distribution \textsc{Prop}: proportional to the undirected degree, that is,
    \begin{equation}
      P(\textrm{initial walker location is }v) = \frac{\deg(v)}{\sum_{u \in V} \deg(u)};
    \end{equation}

  \item Distribution \textsc{Inv}: proportional to the reciprocal of the undirected degree, that is,
    \begin{equation}
      P(\textrm{initial walker location is }v) = \frac{\deg^{-1}(v)}{\sum_{u \in V} \deg^{1}(u)}.
    \end{equation}

\end{itemize}

We simulate E-DUFS and DUFS on each network dataset setting
 the budget per walker to $b \in \{1,10,10^2, B-1\}$ in a scenario where in-edges
 are visible, performing 100 runs. Note that $b=B-1$ corresponds to the case of a single random walker.
 Since we assume uniform node sampling (VS) is not available,
we must set the random jump weight to $w=0$. We include, however,
results obtained when the initial walker locations are determined via VS for comparison.
Figure~\ref{fig:nonuvs_e}
shows typical values of NRMSE associated with E-DUFS out-degree distribution
estimates. We observe that NRMSE decreases with the budget per walker until $b=10^2$, both for \textsc{Prop} and \textsc{Inv}.
Simulations for the case of a single walker ($b=B-1$) yielded poor results and are omitted.

\begin{figure}[]
      \includegraphics[width=0.98\textwidth]{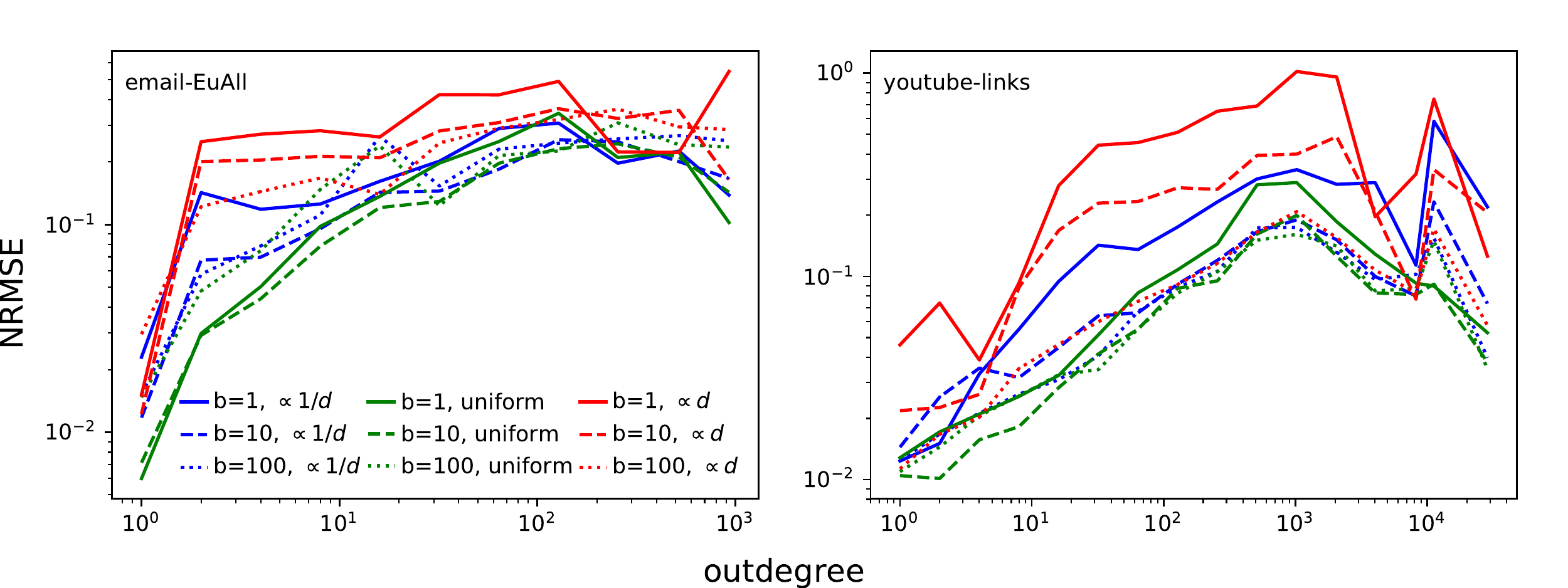}
    \caption{Effect of initializing walkers non-uniformly over $V$ on E-DUFS
    accuracy. NRMSE decreases with budget per walker until $b=10^2$.}
    \label{fig:nonuvs_e}
 \end{figure}

Intuitively, using the hybrid estimator when the initial walker
locations come from some non-uniform distribution can incur unknown -- and
potentially large -- biases. We conducted a set of simulations with DUFS, which
corroborated this intuition. These results are omitted for conciseness.

In summary, our results indicate that when the initial walker locations are determined according to some unknown distribution,
a practitioner should use E-DUFS with moderately large $b$ (e.g., $10^2$). 




\section{Related Work}\label{sec:related}

\textbf{Crawling methods for exploring undirected graphs}: A number of papers investigate crawling methods (e.g., breadth-first search, random walks, etc.) for generating subgraphs with similar topological properties as the underlying network~\cite{Leskovec2006,ICDM_Hubler08}. On the other hand, \cite{KDD_Maiya11} empirically investigates the performance of such methods w.r.t.\ specific measures of representativeness that can be useful in the context of specific applications (e.g., finding high-degree nodes for outbreak detection). However, these works
focus on techniques that yield biased samples of the network and do not possess any accuracy guarantees. \cite{Achlioptas2005,KurantJSAC2011} demonstrate that Breadth-First-Search (BFS) introduces a large bias towards high degree nodes, and that is difficult to remove these biases in general, although they can be reduced if the network in question is almost random~\cite{KurantJSAC2011}.
Random walk (RW) is  biased to sample high degree nodes, however its bias is known and can be easily corrected~\cite{TechReport}.
Random walks in the form of Respondent Driven Sampling (RDS)~\cite{Heckathorn2002,Salganik2004} have been used to estimate population densities using snowball samples of sociological studies.
The Metropolis-Hasting RW (MHRW)~\cite{Stutzbach2009} modifies the RW procedure \fm{to adjust for degree bias, in order to obtain uniform node samples.}
\cite{RibeiroCDC12,ChierichettiWWW2016} analytically prove that MHRW degree distribution estimates perform poorly in comparison to RWs.
Empirically, the accuracy of RW and MHRW has been compared in~\cite{WillingerRDS,Gjoka2010} and, as predicted by the theoretical results, RW is consistently more accurate than MHRW.


Reducing the mixing time of a regular RW is one way of improving the performance of RW based crawling methods.
\cite{AvrachenkovRestarts} proves that random jumps increase the spectral gap of the random walk, which in turn,
 leads to faster convergence to the steady state distribution.
\cite{SIGM_Kurant11} assigns weights to nodes that are computed using their neighborhood information,
and develop a weighted RW-based method to perform stratified sampling on social networks.
They conduct experiments on Facebook and show that their stratified sampling technique achieves higher estimation accuracy than other methods. 
However, the neighborhood information in their method is limited to helping find random walk weights and is not used in the estimation of graph statistics of interest.
To solve this problem,
\cite{Dasgupta2012} randomly samples nodes (either uniformly or with a known bias) and then uses neighborhood information to improve its unbiased estimator.
\cite{GautamICDE2013} modifies the regular random walk by ``rewiring" the network of interest on-the-fly in order to reduce the mixing time of the walk.
 
\textbf{Crawling methods for exploring directed graphs:}
Estimating observable characteristics by sampling a directed graph (in this case, the Web graph) has been the subject of \cite{Yossef2008}
and~\cite{Henzinger2000},
which transform the directed graph of web-links into an undirected graph by adding reverse links, and then use a MHRW to sample webpages uniformly.
The DURW method proposed in~\cite{RibeiroINFOCOM2012} adapts the ``backward edge traversal" of~\cite{Yossef2008} to work with a pure random walk and random jumps. 
Both of these Metropolis-Hastings RWs (\cite{Yossef2008} and~\cite{Henzinger2000}) are designed to sample directed graphs and do not allow random jumps.
However, the ability to perform random jumps (even if jumps are rare) makes DURW and DUFS more efficient and accurate than the MetropolisHastings RW algorithm. Random walks with PageRank-style jumps are used in~\cite{Leskovec2006} to sample large graphs. 
In~\cite{Leskovec2006}, however, no technique is proposed to remove the large biases induced by the random walk and the random jumps, which makes this method unfit for estimation purposes. More recently, another method based on PageRank was proposed in~\cite{DNM}, but it assumes that obtaining uniform
node samples is not feasible. In the presence of multiple strongly connected components, this method offers no accuracy guarantees.

In the last decade, there has been a growing interest in graph sketching for processing massive networks.
A sketch is a compact representation of data. Unlike a sample, a sketch is computed over the entire graph,
that is observed as a data stream. For a survey on graph sketching techniques, please refer to~\cite{mcgregor2014graph}.


 

\section{Conclusion}\label{sec:conclusions}

In this paper, we proposed the Directed Unbiased Frontier Sampling (DUFS) method for
characterizing networks. DUFS generalizes the Frontier Sampling (FS) and
the Directed Unbiased Random Walk (DURW) methods. DUFS extends FS to
make it applicable to directed networks when incoming edges are not directly observable
by building on ideas from DURW. DUFS adapts DURW to use multiple coordinated walkers. Like DURW, DUFS
can also be applied to undirected networks without any modification.

We also proposed a novel estimator for node label distribution that can account
for FS and DUFS walkers initial locations -- or more generally, uniform node samples -- and a heuristic that
can reduce the variance incurred by node samples that happen to
sample nodes whose labels have extremely low probability masses.
When the proposed estimator is used in combination with the heuristic, we showed
that estimation errors can be significantly reduced in the distribution
head when compared with the estimator proposed in~\cite{TechReport},
regardless of whether we are estimating out-degree, in-degree or
joint in- and out-degree distributions.

We conducted an empirical study on the impact of DUFS parameters (namely, budget per walker
and random jump weight) on the estimation of out-degree and in-degree distributions using a large variety of datasets.
We considered four scenarios, corresponding to whether incoming edges are directly observable or not
and whether uniform node sampling has a similar or larger cost than
moving random walkers on the graph. This study allowed us to provide practical
guidelines on setting DUFS parameters to obtain accurate head estimates
or accurate tail estimates. When the goal is a balance between the two objectives,
intermediate configurations can be chosen.

Last, we compared DUFS with random walk-based methods designed
for undirected and directed networks. In our simulations for the scenario where in-edges are visible,
DUFS yielded much
lower estimation errors than a single random walk or multiple independent
random walks. We also observed that DUFS consistently outperforms
FS due to the random jumps and use of the improved estimator.
In the scenario where in-edges are unobservable, DUFS outperformed DURW when
estimating the probability mass associated with the smallest out-degree values (for equivalent parameter settings).
In addition, more often than not, DUFS slightly outperformed DURW when estimating
the mass associated to the largest out-degrees. In the presence of multiple
strongly connected components, DURW tends to move from small
to largest components more often than DUFS, sometimes exhibiting lower estimation
errors in the distribution tail. However, when restricting the estimation to the largest component,
DUFS outperforms DURW in virtually all datasets used in our simulations.

\appendix
\section*{Appendices}
\setcounter{section}{0}

\section{Hybrid estimator and its statistical properties} \label{app:properties}

Let us recall variables and constants used in the definition of the hybrid estimator:

\begin{tabular}{cl}
$n_{i}$ & number of node samples with label $i$\tabularnewline
$\theta_{i,j}$ & fraction of nodes in $G^{(t)}$ with label $i$ and undirected degree
$j$\tabularnewline
$m_{i,j}$ & number of edge samples with label $i$ and bias $j$\tabularnewline
$m_{i}=\sum_{j}m_{i,j}$ & total number of edge samples with label $i$\tabularnewline
$N=\sum_{i}n_{i}$ & total number of node samples\tabularnewline
$M=\sum_{i}m_{i}$ & total number of edge samples\tabularnewline
$B=N+M$ & total budget\tabularnewline
\end{tabular}

In this appendix, we derive the recursive variant of the hybrid estimator.
From that we derive its non-recursive variant. Next, we show that the
non-recursive variant is asymptotically unbiased.
In the case of undirected networks where the average degree is given, we show that the
resulting hybrid estimator of the undirected degree mass is the minimum variance
unbiased estimator (MVUE).

We approximate random walk samples in DUFS by uniform edge samples from $G_u$. Experience
from previous papers shows us that this approximation works very well in practice.
This yields the following likelihood function
\begin{equation}\label{eq:likelihood}
L(\btheta|\mathbf{n},\mathbf{m})  =  \frac{\prod_{i} \theta_i^{n_{i}} \prod_k (k\theta_{i,k})^{m_{i,k}}}{\left(\sum_{s,t} t\theta_{s,t}\right)^{M}}.
\end{equation}

The key idea in our derivation is that we can bypass the numerical estimation of the
$\theta_{i,j}$'s by noticing that $\theta_{i,j} \propto \theta_i$,
$\theta_{i,j} \propto m_{i,j}$ and $\theta_{i,j} \propto 1/j$.
Hence, the maximum
likelihood estimator of $\theta_{i,j}$ for $j=1,\ldots,Z$ is the
Horvitz-Thompson estimator 
\begin{equation}\label{eq:trick}
\hat \theta_{i,j} =  \frac{\theta_i m_{i,j}}{j\mu_i},
\end{equation}
where $\mu_i =\sum_k m_{i,k}/k$.

Substituting \eqref{eq:trick} in \eqref{eq:likelihood} yields
\begin{equation}
L(\btheta|\mathbf{n},\mathbf{m})  =   \frac{\prod_{i} \theta_i^{n_{i}} \prod_k (\theta_i m_{i,k}/\mu_i)^{m_{i,k}}}{\left(\sum_s \theta_s \sum_z m_{s,z}/\mu_s\right)^{M}}.
\end{equation}

The log-likelihood approximation is then given by
\begin{equation}
\mathcal{L}( \btheta|\mathbf{n},\mathbf{m}) =  -M\log\left(\sum_{s}  \theta_s \sum_z \frac{m_{s,z}}{\mu_s}\right)+\sum_{i}n_{i}\log \theta_i +\sum_k m_{i,k}(\log \theta_i + \log m_{i,k} - \log \mu_i). \label{eq:loglike}
\end{equation}

We rewrite $\theta_i$ as $e^{\beta_i} / \sum_j e^{\beta_j}$ to account for the distribution constraints $\sum_i \theta_i = 1$ and $\theta_i \in [0,1]$. Hence, we have
\begin{equation}
\mathcal{L}( \bbeta|\mathbf{n},\mathbf{m}) =  -M \log\left( \sum_{s} \frac{e^{\beta_s} m_{s}}{\mu_s}\right)+\sum_{i}(n_{i} + m_{i}) \beta_i - N \log\left(\sum_j e^{\beta_j}\right) + C, \label{eq:loglike}
\end{equation}
 where $m_i = \sum_k m_{i,k}$ and $C$ is a constant that does not depend on $\beta$.
 
 The partial derivative w.r.t. $\beta_i$ is given by
 \begin{equation}\label{eq:derivative}
\frac{\partial \mathcal{L}(\bbeta|\mathbf{n},\mathbf{m})}{\partial\beta_{i}} =  -\frac{M e^{\beta_i} m_{i}/ \mu_i}{\sum_{s} e^{\beta_s} m_{s}/ \mu_s } + n_i + m_i  -\frac{Ne^{\beta_i}}{\sum_j e^{\beta_j}}.
 \end{equation}

Setting $\partial \mathcal{L}(\bbeta|\mathbf{n},\mathbf{m})/\partial\beta_{i} = 0$ and substituting back $\theta_i$ yields
\begin{eqnarray}
 \theta_i^\star & = & \frac{n_i + m_i}{N + M \frac{m_{i}/ \mu_i}{\sum_{s} \theta_s^\star m_{s}/ \mu_s }}.
\end{eqnarray}

\begin{thm}
Let $N=cB$
and $M=(1-c)B$, for some $0<c<1$. The estimator
\begin{equation}
\hat{\theta}_{i}=\frac{n_{i}+m_{i}}{N+M\frac{m_{i}}{\mu_{i} \hat{d}}}, \label{eq:new}
\end{equation}
where $\mu_{i}=\sum_{k} m_{i,k}/k$ and $\hat{d}=M/\sum_{i} \mu_i$, is an asymptotically unbiased estimator of $\theta_{i}$.
\end{thm}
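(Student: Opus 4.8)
The plan is to regard $\hat\theta_i$ as a fixed continuous function of the empirical sample counts, obtain almost-sure limits for each count from the strong law of large numbers under the uniform edge-sampling approximation, and then exhibit an algebraic cancellation that forces the limit to equal $\theta_i$; boundedness of $\hat\theta_i$ finally upgrades almost-sure convergence to convergence of the expectation.

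First I would fix the sampling model encoded in the likelihood~\eqref{eq:likelihood}. The $N$ node samples are uniform over $V$, so $n_i\sim\mathrm{Binomial}(N,\theta_i)$ and $E[n_i]=N\theta_i$. A single edge sample has label $i$ and bias $k$ with probability $k\theta_{i,k}/\bar d$, where $\bar d=\sum_{s,t}t\,\theta_{s,t}$ is the true average bias; hence $(m_{i,k})$ is multinomial with $E[m_{i,k}]=Mk\theta_{i,k}/\bar d$. The decisive moment identity is that the $1/k$ weighting defining $\mu_i=\sum_k m_{i,k}/k$ exactly removes the degree bias:
\[
E[\mu_i]=\sum_k\frac{E[m_{i,k}]}{k}=\frac{M}{\bar d}\sum_k\theta_{i,k}=\frac{M\theta_i}{\bar d},\qquad E\Big[\textstyle\sum_i\mu_i\Big]=\frac{M}{\bar d}.
\]

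Writing $\nu_i=\sum_k k\theta_{i,k}$ so that $E[m_i]=M\nu_i/\bar d$, I would then apply the strong law of large numbers as $B\to\infty$ with $N=\alpha B$ and $M=(1-\alpha)B$ (I write the mixing fraction as $\alpha$ in place of the statement's $c$, to avoid clashing with the node-sampling cost). This gives the almost-sure limits $n_i/N\to\theta_i$, $m_i/M\to\nu_i/\bar d$ and $\mu_i/M\to\theta_i/\bar d$; consequently $\hat d=M/\sum_i\mu_i\to\bar d$ and $m_i/\mu_i\to\nu_i/\theta_i$. Dividing numerator and denominator of~\eqref{eq:new} by $B$ and passing to the limit yields
\[
\hat\theta_i\ \longrightarrow\ \frac{\alpha\theta_i+(1-\alpha)\nu_i/\bar d}{\alpha+(1-\alpha)\nu_i/(\theta_i\bar d)}\qquad\text{a.s.}
\]
Multiplying the denominator by $\theta_i$ exposes the common factor $\alpha\theta_i+(1-\alpha)\nu_i/\bar d$ it shares with the numerator, so the ratio collapses and $\hat\theta_i\to\theta_i$ almost surely.

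To conclude asymptotic unbiasedness I would note that the second summand in the denominator of~\eqref{eq:new} is nonnegative, so the denominator is at least $N$ and $\hat\theta_i\le(n_i+m_i)/N\le B/N=1/\alpha$ uniformly in $B$; the bounded convergence theorem then gives $\lim_{B\to\infty}E[\hat\theta_i]=\theta_i$. The main obstacle is precisely that $\hat\theta_i$ is a ratio of \emph{correlated} random counts, so its expectation is not the same function of the individual expectations and the argument cannot proceed by linearity of expectation; it must instead route through almost-sure convergence of the normalized counts and a boundedness argument. The degenerate events $\{m_i=0\}$, on which the estimator reduces to $\hat\theta_i=n_i/N$, carry probability tending to zero whenever $\theta_i,\nu_i>0$ and hence do not affect the limit. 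The clean reason the construction works is the exact identity $E[\mu_i]=M\theta_i/\bar d$, which both makes $\hat d$ consistent and produces the cancellation that leaves exactly $\theta_i$.
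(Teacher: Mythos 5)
Your proof is correct, and it reaches the same algebraic endpoint as the paper's via the same moment identities ($E[n_i]=N\theta_i$, $E[m_{i,k}]=Mk\theta_{i,k}/\bar d$, and the key cancellation $E[\mu_i]=M\theta_i/\bar d$), but the justification mechanism is genuinely different and, in fact, more careful. The paper's proof computes the expectations of $n_i$, $m_i$, $\mu_i$, $m_i/\mu_i$ and $\hat d$ separately and then substitutes these expectations directly into the ratio defining $\hat\theta_i$, implicitly treating the expectation of a ratio of correlated counts as the ratio of their expectations; this step is exactly the obstacle you identify, and the paper never justifies it. You route around it by normalizing all counts by $B$, invoking the strong law of large numbers under the i.i.d.\ uniform-edge-sampling approximation to get almost-sure limits, performing the cancellation $\theta_i\bigl(\alpha+(1-\alpha)\nu_i/(\theta_i\bar d)\bigr)=\alpha\theta_i+(1-\alpha)\nu_i/\bar d$ at the level of the limits, and then upgrading to convergence of the expectation via the uniform bound $\hat\theta_i\le 1/\alpha$ and bounded convergence. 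What your approach buys is a proof that actually closes the gap in the paper's argument; what the paper's version buys is brevity. One small point worth stating explicitly if you write this up: the bound $\hat\theta_i\le(n_i+m_i)/N$ and the reduction to $n_i/N$ on the event $\{m_i=0\}$ both rely on the denominator's second term being interpreted as nonnegative (it is $0/0$ when $m_i=\mu_i=0$), so you should fix the convention $m_i/\mu_i:=0$ there, as you implicitly do.
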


\begin{proof}
In the limit as $B\rightarrow\infty$,
we have
\[
E[n_{i}]=N\theta_{i},\qquad E[m_{i,k}]=M\frac{k\theta_{i,k}}{\sum_{s,l}l\theta_{sl}},\qquad E[m_{i}]=M\frac{\sum_{k}k\theta_{i,k}}{\sum_{s,l}l\theta_{s,l}},
\]
and thus,
\[
E[\mu_{i}]=M\frac{\sum_{k}k\theta_{i,k}/k}{\sum_{s,l}l\theta_{sl}}=M\frac{\theta_{i}}{\sum_{s,l}l\theta_{sl}}\quad\textrm{and}\quad E\left[\frac{m_{i}}{\mu_{i}}\right]=\frac{\sum_{k}k \theta_{i,k}}{\theta_{i}}.
\]
It follows that
\begin{eqnarray*}
\lim_{B\rightarrow\infty}E[\hat{d}] & = & \frac{M}{M\frac{\sum_{i}\theta_{i}}{\sum_{s,l}l\theta_{sl}}} = \sum_{s,l}l\theta_{sl}.
\end{eqnarray*}

Substituting the above in eq.~(\ref{eq:new}), we have
\[
\lim_{B\rightarrow\infty}E[\theta_{i}^{\star}] = \frac{N\theta_{i}+M\frac{\sum_{k}k\theta_{i,k}}{\sum_{s,l}l\theta_{s,l}}}{N+M\frac{\sum_{k}k\theta_{i,k}/\theta_{i}}{\sum_{s,l}l\theta_{s,l}}} = \theta_{i}.
\]
This concludes the proof.\end{proof}

In Section~\ref{sec:hybrid} we mentioned a special case of the previous estimator, where the node label is the undirected degree itself.
We prove that this estimator, denoted by $\hat \theta_i$ 
is the minimum variance unbiased
estimator (MVUE) of $\theta_i$.

\begin{thm} \label{thm:unbiasedness}
The estimator
\begin{equation*}
\bar{\theta}_{i}=\frac{n_{i}+m_{i}}{N+Mi/\bar{\mu}}, 
\end{equation*}
where $\bar\mu = \sum_{j} j\theta_{j}$, is an unbiased estimator of $\theta_{i}$.
\end{thm}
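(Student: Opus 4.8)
The plan is to exploit the fact that, when the node label coincides with the undirected degree, the denominator of $\bar\theta_i$ becomes a deterministic constant, so that unbiasedness reduces to matching the expectation of the numerator term by term. First I would specialize the general notation to this case: since each node's label equals its degree, the fraction of nodes with label $i$ and degree $j$ satisfies $\theta_{i,j}=\theta_i\,\mathds{1}\{i=j\}$, and the analogous collapse holds for the edge counts, so that $m_{i,k}=0$ unless $k=i$. Consequently $m_i=m_{i,i}$ and $\mu_i=\sum_k m_{i,k}/k = m_i/i$, which yields the crucial identity $m_i/\mu_i = i$. Substituting this together with the (given) average degree $\bar\mu=\sum_j j\theta_j$ into the general denominator $N+M\,(m_i/\mu_i)/\hat d$ of eq.~\eqref{eq:new} produces exactly $N+Mi/\bar\mu$, a quantity carrying no randomness once $\bar\mu$ is treated as known.

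Next I would take expectations of the two numerator terms under the two sampling models underlying the likelihood~\eqref{eq:likelihood}. The node samples are drawn uniformly at random, so $n_i$ is binomial with parameters $N$ and $\theta_i$, giving $E[n_i]=N\theta_i$. Approximating the random-walk observations by uniform edge samples of $G_u$ --- the same approximation used to obtain~\eqref{eq:likelihood} --- each observation has degree $i$ with the edge-sampling probability $i\theta_i/\bar\mu$, so $m_i$ is binomial with parameters $M$ and $i\theta_i/\bar\mu$, giving $E[m_i]=Mi\theta_i/\bar\mu$.

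Finally, because the denominator is deterministic, linearity of expectation closes the argument:
\begin{equation*}
E[\bar\theta_i] = \frac{E[n_i]+E[m_i]}{N+Mi/\bar\mu} = \frac{N\theta_i + Mi\theta_i/\bar\mu}{N+Mi/\bar\mu} = \theta_i.
\end{equation*}
I would stress that this is \emph{exact} unbiasedness for every finite budget $B$, a strictly stronger statement than the merely asymptotic unbiasedness of the general estimator in the preceding theorem; the improvement is entirely due to the collapse $m_i/\mu_i=i$, which eliminates both the random ratio $m_i/\mu_i$ and the estimated average degree $\hat d$ from the denominator. The only point requiring care --- and the closest thing to an obstacle --- is justifying $E[m_i]=Mi\theta_i/\bar\mu$, which rests on the uniform-edge-sampling approximation and on treating $\bar\mu$ as a known constant rather than an estimated one; everything else is a one-line computation.
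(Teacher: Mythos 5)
Your proposal is correct and follows essentially the same route as the paper: both arguments rest on $n_i\sim\mathrm{Binomial}(N,\theta_i)$ and $m_i\sim\mathrm{Binomial}(M,i\theta_i/\bar{\mu})$ under the uniform-edge-sampling approximation, together with the observation that the denominator $N+Mi/\bar{\mu}$ is deterministic, so linearity of expectation gives $E[\bar\theta_i]=(N\theta_i+Mi\theta_i/\bar{\mu})/(N+Mi/\bar{\mu})=\theta_i$. The paper merely writes out the expectation as an explicit double sum over the two binomial laws, while you additionally motivate why the denominator collapses via $m_i/\mu_i=i$; the substance is identical.
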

\begin{proof}
We know that $n_i\sim\mathrm{Binomial}(N,\theta_{i})$ and
$m_{i}\sim\mathrm{Binomial}(M,i\theta_{i}/\bar{\mu})$. Hence,
\begin{eqnarray*}
E[\hat{\theta}_{i}] & = & \sum_{n_{i},m_{i}}\frac{n_{i}+m_{i}}{N+Mi/\bar{\mu}}\overbrace{\binom{N}{n_{i}}\theta_{i}^{n_{i}}(1-\theta_{i})^{N-n_{i}}}^{A(n_{i})}\overbrace{\binom{M}{m_{i}}\left(\frac{i\theta_{i}}{\bar{\mu}}\right)^{m_{i}}\left(1-\frac{i\theta_{i}}{\bar{\mu}}\right)^{M-m_{i}}}^{B(m_{i})}\\
 & = & \frac{1}{N+Mi/\bar{\mu}}\left(\sum_{n_{i}}n_{i}A(n_{i})\sum_{m_{i}}B(m_{i})+\sum_{m_{i}}m_{i}B(m_{i})\sum_{n_{i}}A(n_{i})\right)\\
 & = & \frac{1}{N+Mi/\bar{\mu}}\left(\sum_{n_{i}}n_{i}A(n_{i})+\sum_{m_{i}}m_{i}B(m_{i})\right)\\
 & = & \frac{1}{N+Mi/\bar{\mu}}\left(N\theta_{i}+Mi\theta_{i}/\bar{\mu}\right)\\
 & = & \theta_{i}.
\end{eqnarray*}
\end{proof}

Having proved that $\hat \theta_i$ is unbiased, we are now
ready to show that it is also the minimum variance unbiased estimator (MVUE).
In order to do so, we prove Lemmas~\ref{sufficiency} and~\ref{completeness} that show
respectively 
that $n_i+m_i$ is a sufficient and complete statistic of $\theta_i$.

\begin{lemma}\label{sufficiency}
The statistic $n_{i}+m_{i}$ is a sufficient statistic with respect to the likelihood of $\theta_{i}$.
\end{lemma}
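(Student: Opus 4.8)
The plan is to establish sufficiency via the Fisher--Neyman factorization theorem. Recall the sampling model underlying Theorem~\ref{thm:unbiasedness}: in the special case where the label of a node is its undirected degree, $n_i$ is the number of the $N$ uniform node samples having degree $i$ and $m_i$ the number of the $M$ uniform edge samples having degree $i$, with $n_i$ and $m_i$ independent, $n_i \sim \mathrm{Binomial}(N,\theta_i)$ and $m_i \sim \mathrm{Binomial}(M, i\theta_i/\bar\mu)$, where $\bar\mu = \sum_j j\theta_j$ is treated as a known constant. In the operative regime $B \to \infty$ with $\theta_i \ll 1$ --- the same regime in which the estimator is shown to be unbiased, and consistent with the uniform-edge-sampling approximation already adopted in deriving the likelihood --- each count is well approximated by a Poisson variable, $n_i \sim \mathrm{Poisson}(N\theta_i)$ and $m_i \sim \mathrm{Poisson}(Mi\theta_i/\bar\mu)$. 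The crucial structural feature is that both means are \emph{linear} in the single parameter $\theta_i$, so $\theta_i$ enters both counts through a common factor.

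Concretely, I would write the joint likelihood and regroup its factors as
\[
L(\theta_i \mid n_i, m_i) = \frac{(N\theta_i)^{n_i} e^{-N\theta_i}}{n_i!}\, \frac{(Mi\theta_i/\bar\mu)^{m_i} e^{-Mi\theta_i/\bar\mu}}{m_i!} = \underbrace{\frac{N^{n_i}(Mi/\bar\mu)^{m_i}}{n_i!\, m_i!}}_{h(n_i,m_i)} \, \underbrace{\theta_i^{\,n_i+m_i}\, e^{-(N+Mi/\bar\mu)\theta_i}}_{g(n_i+m_i,\,\theta_i)}.
\]
The factor $g$ depends on the data only through the sum $n_i + m_i$, while $h$ does not depend on $\theta_i$. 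By the Fisher--Neyman factorization theorem this exhibits $n_i + m_i$ as a sufficient statistic for $\theta_i$, which is exactly the claim. This same form will feed directly into the completeness argument in Lemma~\ref{completeness}, since $n_i + m_i$ is then a single Poisson count whose family is complete.

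The main obstacle is justifying the passage from the binomial model of Theorem~\ref{thm:unbiasedness} to the Poisson form, because under the \emph{exact} binomials the sum $n_i+m_i$ is not sufficient: the two success probabilities $\theta_i$ and $i\theta_i/\bar\mu$ differ, so viewing the pair as a two-parameter exponential family shows the natural sufficient statistic is $(n_i,m_i)$, and it collapses to the sum only when the natural parameters coincide, i.e.\ when $i=\bar\mu$. Equivalently, the conditional law of $n_i$ given $n_i+m_i$ is a noncentral hypergeometric distribution whose odds ratio $(\bar\mu - i\theta_i)/\big(i(1-\theta_i)\big)$ still involves $\theta_i$. The key point I would stress is that as $\theta_i \to 0$ this odds ratio collapses to the constant $\bar\mu/i$, so the conditional law becomes free of $\theta_i$ and sufficiency holds in the limit, which is precisely the asymptotic sense in which the surrounding theorems are stated.
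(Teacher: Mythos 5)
Your overall strategy (Fisher--Neyman factorization) is the same as the paper's, but the model you factorize is not, and this creates a genuine gap. You work with the marginal pair $(n_i,m_i)$ under the two independent binomials of Theorem~\ref{thm:unbiasedness}, correctly observe that $n_i+m_i$ is \emph{not} exactly sufficient there (the factors $(1-\theta_i)^{N-n_i}$ and $(1-i\theta_i/\bar\mu)^{M-m_i}$ break the factorization unless $i=\bar\mu$), and then rescue the claim by a Poisson approximation valid only as $\theta_i\to 0$. That yields at best an asymptotic, approximate form of sufficiency. But the lemma is used downstream in Theorem~\ref{thm:mvue2} through Lehmann--Scheff\'e, which is an exact finite-sample argument: an ``approximately sufficient and complete'' statistic does not license the MVUE conclusion. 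So, as written, your proof establishes a weaker statement than the one the paper needs.

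The paper avoids the issue by never reducing to the marginal pair. It factorizes the \emph{full} categorical likelihood over all labels,
\[
L(\btheta\mid\mathbf{n},\mathbf{m})
=\frac{\prod_i \theta_i^{n_i}\prod_j (j\theta_j)^{m_j}}{\hat\mu^{M}}
=\frac{\prod_j j^{m_j}}{\hat\mu^{M}}\prod_i \theta_i^{\,n_i+m_i},
\]
with the average degree $\hat\mu$ treated as known. Here the binomial ``failure'' terms that obstruct your factorization are absorbed into $\prod_{j\neq i}\theta_j^{\,n_j+m_j}$ via the constraint $\sum_j\theta_j=1$, so the only factor involving $\theta_i$ is exactly $\theta_i^{\,n_i+m_i}$, which depends on the data only through $n_i+m_i$; Fisher--Neyman then applies with no approximation and no Poissonization. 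Your observation that the conditional law of $n_i$ given $n_i+m_i$ in the reduced two-binomial model is a noncentral hypergeometric whose odds ratio still involves $\theta_i$ is correct and worth keeping --- it exposes a real tension between the model used to prove unbiasedness (independent binomials) and the one used to prove sufficiency (the joint categorical likelihood) --- but the fix is to state and factorize the joint likelihood, not to pass to a Poisson limit.
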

\begin{proof}
The log-likelihood equation for estimator~\eqref{eq:und_theta}
is given by
\begin{eqnarray}
L(\btheta| \mathbf{n},\mathbf{m}) & = & \frac{\prod_i \theta_i^{n_i} \prod_j (j \theta_j)^{m_j}}{\hat \mu^M} \nonumber \\
& = & \frac{\prod_j j ^{m_j}}{\hat \mu^M} \prod_i \theta_i^{n_i+m_i} \label{eq:factorization}. 
\end{eqnarray}

We can see from eq.~\eqref{eq:factorization} that the likelihood function $L(\btheta| \mathbf{n},\mathbf{m})$
 can be factored into a product such that one factor, $\prod_j j^{m_j}/\hat{\mu}^M$, does not depend on $\theta_i$
 and the other factor, which does depend on $\theta_i$, depends on $\mathbf{n}$ and $\mathbf{m}$ only through $n_i+m_i$.
From the Fisher-Neyman factorization Theorem~\cite{lehmann1991theory}, we conclude that 
$n_i+m_i$ is a sufficient statistic for the distribution of the sample.
 \end{proof}
 
We now prove that $n_i+m_i$ is also a complete statistic for the distribution of the sample.

\begin{definition} Let $X$ be a random variable whose probability distribution
belongs to a parametric family of probability distributions $P_{\theta}$
parametrized by $\theta$. The statistic $s$ is said to be complete
for the distribution of $X$ if for every measurable function $g$
(which must be independent of $\theta$) the following implication
holds:
\[
E(g(s(X)))=0\ \textrm{for all }\mbox{\ensuremath{\theta}}\Rightarrow P_{\theta}(g(s(X))=0)=1\ \textrm{for all }\theta.
\]
\end{definition}
 
\begin{lemma} \label{completeness}
The statistic $n_{i}+m_{i}$ is a complete statistic w.r.t.\ the likelihood of $\theta_{i}$.
\end{lemma}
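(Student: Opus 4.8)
The plan is to exploit the same factorization that yielded sufficiency in Lemma~\ref{sufficiency}, now pushing it all the way to the sampling distribution of the statistic itself. Since completeness is inherently a joint property of the family, I would work with the full vector $T=(n_1+m_1,\ldots,n_W+m_W)$ and treat $\btheta$ as ranging over the open probability simplex $\Delta=\{\btheta:\theta_i>0,\ \sum_i\theta_i=1\}$; this is the form actually needed for the Lehmann--Scheff\'e argument behind the MVUE claim. Recall that under the uniform-edge-sampling approximation we have the independent multinomials $\mathbf n\sim\mathrm{Multinomial}(N,\btheta)$ and $\mathbf m\sim\mathrm{Multinomial}(M,\bpi)$ with $\pi_i=i\theta_i/\bar\mu$ and $\bar\mu=\sum_j j\theta_j$ the (known) average degree.

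The key step is to compute the law of $T$ and show that it collapses to an exponential-family form. Writing $\pi_i^{m_i}=\bar\mu^{-m_i}i^{m_i}\theta_i^{m_i}$ and multiplying the two multinomial masses, the joint mass of $(\mathbf n,\mathbf m)$ factors as a $\btheta$-free combinatorial weight times $\bar\mu^{-M}\prod_i\theta_i^{n_i+m_i}$. Summing over all decompositions with $\mathbf n+\mathbf m=t$ gives
\begin{equation*}
P_\btheta(T=t)=\Big(\textstyle\sum_j j\theta_j\Big)^{-M}H(t)\prod_i\theta_i^{t_i},\qquad \textstyle\sum_i t_i=B,
\end{equation*}
where $H(t)\ge 0$ does not depend on $\btheta$ and $H(t)>0$ precisely on the support of $T$. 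The essential point, and the reason the hypothesis that $\bar\mu$ is known is used, is that $\pi_i\propto\theta_i$ forces the entire $\btheta$-dependence into the single power product $\prod_i\theta_i^{t_i}$ together with the common scalar $(\sum_j j\theta_j)^{-M}$.

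Completeness then follows from a homogeneity argument. Suppose $E_\btheta[g(T)]=0$ for all $\btheta\in\Delta$. Since the common factor $(\sum_j j\theta_j)^{-M}$ is nonzero on $\Delta$, it cancels and we are left with $Q(\btheta):=\sum_t g(t)H(t)\prod_i\theta_i^{t_i}=0$ on $\Delta$. Every monomial in $Q$ has total degree $\sum_i t_i=B$, so $Q$ is homogeneous of degree $B$; for any $\btheta$ in the positive orthant, rescaling by $1/\sum_i\theta_i$ lands in $\Delta$, whence $Q(\btheta)=(\sum_i\theta_i)^B Q(\btheta/\sum_i\theta_i)=0$. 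Thus $Q$ vanishes on an open set and is the zero polynomial; by linear independence of the distinct monomials $\prod_i\theta_i^{t_i}$ we get $g(t)H(t)=0$ for every $t$, hence $g(t)=0$ wherever $H(t)>0$, i.e.\ $P_\btheta(g(T)=0)=1$. This is exactly the completeness condition, and combined with Lemma~\ref{sufficiency} and the unbiasedness of $\bar\theta_i$ (Theorem~\ref{thm:unbiasedness}), Lehmann--Scheff\'e yields the MVUE claim.

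I expect the main obstacle to be the second step rather than the last: verifying that summing two \emph{different} multinomials over $\mathbf n+\mathbf m=t$ really does leave the clean form $H(t)\prod_i\theta_i^{t_i}$, and isolating where the ``known average degree'' assumption is indispensable. If $\bar\mu$ had to be estimated by $\hat d$, the proportionality $\pi_i\propto\theta_i$ would break, the $\btheta$-dependence would no longer collapse to a single homogeneous power product, and both this completeness proof and the MVUE conclusion would fail --- which is consistent with the paper only claiming MVUE in the known-average-degree case.
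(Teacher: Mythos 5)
Your route is genuinely different from the paper's, and the central computation is attractive: collapsing the product of the two multinomials into $\bar{\mu}^{-M}H(t)\prod_i\theta_i^{t_i}$ and reading off completeness from linear independence of monomials. The paper instead stays with the scalar pair $(n_i,m_i)$, treats them as independent binomials with success probabilities $\theta_i$ and $i\theta_i/\bar{\mu}$, expands $E[g(n_i+m_i)]$ as a polynomial in the single variable $\theta_i$, and argues that the coefficient of $\theta_i^{h}$ for the smallest $h$ with $g(h)\neq 0$ equals $g(h)$ times a strictly positive constant and hence cannot vanish.

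There is, however, a concrete gap, and it sits exactly where you place the emphasis: the choice of parameter family. The lemma exists to feed Lehmann--Scheff\'e in Theorem~\ref{thm:mvue2}, and that theorem only makes sense in the model where $\bar{\mu}=\sum_j j\theta_j$ is a \emph{known constant} (otherwise $\bar{\theta}_i$ is not a statistic). The relevant parameter set is therefore the slice $\Delta_{\bar{\mu}}=\{\btheta:\theta_j>0,\ \sum_j\theta_j=1,\ \sum_j j\theta_j=\bar{\mu}\}$, not the full open simplex. Completeness over the full simplex is strictly \emph{weaker} than completeness over $\Delta_{\bar{\mu}}$ --- the smaller family supplies fewer identities $E_\btheta[g(T)]=0$ --- so your result does not imply the one needed. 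Moreover, your key step fails on the slice: the cone over $\Delta_{\bar{\mu}}$ lies inside the hyperplane $\sum_j j\theta_j=\bar{\mu}\sum_j\theta_j$, which is not open in $\mathbb{R}^W$, and the degree-$B$ monomials are linearly \emph{dependent} as functions on that cone (any homogeneous degree-$B$ multiple of the linear form $\sum_j j\theta_j-\bar{\mu}\sum_j\theta_j$ vanishes there without having zero coefficients). Hence $Q=0$ on $\Delta_{\bar{\mu}}$ does not yield $g(t)H(t)=0$. The paper's univariate argument survives precisely because the marginal law of $(n_i,m_i)$ depends on the constrained parameter only through $\theta_i$, which still sweeps an interval as $\btheta$ moves over the slice; a one-variable polynomial vanishing on an interval has all coefficients zero, and the lowest-order coefficient isolates $g$ at its first nonzero argument. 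To repair your proof you would either have to retreat to the marginal family as the paper does, or show directly that $\sum_t g(t)H(t)\prod_j\theta_j^{t_j}$ cannot be a nonzero multiple of $\sum_j j\theta_j-\bar{\mu}\sum_j\theta_j$, which is not immediate.
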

\begin{proof}
\begin{eqnarray}
E[g(n_{i}+m_{i})] & = & 0\nonumber \\
\sum_{n_{i},m_{i}}g(n_{i}+m_{i})P_{\theta}(n_{i},m_{i}) & = & 0\nonumber \\
\sum_{n_{i},m_{i}}g(n_{i}+m_{i})A(n_{i})B(m_{i}) & = & 0\label{eq:completeness}
\end{eqnarray}

The LHS of (\ref{eq:completeness}) is a polynomial of degree $M+N$
on $\theta_{i}$. Hence, it can be written as
\begin{equation}
C_{0}+C_{1}\theta_{i}+C_{2}\theta_{i}^{2}+\ldots+C_{N+M}\theta_{i}^{N+M}=0.\label{eq:condition}
\end{equation}

We prove that $P_{\theta}(g(s(X))=0)=1\ \textrm{for all }\theta$ by contradiction.
Suppose that there is a $\theta$ such that $P_{\theta}(g(s(X)) \neq 0)>0$.
In order to have $E(g(s(X)))=0$, there must be terms for which $g(.)$ is
strictly positive and  terms for which $g(.)$ is
strictly negative. Let $g(h_1)$ be the smallest $h_1$ such that $g(h_1) > 0$.
Let $g(h_2)$ be the smallest $h_2$ such that $g(h_2) < 0$.
Let $h = \min(h_1,h_2)$.

Expanding $A(n_{i})B(m_{i})$ in eq.~\eqref{eq:completeness} we note that the degree of the resulting polynomial is at least $n_{i}+m_{i}$ on $\theta_{i}$.
Therefore, the coefficient $C_h$ in eq.~\eqref{eq:condition} associated with $\theta_{i}^h$ cannot have terms of $g(.)$ larger than $h$.
Then $C_h$ can only be zero if $h_1 = h_2$ which is a contradiction.
\end{proof}

\begin{thm}\label{thm:mvue2}
 The unbiased estimator $\bar \theta_i$ is the minimum variance unbiased estimator (MVUE) of $\theta_i$.
\end{thm}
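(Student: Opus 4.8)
The plan is to invoke the Lehmann--Scheffé theorem, which states that any unbiased estimator that is a function of a complete sufficient statistic is the (essentially unique) minimum variance unbiased estimator. All three ingredients this theorem requires are already available from the preceding results, so the proof reduces to assembling them.

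First I would observe that $\bar\theta_i = (n_i + m_i)/(N + Mi/\bar\mu)$ is, by construction, a function of the statistic $n_i + m_i$ alone. Since $N$, $M$, and $i$ are fixed and $\bar\mu = \sum_j j\theta_j$ is assumed known (this is the ``average degree is given'' hypothesis), the denominator $N + Mi/\bar\mu$ is a constant, and hence $\bar\theta_i$ is merely a scalar multiple of $n_i + m_i$. Next, Theorem~\ref{thm:unbiasedness} guarantees that this estimator is unbiased for $\theta_i$. Finally, Lemmas~\ref{sufficiency} and~\ref{completeness} establish respectively that $n_i + m_i$ is a sufficient and a complete statistic with respect to the likelihood of $\theta_i$.

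With these three facts in place, the Lehmann--Scheffé theorem~\cite{lehmann1991theory} applies directly: $\bar\theta_i$ is an unbiased function of a complete sufficient statistic, and is therefore the unique (up to almost-sure equality) MVUE of $\theta_i$. I do not anticipate any serious obstacle, as the theorem does the real work; the only points deserving care are verifying that $\bar\theta_i$ depends on the data solely through $n_i + m_i$ (which hinges precisely on $\bar\mu$ being a known constant rather than an estimated quantity), and confirming that the binomial sampling model underlying the factors $A(n_i)$ and $B(m_i)$ satisfies the regularity assumptions under which completeness and the Lehmann--Scheffé conclusion are stated.
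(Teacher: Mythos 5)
Your proof is correct and follows essentially the same route as the paper: cite Lemmas~\ref{sufficiency} and~\ref{completeness} to get that $n_i+m_i$ is a complete sufficient statistic, note that $\bar\theta_i$ is an unbiased function of it, and conclude by Lehmann--Scheff\'e. Your added remark that the denominator is a known constant only because $\bar\mu$ is given is a worthwhile precision the paper leaves implicit.
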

\begin{proof}
According to the Lehmann-Scheffe Theorem~\cite{lehmann1991theory}, if $T(\mathbb{S})$ is a complete sufficient statistic, there is at most
one unbiased estimator that is a function of $T(\mathbb{S})$. From Lemmas~\ref{sufficiency} and~\ref{completeness}, we have that
$n_i+m_i$ is a complete sufficient statistic of $\theta_i$. Clearly, the unbiased estimator $\hat\btheta$ in eq.~\eqref{eq:new} is a function
 $n_i+m_i$. Therefore, $\hat\theta_i$ must be the MVUE.
\end{proof}

Alternatively, we can prove Theorem~\ref{thm:mvue2}
from Lemmas~\ref{sufficiency} and~\ref{completeness}
by showing that applying the Rao-Blackwell Theorem to the unbiased estimator $\hat{\theta}_{i}$
using the complete sufficient statistic $n_i+m_i$
 yields exactly the same estimator:
\begin{eqnarray*}
\theta_{i}^{\prime} & = & E\left[\hat{\theta}_{i}|n_i+m_i \right]\\
 & = & \sum_{t_{j}} t_{j}P(\hat{\theta}_{i}=t_{j}|n_{i}+m_{i})\\
 & = & \sum_{t_{j}} t_{j}1\left\{ \frac{n_{i}+m_{i}}{N+Mi/\bar{\mu}}=t_{j}\right\} \\
 & = & \frac{n_{i}+m_{i}}{N+Mi/\bar{\mu}}.
\end{eqnarray*}

\appendixhead{MURAI}


\bibliographystyle{ACM-Reference-Format-Journals}
\bibliography{frontier-tkdd}


\begin{thebibliography}{00}


\ifx \showCODEN    \undefined \def \showCODEN     #1{\unskip}     \fi
\ifx \showDOI      \undefined \def \showDOI       #1{{\tt DOI:}\penalty0{#1}\ }
  \fi
\ifx \showISBNx    \undefined \def \showISBNx     #1{\unskip}     \fi
\ifx \showISBNxiii \undefined \def \showISBNxiii  #1{\unskip}     \fi
\ifx \showISSN     \undefined \def \showISSN      #1{\unskip}     \fi
\ifx \showLCCN     \undefined \def \showLCCN      #1{\unskip}     \fi
\ifx \shownote     \undefined \def \shownote      #1{#1}          \fi
\ifx \showarticletitle \undefined \def \showarticletitle #1{#1}   \fi
\ifx \showURL      \undefined \def \showURL       #1{#1}          \fi

\bibitem[\protect\citeauthoryear{Achlioptas, Clauset, Kempe, and
  Moore}{Achlioptas et~al\mbox{.}}{2009}]%
        {Achlioptas2005}
{Dimitris Achlioptas}, {Aaron Clauset}, {David Kempe}, {and} {Cristopher
  Moore}. 2009.
\newblock \showarticletitle{On the Bias of Traceroute Sampling: Or, Power-law
  Degree Distributions in Regular Graphs}.
\newblock {\em J. ACM\/} {56}, 4, Article 21 (July 2009), 28 pages.
\newblock
\showISSN{0004-5411}
\showDOI{%
\url{http://dx.doi.org/10.1145/1538902.1538905}}


\bibitem[\protect\citeauthoryear{Avrachenkov, Ribeiro, and Towsley}{Avrachenkov
  et~al\mbox{.}}{2010}]%
        {AvrachenkovRestarts}
{Konstantin Avrachenkov}, {Bruno Ribeiro}, {and} {Don Towsley}. 2010.
\newblock {\em Improving Random Walk Estimation Accuracy with Uniform
  Restarts}.
\newblock Springer Berlin Heidelberg, Berlin, Heidelberg, 98--109.
\newblock
\showISBNx{978-3-642-18009-5}
\showDOI{%
\url{http://dx.doi.org/10.1007/978-3-642-18009-5_10}}


\bibitem[\protect\citeauthoryear{Bar-Yossef and Gurevich}{Bar-Yossef and
  Gurevich}{2008}]%
        {Yossef2008}
{Ziv Bar-Yossef} {and} {Maxim Gurevich}. 2008.
\newblock \showarticletitle{Random sampling from a search engine's index}.
\newblock {\it J. ACM} {55}, 5 (2008), 1--74.
\newblock


\bibitem[\protect\citeauthoryear{Boccaletti, Latora, Moreno, Chavez, and
  Hwang}{Boccaletti et~al\mbox{.}}{2006}]%
        {SurveyMeasuresGraphs}
{S. Boccaletti}, {V. Latora}, {Y. Moreno}, {M. Chavez}, {and} {D.-U. Hwang}.
  2006.
\newblock \showarticletitle{Complex networks: Structure and dynamics}.
\newblock {\em Physics Reports\/} {424}, 4-5 (2006), 175--308.
\newblock
\showISSN{0370-1573}
\showDOI{%
\url{http://dx.doi.org/10.1016/j.physrep.2005.10.009}}


\bibitem[\protect\citeauthoryear{Chiericetti, Dasgupta, Kumar, Lattanzi, and
  Sarl\'{o}s}{Chiericetti et~al\mbox{.}}{2016}]%
        {ChierichettiWWW2016}
{Flavio Chiericetti}, {Anirban Dasgupta}, {Ravi Kumar}, {Silvio Lattanzi},
  {and} {Tam\'{a}s Sarl\'{o}s}. 2016.
\newblock \showarticletitle{On Sampling Nodes in a Network}. In {\em
  Proceedings of the 25th International Conference on World Wide Web} {\em (WWW
  '16)}. International World Wide Web Conferences Steering Committee, Republic
  and Canton of Geneva, Switzerland, 471--481.
\newblock
\showISBNx{978-1-4503-4143-1}
\showDOI{%
\url{http://dx.doi.org/10.1145/2872427.2883045}}


\bibitem[\protect\citeauthoryear{Dasgupta, Kumar, and Sivakumar}{Dasgupta
  et~al\mbox{.}}{2012}]%
        {Dasgupta2012}
{Anirban Dasgupta}, {Ravi Kumar}, {and} {D. Sivakumar}. 2012.
\newblock \showarticletitle{Social Sampling}. In {\em Proceedings of the 18th
  ACM SIGKDD International Conference on Knowledge Discovery and Data Mining}
  {\em (KDD '12)}. ACM, New York, NY, USA, 235--243.
\newblock
\showISBNx{978-1-4503-1462-6}
\showDOI{%
\url{http://dx.doi.org/10.1145/2339530.2339572}}


\bibitem[\protect\citeauthoryear{Eagle, Pentland, and Lazer}{Eagle
  et~al\mbox{.}}{2009}]%
        {MobileFriends}
{Nathan Eagle}, {Alex~(Sandy) Pentland}, {and} {David Lazer}. 2009.
\newblock \showarticletitle{Inferring friendship network structure by using
  mobile phone data}.
\newblock {\em Proceedings of the National Academy of Sciences\/} {106}, 36
  (2009), 15274--15278.
\newblock
\showDOI{%
\url{http://dx.doi.org/10.1073/pnas.0900282106}}


\bibitem[\protect\citeauthoryear{Gjoka, Butts, Kurant, and Markopoulou}{Gjoka
  et~al\mbox{.}}{2010}]%
        {Gjoka2010}
{Minas Gjoka}, {Carter~T. Butts}, {Maciej Kurant}, {and} {Athina Markopoulou}.
  2010.
\newblock \showarticletitle{Walking in Facebook: A Case Study of Unbiased
  Sampling of OSNs}. In {\em Proceedings of IEEE INFOCOM 2010}. 1--9.
\newblock
\showISSN{0743-166X}
\showDOI{%
\url{http://dx.doi.org/10.1109/INFCOM.2010.5462078}}


\bibitem[\protect\citeauthoryear{Heckathorn}{Heckathorn}{1997}]%
        {Heckathorn97}
{Douglas~D. Heckathorn}. 1997.
\newblock \showarticletitle{Respondent-Driven Sampling: A New Approach to the
  Study of Hidden Populations}.
\newblock {\em Social Problems\/} {44}, 2 (1997), 174--199.
\newblock
\showISSN{0037-7791}
\showDOI{%
\url{http://dx.doi.org/10.2307/3096941}}


\bibitem[\protect\citeauthoryear{Heckathorn}{Heckathorn}{2002}]%
        {Heckathorn2002}
{Douglas~D. Heckathorn}. 2002.
\newblock \showarticletitle{Respondent-Driven Sampling II: Deriving Valid
  Population Estimates from Chain-Referral Samples of Hidden Populations}.
\newblock {\em Social Problems\/} {49}, 1 (2002), 11--34.
\newblock
\showISSN{0037-7791}
\showDOI{%
\url{http://dx.doi.org/10.1525/sp.2002.49.1.11}}


\bibitem[\protect\citeauthoryear{Henzinger, Heydon, Mitzenmacher, and
  Najork}{Henzinger et~al\mbox{.}}{2000}]%
        {Henzinger2000}
{Monika~R. Henzinger}, {Allan Heydon}, {Michael Mitzenmacher}, {and} {Marc
  Najork}. 2000.
\newblock \showarticletitle{On near-uniform URL sampling}.
\newblock {\em Computer Networks\/} {33}, 1-6 (2000), 295 -- 308.
\newblock
\showISSN{1389-1286}
\showDOI{%
\url{http://dx.doi.org/10.1016/S1389-1286(00)00055-4}}


\bibitem[\protect\citeauthoryear{Hubler, Kriegel, Borgwardt, and
  Ghahramani}{Hubler et~al\mbox{.}}{2008}]%
        {ICDM_Hubler08}
{Christian Hubler}, {H-P Kriegel}, {Karsten Borgwardt}, {and} {Zoubin
  Ghahramani}. 2008.
\newblock \showarticletitle{Metropolis Algorithms for Representative Subgraph
  Sampling}. In {\em 2008 Eighth IEEE International Conference on Data Mining}.
  283--292.
\newblock
\showISSN{1550-4786}
\showDOI{%
\url{http://dx.doi.org/10.1109/ICDM.2008.124}}


\bibitem[\protect\citeauthoryear{Kurant, Gjoka, Butts, and Markopoulou}{Kurant
  et~al\mbox{.}}{2011a}]%
        {SIGM_Kurant11}
{Maciej Kurant}, {Minas Gjoka}, {Carter~T. Butts}, {and} {Athina Markopoulou}.
  2011a.
\newblock \showarticletitle{Walking on a Graph with a Magnifying Glass:
  Stratified Sampling via Weighted Random Walks}. In {\em ACM SIGMETRICS 2011}.
  ACM, New York, NY, USA, 281--292.
\newblock
\showISBNx{978-1-4503-0814-4}
\showDOI{%
\url{http://dx.doi.org/10.1145/1993744.1993773}}


\bibitem[\protect\citeauthoryear{Kurant, Markopoulou, and Thiran}{Kurant
  et~al\mbox{.}}{2011b}]%
        {KurantJSAC2011}
{Maciej Kurant}, {Athina Markopoulou}, {and} {Patrick Thiran}. 2011b.
\newblock \showarticletitle{Towards Unbiased BFS Sampling}.
\newblock {\em IEEE Journal on Selected Areas in Communications\/} {29}, 9
  (September 2011), 1799--1809.
\newblock


\bibitem[\protect\citeauthoryear{Lehmann, Casella, and Casella}{Lehmann
  et~al\mbox{.}}{1991}]%
        {lehmann1991theory}
{Erich~Leo Lehmann}, {George Casella}, {and} {George Casella}. 1991.
\newblock {\em Theory of point estimation}.
\newblock Wadsworth \& Brooks/Cole Advanced Books \& Software.
\newblock


\bibitem[\protect\citeauthoryear{Leskovec and Faloutsos}{Leskovec and
  Faloutsos}{2006}]%
        {Leskovec2006}
{Jure Leskovec} {and} {Christos Faloutsos}. 2006.
\newblock \showarticletitle{Sampling from Large Graphs}. In {\em Proceedings of
  the 12th ACM SIGKDD International Conference on Knowledge Discovery and Data
  Mining} {\em (KDD '06)}. ACM, New York, NY, USA, 631--636.
\newblock
\showISBNx{1-59593-339-5}
\showDOI{%
\url{http://dx.doi.org/10.1145/1150402.1150479}}


\bibitem[\protect\citeauthoryear{Leskovec and Krevl}{Leskovec and
  Krevl}{2014}]%
        {SNAP}
{Jure Leskovec} {and} {Andrej Krevl}. 2014.
\newblock {SNAP Datasets}: {Stanford} Large Network Dataset Collection.
\newblock \url{http://snap.stanford.edu/data}.   (June 2014).
\newblock


\bibitem[\protect\citeauthoryear{Leskovec, Lang, Dasgupta, and
  Mahoney}{Leskovec et~al\mbox{.}}{2008}]%
        {LeskovecCommunity}
{Jure Leskovec}, {Kevin~J. Lang}, {Anirban Dasgupta}, {and} {Michael~W.
  Mahoney}. 2008.
\newblock \showarticletitle{Statistical Properties of Community Structure in
  Large Social and Information Networks}. In {\em Proceedings of the 17th
  International Conference on World Wide Web} {\em (WWW '08)}. ACM, New York,
  NY, USA, 695--704.
\newblock
\showISBNx{978-1-60558-085-2}
\showDOI{%
\url{http://dx.doi.org/10.1145/1367497.1367591}}


\bibitem[\protect\citeauthoryear{Ma, Huang, and Schneider}{Ma
  et~al\mbox{.}}{2015}]%
        {Ma:2015ut}
{Yifei Ma}, {Tzu-Kuo Huang}, {and} {Jeff~G Schneider}. 2015.
\newblock \showarticletitle{{Active Search and Bandits on Graphs using
  Sigma-Optimality.}}. In {\em Conference on Uncertainty in Artificial
  Intelligence}. 542--551.
\newblock


\bibitem[\protect\citeauthoryear{Maiya and Berger-Wolf}{Maiya and
  Berger-Wolf}{2011}]%
        {KDD_Maiya11}
{Arun~S. Maiya} {and} {Tanya~Y. Berger-Wolf}. 2011.
\newblock \showarticletitle{Benefits of Bias: Towards Better Characterization
  of Network Sampling}. In {\em Proceedings of the 17th ACM SIGKDD
  International Conference on Knowledge Discovery and Data Mining} {\em (KDD
  '11)}. ACM, New York, NY, USA, 105--113.
\newblock
\showISBNx{978-1-4503-0813-7}
\showDOI{%
\url{http://dx.doi.org/10.1145/2020408.2020431}}


\bibitem[\protect\citeauthoryear{McGregor}{McGregor}{2014}]%
        {mcgregor2014graph}
{Andrew McGregor}. 2014.
\newblock \showarticletitle{Graph stream algorithms: a survey}.
\newblock {\em ACM SIGMOD Record\/} {43}, 1 (2014), 9--20.
\newblock


\bibitem[\protect\citeauthoryear{Mislove, Marcon, Gummadi, Druschel, and
  Bhattacharjee}{Mislove et~al\mbox{.}}{2007}]%
        {Mislove}
{Alan Mislove}, {Massimiliano Marcon}, {Krishna~P. Gummadi}, {Peter Druschel},
  {and} {Bobby Bhattacharjee}. 2007.
\newblock \showarticletitle{Measurement and Analysis of Online Social
  Networks}. In {\em Proceedings of the 7th ACM SIGCOMM Conference on Internet
  Measurement} {\em (IMC '07)}. ACM, New York, NY, USA, 29--42.
\newblock
\showISBNx{978-1-59593-908-1}
\showDOI{%
\url{http://dx.doi.org/10.1145/1298306.1298311}}


\bibitem[\protect\citeauthoryear{Murai, Ribeiro, Towsley, and Wang}{Murai
  et~al\mbox{.}}{2013}]%
        {jsac13}
{Fabricio Murai}, {Bruno Ribeiro}, {Don Towsley}, {and} {Pinghui Wang}. 2013.
\newblock \showarticletitle{On Set Size Distribution Estimation and the
  Characterization of Large Networks via Sampling}.
\newblock {\em IEEE Journal on Selected Areas in Communications\/} {31}, 6
  (June 2013), 1017--1025.
\newblock
\showISSN{0733-8716}
\showDOI{%
\url{http://dx.doi.org/10.1109/JSAC.2013.130604}}


\bibitem[\protect\citeauthoryear{Murai, Ribeiro, Towsley, and Wang}{Murai
  et~al\mbox{.}}{2018}]%
        {Murai18}
{Fabricio Murai}, {Bruno Ribeiro}, {Don Towsley}, {and} {Pinghui Wang}. 2018.
\newblock {\em Characterizing Directed and Undirected Networks via
  Multidimensional Walks with Jumps}.
\newblock {T}echnical {R}eport arXiv:1703.08252.
\newblock


\bibitem[\protect\citeauthoryear{Rasti, Torkjazi, Rejaie, Duffield, Willinger,
  and Stutzbach}{Rasti et~al\mbox{.}}{2009}]%
        {WillingerRDS}
{Amir~H. Rasti}, {Mojtaba Torkjazi}, {Reza Rejaie}, {Nick Duffield}, {Walter
  Willinger}, {and} {Daniel Stutzbach}. 2009.
\newblock \showarticletitle{Respondent-Driven Sampling for Characterizing
  Unstructured Overlays}. In {\em Proceedings of the IEEE INFOCOM 2009}.
  2701--2705.
\newblock
\showISSN{0743-166X}
\showDOI{%
\url{http://dx.doi.org/10.1109/INFCOM.2009.5062215}}


\bibitem[\protect\citeauthoryear{Ribeiro, Gauvin, Liu, and Towsley}{Ribeiro
  et~al\mbox{.}}{2010}]%
        {MySpace}
{Bruno Ribeiro}, {William Gauvin}, {Benyuan Liu}, {and} {Don Towsley}. 2010.
\newblock \showarticletitle{On MySpace Account Spans and Double Pareto-Like
  Distribution of Friends}. In {\em INFOCOM IEEE Conference on Computer
  Communications Workshops , 2010}. 1--6.
\newblock
\showDOI{%
\url{http://dx.doi.org/10.1109/INFCOMW.2010.5466698}}


\bibitem[\protect\citeauthoryear{Ribeiro and Towsley}{Ribeiro and
  Towsley}{2010}]%
        {TechReport}
{Bruno Ribeiro} {and} {Don Towsley}. 2010.
\newblock \showarticletitle{Estimating and Sampling Graphs with
  Multidimensional Random Walks}. In {\em Proceedings of the 10th ACM SIGCOMM
  Conference on Internet Measurement} {\em (IMC '10)}. ACM, New York, NY, USA,
  390--403.
\newblock
\showISBNx{978-1-4503-0483-2}
\showDOI{%
\url{http://dx.doi.org/10.1145/1879141.1879192}}


\bibitem[\protect\citeauthoryear{Ribeiro and Towsley}{Ribeiro and
  Towsley}{2012}]%
        {RibeiroCDC12}
{Bruno Ribeiro} {and} {Don Towsley}. 2012.
\newblock \showarticletitle{On the estimation accuracy of degree distributions
  from graph sampling}. In {\em 51st IEEE Conference on Decision and Control
  (CDC 2012)}. 5240--5247.
\newblock
\showISSN{0191-2216}
\showDOI{%
\url{http://dx.doi.org/10.1109/CDC.2012.6425857}}


\bibitem[\protect\citeauthoryear{Ribeiro, Wang, Murai, and Towsley}{Ribeiro
  et~al\mbox{.}}{2012}]%
        {RibeiroINFOCOM2012}
{Bruno Ribeiro}, {Pinghui Wang}, {Fabricio Murai}, {and} {Don Towsley}. 2012.
\newblock \showarticletitle{Sampling directed graphs with random walks}. In
  {\em Proceedings of IEEE INFOCOM 2012}. 1692--1700.
\newblock
\showISSN{0743-166X}
\showDOI{%
\url{http://dx.doi.org/10.1109/INFCOM.2012.6195540}}


\bibitem[\protect\citeauthoryear{Salehi and Rabiee}{Salehi and Rabiee}{2013}]%
        {DNM}
{M. Salehi} {and} {H.~R. Rabiee}. 2013.
\newblock \showarticletitle{A Measurement Framework for Directed Networks}.
\newblock {\em IEEE Journal on Selected Areas in Communications\/} {31}, 6
  (June 2013), 1007--1016.
\newblock
\showISSN{0733-8716}
\showDOI{%
\url{http://dx.doi.org/10.1109/JSAC.2013.130603}}


\bibitem[\protect\citeauthoryear{Salganik and Heckathorn}{Salganik and
  Heckathorn}{2004}]%
        {Salganik2004}
{Matthew~J. Salganik} {and} {Douglas~D. Heckathorn}. 2004.
\newblock \showarticletitle{Sampling and estimation in hidden populations using
  respondent-driven sampling}.
\newblock {\em Sociological Methodology\/}  {34} (2004), 193--239.
\newblock


\bibitem[\protect\citeauthoryear{Stutzbach, Rejaie, Duffield, Sen, and
  Willinger}{Stutzbach et~al\mbox{.}}{2009}]%
        {Stutzbach2009}
{Daniel Stutzbach}, {Rea Rejaie}, {Nick Duffield}, {Subhabrata Sen}, {and}
  {Walter Willinger}. 2009.
\newblock \showarticletitle{On unbiased sampling for unstructured peer-to-peer
  networks}.
\newblock {\em IEEE/ACM Transactions on Networking\/} {17}, 2 (April 2009),
  377--390.
\newblock


\bibitem[\protect\citeauthoryear{Volz and Heckathorn}{Volz and
  Heckathorn}{2008}]%
        {RDSprob}
{Erik Volz} {and} {Douglas~D. Heckathorn}. 2008.
\newblock \showarticletitle{Probability Based Estimation Theory for Respondent
  Driven Sampling}.
\newblock {\em Journal of Official Statistics\/} {24}, 1 (03 2008), 79.
\newblock
\showISBNx{0282423X}


\bibitem[\protect\citeauthoryear{Zhou, Zhang, Gong, and Das}{Zhou
  et~al\mbox{.}}{2016}]%
        {GautamICDE2013}
{Zhuojie Zhou}, {Nan Zhang}, {Zhiguo Gong}, {and} {Gautam Das}. 2016.
\newblock \showarticletitle{Faster Random Walks by Rewiring Online Social
  Networks On-the-Fly}.
\newblock {\em ACM Trans. Database Syst.\/} {40}, 4, Article 26 (Jan. 2016), 36
  pages.
\newblock
\showISSN{0362-5915}
\showDOI{%
\url{http://dx.doi.org/10.1145/2847526}}


\end{thebibliography}



\medskip

%
%
%
%

\end{document}